\documentclass[twoside,leqno]{article}
\usepackage[letterpaper]{geometry}
\usepackage{ltexpprt}
\usepackage{hyperref}

\usepackage[utf8]{inputenc}
\usepackage{amsmath}
\usepackage[USenglish]{babel}
\usepackage{amssymb}
\usepackage{mathtools} 
\usepackage[textsize=footnotesize,textwidth=.8in,disable]{todonotes}
\usepackage[ruled,noend, algo2e]{algorithm2e}
\usepackage[noend]{algorithmic} 

\usepackage[shortlabels]{enumitem}
\usepackage{comment}
\usepackage{tabularx}
\usepackage{makecell}

\setlength{\extrarowheight}{10pt}

\newtheorem{observation}{Observation}
\newtheorem{definition}{Definition}

\newtheorem{invariant}{Invariant}
\setcounter{invariant}{-1}
\newtheorem{question}{Question}
\newcommand{\eps}{\varepsilon}
\newcommand{\mx}{\max}
\newcommand{\mout}{\Delta(\overrightarrow{G})}
\newcommand{\polylog}{\operatorname{polylog}}
\newcommand{\poly}{\operatorname{poly}}
\newcommand{\BigOh}{\mathcal{O}}

\DeclareMathOperator*{\argmin}{arg\,min}
\DeclarePairedDelimiter{\paren}{(}{)}

\DeclarePairedDelimiter{\ceil}{\lceil}{\rceil}
\DeclarePairedDelimiter{\floor}{\lfloor}{\rfloor}
\DeclarePairedDelimiter{\abs}{\lvert}{\rvert}
\DeclarePairedDelimiter{\set}{\lbrace}{\rbrace}

\newcommand{\suchthat}{\mathrel{}\mathclose{}\ifnum\currentgrouptype=16\middle\fi\vert\mathopen{}\mathrel{}}

\newcommand{\orc}{\includegraphics[height=\fontcharht\font`A]{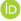}}

\setcounter{corollary}{-1}

\newcommand\relatedversion{}
\renewcommand\relatedversion{\thanks{The full version}}


\title{\Large Adaptive Out-Orientations with Applications\relatedversion}
\author{ Chandra Chekuri \href{https://orcid.org/0000-0003-3035-1699}{\orc} \thanks{Dept. of Computer Science, University of Illinois, Urbana, IL 61801. {\tt chekuri@illinois.edu}}
\and Aleksander Bjørn Christiansen\thanks{Technical University of Denmark, Kongens Lyngby, Denmark {\tt abgch@dtu.dk, vanderhoog@gmail.com, erot@dtu.dk}}
\and Jacob Holm \href{https://orcid.org/0000-0001-6997-9251}{\orc}\thanks{University of Copenhagen, Denmark {\tt jaho@di.ku.dk}}
\and Ivor van der Hoog \href{https://orcid.org/0009-0006-2624-0231}{\orc} $^{\ddagger}$
\and Kent Quanrud\thanks{Dept. of Computer Science, Purdue University, West Lafayette, IN. {\tt krq@purdue.edu}}
\and Eva Rotenberg \href{https://orcid.org/0000-0001-5853-7909}{\orc} $^{\ddagger}$
\and Chris Schwiegelshohn\thanks{Aarhus University, Aarhus, Denmark. {\tt cschwiegelshohn@gmail.com}}
}

\date{October 2023}

\begin{document}

\maketitle


\begin{abstract}
We give improved algorithms for maintaining edge-orientations of a fully-dynamic graph, such that the maximum out-degree is bounded. On one hand, we show how to orient the edges such that maximum out-degree is proportional to the arboricity $\alpha$ of the graph, in, either, an amortised update time of $\BigOh(\log^2 n \log \alpha)$, or a worst-case update time of $\BigOh(\log^3 n \log \alpha)$. 
On the other hand, 
motivated by applications including dynamic maximal matching, 
we obtain a different trade-off. Namely, the improved update time of either $\BigOh(\log n \log \alpha)$, amortised, or $\BigOh(\log ^2 n \log \alpha)$, worst-case, for the problem of maintaining an edge-orientation with at most $\BigOh(\alpha + \log n)$ out-edges per vertex.
Finally, all of our algorithms naturally limit the recourse to be polylogarithmic in $n$ and $\alpha$. 
Our algorithms adapt to the current arboricity of the graph, and yield improvements over previous work: 

Firstly, we obtain deterministic algorithms for maintaining a $(1+\varepsilon)$ approximation of the maximum subgraph density, $\rho$, of the dynamic graph. Our algorithms have update times of  $\BigOh(\varepsilon^{-6}\log^3 n \log \rho)$ worst-case, and $\BigOh(\varepsilon^{-4}\log^2 n \log \rho)$ amortised, respectively.  
We may output a subgraph $H$ of the input graph where its density is a $(1+\varepsilon)$ approximation of the maximum subgraph density in time linear in the size of the subgraph. 
These algorithms have improved update time compared to the $\BigOh(\varepsilon^{-6}\log ^4 n)$ algorithm by Sawlani and Wang from STOC 2020.

Secondly, we obtain an $\BigOh(\varepsilon^{-6}\log^3 n \log \alpha)$  worst-case update time algorithm for maintaining a $(1~+~\varepsilon)\textnormal{OPT} + 2$ approximation of the optimal out-orientation of a graph with adaptive arboricity $\alpha$, improving the $\BigOh(\varepsilon^{-6}\alpha^2 \log^3 n)$ algorithm by Christiansen and Rotenberg from ICALP 2022. 
This yields the first worst-case polylogarithmic dynamic algorithm for decomposing into $\BigOh(\alpha)$ forests.

Thirdly, we obtain arboricity-adaptive fully-dynamic deterministic algorithms for a variety of problems including maximal matching, $\Delta+1$ colouring, and matrix vector multiplication. All update times are worst-case $\BigOh(\alpha+\log^2n \log \alpha)$, where $\alpha$ is the current arboricity of the graph. 
For the maximal matching problem, the state-of-the-art deterministic algorithms by Kopelowitz, Krauthgamer, Porat, and Solomon from ICALP 2014 runs in time $\BigOh(\alpha^2 + \log^2 n)$, and by Neiman and Solomon from STOC 2013 runs in time $\BigOh(\sqrt{m})$. We give improved running times whenever the arboricity $\alpha \in \omega( \log n\sqrt{\log\log n})$.
\end{abstract}

\paragraph{Acknowledgements.}
This research was supported by Independent Research Fund Denmark grant 2020-2023 (9131-00044B) ``Dynamic Network Analysis'' and the VILLUM Foundation grant (VIL37507) ``Efficient Recomputations for Changeful Problems''. 
This project has additionally received funding from the European Union's Horizon 2020 research and innovation programme under the Marie Sk\l{}odowska-Curie grant agreement No 899987. 
Chris Schwiegelshohn is partially 
supported by an Independent Research Fund Denmark (DFF) Sapere Aude Research Leader grant
No 1051-00106B.
Chandra Chekuri is supported by NSF grant CCF-1910149.  Kent Quanrud is supported in part by NSF grant CCF-2129816.

\newpage

\section{Introduction}
\label{sec:intro}
In dynamic graphs, one wishes to update a data structure over a graph $G(V,E)$ (or an answer to a specified graph problem) as the graph undergoes local updates such as edge insertions and deletions. 
One of the fundamental problems is to maintain an orientation of the edges such that the maximum out-degree over all vertices is minimised.
While the problem is interesting in its own right, bounded out-degree orientations have a number of applications. First, the problem is closely related to the task of finding the densest subgraph; indeed if the edges can be fractionally oriented, the optimal maximal fractional out-degree is equal to the density $\rho := \frac{|E\cap (S\times S)|}{|S|}$ of the densest subgraph $S\subseteq V$.
Secondly, bounded out-degree orientations appear frequently as subroutines for other problems. In particular, there exist a large body of work parameterising the update time of dynamic algorithms for many fundamental problems such as shortest paths \cite{FrigioniMN03,KowalikK06}, vertex cover \cite{PelegS16,Solomon18}, graph colouring \cite{henzinger2020explicit,SolomonW20}, independent set \cite{OnakSSW20}, and, most prominently, maximum matching \cite{BernsteinS15,BernsteinS16,GSSU22,HeTZ14,NeimanS16,PelegS16,Solomon18} in terms of the arboricity $\alpha:=\max_{S\subseteq V,|S|\geq2} \ceil[\big]{\frac{|E\cap (S\times S)|}{|S|-1}}$.

In light of their widespread applicability, maintaining an edge orientation minimising the maximum outdegree is extremely well motivated. In particular, we are interested in algorithms with worst-case deterministic update times, as these can be immediately used as black-box subroutines. In a recent breakthrough result, \cite{sawlani2020near} showed that it is possible to maintain an estimate for the smallest maximum outdegree in $\polylog(n)$ worst case deterministic time by maintaining an estimate for the density of the densest subgraph. Nevertheless, all known results for maintaining an orientation require at least update time $\Omega(\rho) = \Omega(\alpha)$ worst case update time, regardless of whether the algorithm is randomised or not \cite{KopelowitzKPS13}. For dense graphs, this bound may be arbitrarily close to $n$. Thus, it raises the following question:

\begin{question}
    Is it possible to maintain an (approximate) minimum out-degree orientation in sublinear deterministic worst case update time?
\end{question}

\subsection{Our Contribution}

In this paper, we answer the aforementioned question in the affirmative. Specifically, we provide a framework for maintaining approximate out-orientations with various trade-offs between the quality of the out-degree orientation and update time. 
For the problem of maintaining an out-orientation we obtain:
\begin{enumerate}[noitemsep]
    \item An orientation with maximum out-degree $\BigOh(\alpha)$ in update time $\BigOh(\log^3 n \log \alpha)$.
    \item An orientation with maximum out-degree $(1+\varepsilon)\alpha +2$ in update time $\BigOh(\varepsilon^{-6}\log^3 n \log \alpha)$.
    \item An orientation with maximum out-degree $\BigOh(\alpha +\log n)$ in update time $\BigOh(\log ^2 n \log \alpha)$.
\end{enumerate}

\noindent The above running times are deterministic and worst-case. 
Contrary to the previous state-of-the-art result by Sawlani and Wang~\cite{sawlani2020near}, the recourse of our algorithm, i.e. the number of re-orientations of edges, is polylogarithmic in $n$ (specifically, a $\log n$ factor lower than the running time). 
\\

\noindent
When allowing \emph{amortisation}, we get even better bounds:
\begin{enumerate}[noitemsep]
\setcounter{enumi}{3}
    \item An orientation with maximum out-degree $\BigOh(\alpha)$ in amortised update time $\BigOh(\log^2 n \log \alpha)$.
    \item An orientation with maximum out-degree $\BigOh(\alpha +\log n)$ in amortised update time $\BigOh(\log n \log \alpha)$.
\end{enumerate}

Table~\ref{tab:resultssmall} gives an overview of our results, and their implications when applied to a selection of algorithmic problems.
The latter we briefly discuss in the following.

\subparagraph{Densest Subgraph}
Using the duality between out-degree orientations and maximum subgraph density, we obtain a $(1+\varepsilon)$ approximate estimate for maximum subgraph density $\rho$ in worst-case update time of  $\BigOh(\varepsilon^{-6}\log^3 n \log \rho)$. 
Additionally, we may output a subgraph $H$ with a density greater than $\rho / (1 + \eps)$ in time linear in the size of $H$ (Lemma~\ref{lemma:SDE}). 
This recovers (and moderately improves) the recent worst-case algorithm by \cite{sawlani2020near} that has an update time of $\BigOh(\varepsilon^{-6}\log^4 n)$.
When allowing amortised analysis, we improve the running time to  $\BigOh(\varepsilon^{-4}\log^2 n\log\rho)$ amortised.

\subparagraph{Arboricity Decomposition}
An arboricity decomposition partitions the edge set into a minimum number of forests. The best dynamic algorithms for maintaining an $\BigOh(\alpha)$ 
arboricity decomposition has an amortised deterministic update time of $\BigOh(\log^2 n)$ due to \cite{henzinger2020explicit} and an $\BigOh(\sqrt{m}\log n)$ worst case deterministic update time due to \cite{Banerjee}. 
Distinguishing between arboricities $1$ and $2$ requires $\Omega(\log n)$ time \cite{Patrascu10,Banerjee}.
We substantially improve the worst case update time to $\BigOh(\log^3 n \log \alpha)$.

\subparagraph{Dynamic Matrix Vector Multiplication}
In the Dynamic Matrix Vector Multiplication problem, we are given an $n\times n$ matrix $A$ and an $n$-vector $x$. Our goal is to quickly maintain $y=Ax$ in the sense that we can quickly query every entry $y_i = (Ax)_i$, subject to additive updates to the entries of $x$ and $A$. 
Interpreting $A$ as the adjacency matrix of a graph with arboricity $\alpha$, \cite{KopelowitzKPS13} presented an algorithm supporting updates to $A$ in time $\BigOh(\alpha^2+\log^2 n)$ and updates to $x$ in time $\BigOh(\alpha + \log n)$. We may update $A$ in time $\BigOh(\alpha + \log ^2 n \log \alpha)$, improving when $\alpha\in\omega(\log n\sqrt{\log\log n})$.

\subparagraph{Maximal Matching}
A matching is a set of vertex-disjoint edges. A matching $M$ is maximal if no edge of the graph can be added to it without violating the matching property. 
More so than perhaps any other problem, there exists a large gap between the performance of the state of the art deterministic algorithms vs the state of the art randomised algorithms. Using randomisation, one can achieve a $\BigOh(1)$ amortised \cite{Solomon16} and a $\polylog(n)$ worst case update time \cite{BernsteinFH21}. Deterministic algorithms so far have only achieved a $\BigOh(\sqrt{m})$ update time for arbitrary graphs \cite{NeimanS16}, or $\BigOh(\alpha^2 + \log^2 n)$ update time where $\BigOh(\alpha)$ is the current arboricity of the graph \cite{KopelowitzKPS13}. 
Because our result explicitly maintains an (approximately) optimal orientation, we improve on known deterministic algorithms whenever $\alpha \in \omega(\log n \cdot \sqrt{\log\log n})$ by achieving an update time of $\BigOh(\alpha + \log^2 n\log \alpha)$.

\subparagraph{$\Delta+1$ Colouring}
A fundamental question in many models of computation is how to efficiently compute a $\Delta+1$ colouring where $\Delta$ is the maximum degree of the graph. 
We present a deterministic algorithm that maintains a $\Delta+1$ colouring in $\BigOh(\alpha+\log^2 n\log \alpha)$ worst case update time. To the best of our knowledge, this is the first such algorithm that beats the trivial $\BigOh(\Delta)$ update time for uniformly sparse graphs. All other results~\cite{BhattacharyaGKL22,HenzingerP22,BhattacharyaCHN18,SolomonW20} (discussed in more detail in the appendix) require randomisation, amortisation, or do not yield a $\Delta+1$ colouring.

\begin{table}[ht]
\vspace{1em}
    \centering
    \begin{tabularx}{1.1\textwidth}{c|c|c|c|l}
      \makecell{\textbf{Dynamic} \\
     \textbf{Problem} 
     } & \textbf{Guarantee} & \makecell{\textbf{ Worst case }\\ 
\textcolor{green!60!black}{\textbf{Amortised}}} & \textbf{Thm.} & \textbf{State-of-the-art comparison} \\
         \hline
         \hline
     %
     %
    \makecell{out-orient \\ /density}
     & 
     $\BigOh(\alpha)$ 
     & \makecell{$\BigOh( \log^3 n \log \alpha)$ \\
    \textcolor{green!60!black}{$\BigOh(\log^2 n \log \alpha )$}} & \makecell{Thm.~\ref{thm:rank0}  \\
    \textcolor{green!60!black}{Thm.~\ref{thm:amor0}}} & 
     \makecell[l]{
     \textcolor{green!60!black}{ $\BigOh(\alpha_{\mx}) \textnormal{ in } \BigOh(\alpha_{\mx} + \log n) \textnormal{~\cite{Brodal99dynamicrepresentations}}$} \\
      \textcolor{green!60!black}{$\BigOh(\alpha) \textnormal{ in } \BigOh(\log^2 n \log \alpha)  \textnormal{~\cite{henzinger2020explicit}}$} \\
      \textcolor{blue!60!black}{$\BigOh(\alpha) \textnormal{ in } \BigOh(\log^4 n) \textnormal{~\cite{sawlani2020near}}$}
     }\\
     \hline
     %
     %
     \makecell{out-orient \\ /density} & 
      $\BigOh(\alpha + \log n)$
      
      & \makecell{$\BigOh(\log ^2 n \log \alpha)$ \\ \textcolor{green!60!black}{$\BigOh(\log n \log \alpha)$}} & \makecell{ Thm.~\ref{thm:rank1} \\ \textcolor{green!60!black}{Thm.~\ref{thm:amor1}}} &      
     \makecell[l]{
       $\BigOh(\alpha_{\mx} + \log n) \textnormal{ in } \BigOh(\log n)  \textnormal{~\cite{berglinetal:LIPIcs:2017:8263}}$ \\
      $\BigOh(\alpha + \log n) \textnormal{ in } \BigOh(  \alpha^2 + \log^2 n ) \textnormal{~\cite{KopelowitzKPS13}}$\\
      \textcolor{green!60!black}{$\BigOh(\alpha) \textnormal{ in } \BigOh(\log^4 n) \textnormal{~\cite{sawlani2020near}}$}
} \\
     \hline
     %
     %
     density & $ (1 + \eps) \rho $ & \makecell{$\BigOh\paren*{\varepsilon^{-6}\log^3 n \log \rho}$ \\
     \textcolor{green!60!black}{$\BigOh\paren*{\varepsilon^{-4}\log^2 n \log \rho}$ }} & Cor.~\ref{cor:eps_subgraphdensity} & 
     \makecell[l]{
      $(1 + \eps) \rho \textnormal{ in } \BigOh(\eps^{-6} \log^4 n  ) \textnormal{~\cite{sawlani2020near}} $
     } \\
     \hline
    %
     %
    out-orient & $(2 + \eps) \alpha$ & \makecell{$\BigOh\paren*{\varepsilon^{-6}\log^3 n \log \alpha}$ \\  \textcolor{green!60!black}{$\BigOh\paren*{\varepsilon^{-4}\log^2 n \log \alpha}$}} & Obs.~\ref{obs:rounding} & 
       \textcolor{blue!60!black}{ $(2 + \eps) \alpha \textnormal{ in } \BigOh(\eps^{-6} \log^4 n  )  \textnormal{~\cite{sawlani2020near}}$}
     \\
     \hline
     out-orient & $(1 + \eps)\alpha + 2$ & $\BigOh( \eps^{-6} \log^3 n \log \alpha)$ & Thm.~\ref{thm:epsapprox} & 
     $
     \BigOh( \eps^{-6} \alpha^2 \log^3 n) \textnormal{~\cite{christiansenICALP}}
     $ 
     \\ 
     %
     %
     \hline
      \makecell{matching}  & maximal & $\BigOh(\alpha + \log ^2 n \log \alpha)$ & Cor.~\ref{cor:match} & 
     \makecell[l]{
       $\BigOh(\sqrt{m} )  \textnormal{~\cite{NeimanS16}}$ \\
     $\BigOh(\alpha^2 + \log^2 n) \textnormal{~\cite{KopelowitzKPS13}}%
     $}
     \\
     \hline
    \makecell{
     coloring } 
     &  $\Delta+1$ colors & $\BigOh(\alpha+\log^2 n\log \alpha)$ & Cor.~\ref{cor:col} &
     $\BigOh(\Delta)$ (folklore)
     \\
     %
     %
     \hline
     \makecell{ arboricity  \\ decomp} 
     & $\BigOh(\alpha)$ & $\BigOh(\log^3 n \log \alpha)$ & Cor.~\ref{cor:decomp} &
     \makecell[l]{
     \textcolor{green!60!black}{$\BigOh(\alpha) \textnormal{ in } \BigOh(\log^2 n)$
     \textnormal{~}\cite{henzinger2020explicit}} \\ 
     $\BigOh(\sqrt{m}\log n)$ \cite{Banerjee} 
     } \\
       \hline
     \makecell{ maintain \\
     $A \cdot \overrightarrow{x}$} & -- & \makecell{$\BigOh(\log ^2 n \log \alpha)$ 
     upd. $A$ 
     \\
     $\BigOh(\alpha + \log n)$  \textnormal{ upd. }  $\overrightarrow{x}$ \\
     $\BigOh(\alpha + \log n)$  \textnormal{ query } 
     } & Cor.~\ref{cor:dynamicmatrix} &
  \makecell[l]{
     $\BigOh(\alpha^2 + \log^2 n) \textnormal{ updating } A$ \\
     $\BigOh(\alpha + \log n) \textnormal{ query+upd } \overrightarrow{x} 
     $
     \cite{KopelowitzKPS13}
    }
    \end{tabularx}
    \caption{
    An abbreviated overview of our results, where we compare to state-of-the-art deterministic results. Our results are adaptive to the arboricity $\alpha$ and the maximal subgraph density $\rho$, and explicitly maintain the out-orientation with low recourse. For out-orientations, note that \cite{sawlani2020near} does not explicitly re-orient edges between updates, but allow for the orientations of edges to be computed upon query time (hence, we marked their result blue).
    \label{tab:resultssmall}
    }
\end{table}

\subparagraph{Related work on dynamic orientations}
Dynamic out-orientations have been widely studied~\cite{Brodal99dynamicrepresentations,HeTZ14,10.1016/j.ipl.2006.12.006,christiansenICALP,christiansenMFCS,berglinetal:LIPIcs:2017:8263} since they were introduced by Brodal and Fagerberg~\cite{Brodal99dynamicrepresentations}, for maintaining an $\BigOh(\alpha_{\mx})$ out-orientation\footnote{Here $\alpha_{\max}$ denotes the maximum arboricity seen over the whole sequence of operations.} in $\BigOh(\alpha_{\max}+\log n)$ amortised time. 
Brodal and Berglin~\cite{berglinetal:LIPIcs:2017:8263} improve the time guarantee to worst-case $\BigOh(\alpha_{\max}+\log n)$  time, albeit maintaining an $\BigOh(\alpha_{\max}+\log n)$ out-orientation.
The best \emph{adaptive} algorithms, adapting to a changing arboricity, are by Henzinger, Neumann, and Wiese~\cite{henzinger2020explicit} achieving an out-degree of $\BigOh(\alpha)$ and an amortised update time $\BigOh(\log^2 n)$, and by Kopelowitz, Krauthgamer, Porat, and Solomon~\cite{KopelowitzKPS13}, maintaining an $\BigOh(\alpha+\log n)$ out-orientation with a worst-case update time of $\BigOh(\alpha^2 + \log^2 n)$. 
Christiansen and Rotenberg~\cite{christiansenMFCS,christiansenICALP} lowered the maximum out-degree to $(1+\varepsilon)\alpha+2$ incurring a worse update time of $\BigOh(\varepsilon^{-6}\alpha^2\log^3 n)$.

\section{Notation and overview of techniques}\label{sec:tech}

Let $G =(V, E)$ be a graph with $n$ vertices and $m$ edges. 
For any subgraph $H$ of $G$, we denote by $V(H)$ and $E(H)$ the corresponding vertex and edge set.
The \emph{density} of subgraph $H$ is $\rho(H) := \frac{|E[G]|}{|V(G)|}$. The \emph{maximum subgraph density} of $G$ is then the maximum over all $H$ of $\rho(H)$. A closely related measure of uniform sparsity is the \emph{arboricity} of a graph, defined as:
\[
\alpha := \max \limits_{H \subseteq G , |V(H)| \geq 2} \ceil*{ \frac{|E(H)|}{|V(H)|-1} } 
\]

A \emph{fractional} orientation $\overrightarrow{G}$ of a graph $G$ assigns for every edge $(u, v)$ a weight $d(u \to v)$ and $d(v \to u)$ such that $d(u \to v) + d(u \to v) = 1$. 
The out-degree of a vertex $u$ is subsequently defined as $d^+(u) = \sum_{v \in V} d(u \to v)$. 
The maximum out-degree of $\overrightarrow{G}$ is $\Delta(\overrightarrow{G}) := \max_{v \in V} d^+(v)$. 
Picard \& Queyranne~\cite{PicardQueyranne82} show that $\lceil \rho \rceil = \min_{\overrightarrow{G}} \Delta(\overrightarrow{G})$, and so it follows that $\rho \leq \Delta(\overrightarrow{G})$.

An \emph{orientation} is a fractional orientation where $d(u \to v)$ is either $0$ or $1$.
For brevity, we say that an orientation includes $\overrightarrow{uv}$ when  $d(u \to v) = 1$.

For any vertex $u \in V$, we subsequently denote by $N^+(u)$ (resp. $N^-(u)$) all vertices $w$ with $d(u \to w) = 1$  (resp. $d(w \to u) = 1$). 
In an orientation, $d^+(u)$ is the number of edges directed from $u$ (the out-degree). 
Whenever $G$ is not simple, $d^+(u)$ can be larger than $|N^+(u)|$.

For any integer $b \geq 1$, we denote by $G^b$ the graph $G$ where every edge is duplicated $b$ times.
Throughout this paper we maintain for a suitable choice of $b$, an orientation over $G^b$. 
Note that any orientation in $G^b$ induces a fractional orientation on $G$. 
We may convert any such fractional orientation in $G$ to an orientation on $G$ by `rounding' every edge (i.e., $d(u \to v) = 1$ if $d(u \to v) > d(v \to u)$, breaking ties arbitrarily). 
Observe that if the maximum out-degree of an orientation in $G^b$ is some value $\Delta$, then the maximum out-degree of the rounded orientation in $G$ is at most $\Delta/\ceil{\frac{b}{2}}\leq 2 \Delta / b$. 
An important theoretical insight for this work are the following linear programs.
These dual programs respectively maximise the subgraph density, or minimise the largest fractional out-degree of an edge orientation of $G$:

\vskip\baselineskip
\noindent\begin{minipage}[t]{.5\textwidth}
\noindent\textbf{Densest Subgraph (DS)}
\begin{align*}
\text{maximise } &\sum_{ \overline{uv} \in E}  y_{u, v}
\qquad\text{s.t.}\\
x_u,x_v &\geq y_{u,v}\qquad\forall u,v\in V, \overline{uv}\in E\\
\sum_{v\in V}x_v&\leq 1\\
x,y&\geq 0
\end{align*}    
\end{minipage}%
\begin{minipage}[t]{.5\textwidth}
\noindent\textbf{Fractional Orientation (FO)}
\begin{align*}
\text{minimise } &\rho \qquad\text{s.t.}\\
d(u\to v)+d(v\to u) &= 1&\forall\overline{uv}\in E\\
\rho \geq d^+(u) &= \sum_{v\in V} d(u \to v) &\forall u\in V \\
\rho,d(u\to v),d(v\to u)&\geq 0
\end{align*}
\end{minipage}
\vskip\baselineskip


\paragraph{Duality and previous work.}
The duality between these programs allows for approximating the maximum subgraph density by computing a fractional orientation that aims to minimise $\rho$. 
Thus, in an algorithmic sense, we focus on maintaining a fractional orientation of $G$.
This is then achieved by maintaining an integral orientation in a graph with an appropriate number of edge duplicates.

These integral orientations are typically maintained using the following simple, but efficient idea: If one takes a directed path from a high-out-degree vertex to a low-out-degree vertex, then reorienting every edge along this path lowers the out-degree of the high-out-degree vertex while only increasing the out-degree of some vertex of low out-degree. 
To make this idea constructive, one needs a way to efficiently locate a suitable directed path or \emph{chain} to reorient. 

Kopelowitz et al.\cite{KopelowitzKPS13} showed how to locate such chains by maintaining a local condition, namely that $d(u \to v) > 0$ implies that $d^+(u) \leq d^+(v) + 1$. 
When the maximum out-degree is small, this local condition can be used to identify short chains. However, when the out-degree becomes large (in dense graphs) this procedure becomes slow. In particular, one can never hope to get a better bound on the chain length than $\Omega(\rho)$.
This in turn means that their update times are $\Omega(\rho) = \Omega(n)$. In fact, all of their algorithms have update times that depend on $\rho^2$. One $\rho$ stems from the chain lengths and the other from the fact that changes in degrees need to be reported to all out-neighbours in order to efficiently locate the chains. 

Sawlani and Wang~\cite{sawlani2020near} removed the latter $\rho$-factor by informing neighbours via a round robin scheme. 
They then removed the former $\rho$-factor by instead requiring that $d(u \to v) > 0$ implies that $d^+(u) \leq d^+(v) + f(\tilde{\rho})$ for some function $f$ and some very precise estimate $\tilde{\rho}$ of the \emph{current} maximum subgraph density. 
By making the local condition depend on $\tilde{\rho}$, they were able to get chains of much shorter length, namely of length $\BigOh(\eps^{-2} \log n)$. 
However, for this local condition to yield a small out-degree, one requires that $\tilde{\rho}$ very precisely estimates the current density. 
To enforce this, Sawlani and Wang~\cite{sawlani2020near} maintain $\BigOh(\frac{\log n}{\eps})$ different copies of the graph -- each with a different estimate $\tilde{\rho}$.
They maintain a pointer to the copy which currently estimates $\rho$ the best. 

While this allows Sawlani and Wang~\cite{sawlani2020near} to estimate the current maximum subgraph density very well, their approach has several drawbacks. 
First of all, their algorithm only maintains an \emph{implicit} orientation of the graph in the sense that the algorithm often switches between different copies of the graph each endowed with possibly very different orientations. 
While this does not matter in the context of density estimation, it matters in the context of using the out-orientation as an algorithmic tool. 
Firstly, any application run on such an orientation only maintains an implicit representation of the desired outcome, since one continually changes between different copies as updates arrive. 
Secondly, one has to update the applications across \emph{all} copies meaning that the guarantee on the out-degree is no better than $\BigOh(n)$ in the top-most copy -- even if the maximum subgraph density is low. 
The use of copies also  makes the algorithm significantly more complicated. 

\paragraph{Our key idea.}
We show that maintaining a multiplicative local condition, namely that $d^+(u) \leq (1 + a) \cdot d^+(v)$ for some chosen value $a < 1$, allows one to get \emph{both} short chains of length $\BigOh(\eps^{-2} \log n \log \rho)$ whilst maintaining a very precise approximation of the maximum subgraph density. 
Furthermore, this can be achieved completely \emph{explicitly} with low recourse and using only one copy of the graph. 
This allows us to apply our result to problems that benefit from having an explicit low out-degree orientation such as dynamic maximal matchings and $(\Delta + 1)$ colouring.  

Since this multiplicative local condition removes the need for scheduling updates to different copies, the algorithms also become simpler.
However, the analysis become significantly more delicate. 
Sawlani and Wang~\cite{sawlani2020near} work with an additive local condition, where the added quantity depends on a very precise estimate $\tilde{\rho}$ of the current density. 
This allows them, in many places, to essentially reduce the problem-complexity to the case where: $a)$ one has to basically only consider vertices with very large out-degree and $b)$ one can essentially assume that these out-degrees are unchanged, since one is working with a quantity depending on $\tilde{\rho}$. 
Working with our local condition, however, allows for neither simplification. 
The local condition is equally ''tight'' for every vertex, and it is very sensitive to changes in degrees at both endpoints of an edge. 
This means that one has to be very careful, when analysing the algorithms -- especially when the vertices have low degree. 

\paragraph{Multiplicative local conditions}
We consider two local conditions that we want to maintain for an integral orientation.
The first has both an additive and multiplicative term.
The second has only a multiplicative term. 
In the first case, we require that $d(u \to v) > 0$ implies that $d^+(u) \leq (1 + a) \cdot d^+(v) + c$. 
The benefit of the additive term is that for any new edge $(u, v)$, we may always orient the edge towards either $u$ or $v$ without violating this local condition between $u$ and $v$. 
The downside of this approach is that it leads to less accurate estimations of $\alpha$ and $\rho$. 
In the second case, we require that $d(u \to v) > 0$ implies that $d^+(u) \leq (1 + a) \cdot d^+(v)$.
If both $d^+(u)$ and $d^+(v)$ are small, it may be that $d^+(u) + 1 > (1 + a) \cdot d^+(v)$ and $d^+(v) + 1 > (1 + a) \cdot d^+(u)$. Thus, when adding an edge $(u, v)$ we cannot orient the edge without violating our local condition. This significantly complicates the analysis. 

Indeed, this complication means that we can only guarantee the multiplicative condition holds \emph{between} updates to $G$, and is not maintained as an invariant as we perform updates to $G^{b}$.
Hence, to get a simple recursive algorithm to work, we have to instead work with a threshholded local condition, where we allow edges between vertices of small enough degree to get a direction in order to handle the above problem. 
We show that maintaining such a threshholded local condition is actually equivalent to maintaining the multiplicative condition between updates to $G$. 
However, working with this threshholded condition requires one to be careful. To illustrate this, we briefly sketch how the algorithms work: suppose first that every vertex has perfect information about the degrees of all other vertices. Then, it can immediately identify if incrementing/decrementing its degree causes the local condition to be violated.
If so, it can then reorient a violated edge, thus restoring its degree. 
This solves the problem for this vertex, but might move the problem to some other vertex. The key property however is that this vertex has a (significantly) smaller degree in the incremental case, or (significantly) higher degree in the decremental case. 
Hence, this cascade cannot continue many times, thus yielding a short chain. 
Every vertex, however, does not necessarily have access to the degree of all of its neighbours. 
Thus the algorithm has to supply this information somehow. 
We do so in 3 different ways: by naively informing and checking all out-neighbours of degree changes (reminiscent of the approach of Kopelowitz et al.~\cite{kopelowitz2014orienting-old}), by updating and checking estimates lazily in a round robin fashion (similar to Sawlani and Wang~\cite{sawlani2020near}) and in an amortised fashion by only checking every time a degree has changed substantially. 
The two last schemes demand that we at all times work with degree estimates that are not precise. 
Doing so is quite straightforward with a purely additive local condition, due to the simplifications mentioned earlier, but it is significantly more involved in the multiplicative case: here the conditions are very sensitive to degree changes, and so to make the analysis work, we have to be very precise about at what time a certain condition on the degree holds. Particularly so, when the degrees are small. 
This is further complicated by the fact that we now work with a threshholded local condition and thusly have to ensure that our analysis can handle all paradigms of the condition.

In Section~\ref{sec:struc}, we analyse the effect of maintaining an integral orientation that satisfies our local condition (that for all $(u, v)$,  $d(u \to v) > 0$ implies $d^+(u) \leq (1 + a) \cdot d^+(v) + c$). 
We present a general theorem showing the impact of our local condition, parametrized by $a$ and $c$.
Let $\Delta$ be the maximal out-degree in our graph. 
We immediately apply this theorem to show that for $a^{-1} \in \BigOh(\log n)$: $\Delta \in \BigOh(\rho + \log n)$ (for $c \in O(1)$) or $\Delta \in O(\rho)$ (for $c = 0$).
In Section~\ref{sec:onepluseps}, we show that choosing $c = 0$ and $a^{-1} \in O(\eps^{-2} \log n)$ allows us to maintain a factional orientation of the graph where the maximal out-degree $\Delta \leq (1 + \eps)\rho$. 
By naively rounding the fractional out-degrees, this implies that we can maintain an integral orientation where the out-degree $\Delta \leq (2 + \eps)\alpha$.

\noindent
\subsection{Parameterisation of the Algorithm}
We now introduce several components of the algorithm and analysis that can be specified to obtain various trade-offs between quality of the out-orientation and update time.
Our algorithms have the two main parameters: $\eta$ and a positive integer $b$. 
We maintain a graph $G^b$ and an orientation $\overrightarrow{G}^b$ where one of the following invariants holds: 
\begin{invariant}
\label{inv:degrees}
We maintain an orientation $\overrightarrow{G}^b$ where for every directed edge $\overrightarrow{uv}$ in $\overrightarrow{G}^b$: 
\[
d^+(u) \leq (1 + \eta \cdot b^{-1}) \cdot d^+(v). 
\]
\end{invariant}
\begin{invariant}
\label{inv:degrees_additive}
We maintain an orientation $\overrightarrow{G}^b$ where for every directed edge $\overrightarrow{uv}$ in $\overrightarrow{G}^b$: 
\[
d^+(u) \leq (1 + \eta \cdot b^{-1}) \cdot d^+(v) +2 . 
\]
\end{invariant}

Throughout the paper, we denote $\theta = 0$ if we are maintaining Invariant~\ref{inv:degrees} and $\theta = 1$ otherwise.
This way, we maintain Invariant~$\theta$ by maintaining $d^+(u) \leq (1 + \eta \cdot b^{-1}) \cdot d^+(v) + 2\theta$.
The tighter the inequalities are, the closer the maximum out-degree of the maintained out-orientation is to the maximum subgraph density.
Hence, setting $\theta = 0$ will give a better approximation than $\theta = 1$.


Note that regardless of the choice of parameters, not all graphs have an orientation that satisfies Invariant~\ref{inv:degrees}. E.g. any orientation of the graph consisting of a single edge has a directed edge $\overrightarrow{uv}$ with $1 = d^+(u) > \paren*{1+\eta\cdot b^{-1}}\cdot d^+(v) = 0$.   For convenience we will therefore need the following slightly relaxed invariant, which we show in Section~\ref{sec:basic} \emph{is} satisfiable for $\theta=0$  (and therefore for all $\theta\geq0$) as long as $0<\frac{b}{\eta}\leq\floor{\frac{b}{2}}$, i.e. when $b\geq 2$ is even and $\eta\geq 2$ or when $b\geq 3$ and $\eta \geq 3$ (or more precisely $\eta\geq\frac{2b}{b-1}$).
{
\renewcommand{\theinvariant}{$\theta^\prime$}
\begin{invariant}\label{inv:degrees_modified}
We maintain an orientation $\overrightarrow{G}^b$ where for every directed edge $\overrightarrow{uv}$ in $\overrightarrow{G}^b$: 
\[
d^+(u) \leq \max\set*{
\paren*{1 + \eta \cdot b^{-1}} \cdot d^+(v)+2\theta
,
\floor*{\frac{b}{2}}}. 
\]
    
\end{invariant}
}
The point is that for each update to $G$ this lets us do updates to $\overrightarrow{G}^b$ one edge at a time, all the while satisfying Invariant~\ref{inv:degrees_modified}, and when we are done the resulting graph satisfies Invariant~$\theta$ because of the following Lemma.

\begin{lemma}\label{lemma:inv_modified_implies_normal}
    Let $G^b$ be a graph that can be obtained from a graph $G$ by replacing each edge with $b$ copies. For all $\theta\geq 0$, any orientation $\overrightarrow{G}^b$ of $G^b$ that satisfies Invariant~\ref{inv:degrees_modified}
    also satisfies Invariant~$\theta$.
\end{lemma}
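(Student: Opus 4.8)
The plan is to show directly that if $\overrightarrow{G}^b$ satisfies Invariant~\ref{inv:degrees_modified} for some fixed $\theta\geq 0$, then the bare inequality $d^+(u)\leq (1+\eta b^{-1})d^+(v)+2\theta$ holds for every directed edge $\overrightarrow{uv}$, which is exactly Invariant~$\theta$. Fix a directed edge $\overrightarrow{uv}$ in $\overrightarrow{G}^b$. By Invariant~\ref{inv:degrees_modified} we have
\[
d^+(u)\leq\max\set*{\paren*{1+\eta b^{-1}}d^+(v)+2\theta,\ \floor*{\tfrac{b}{2}}}.
\]
If the maximum is attained by the first term we are immediately done, so the only case to handle is $d^+(u)\leq\floor{\tfrac{b}{2}}$ while $\floor{\tfrac{b}{2}}>(1+\eta b^{-1})d^+(v)+2\theta$.

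The key observation is that the existence of the edge $\overrightarrow{uv}$ forces $d^+(u)\geq 1$, since $u$ has at least one out-edge (namely to $v$). More is true: because $G^b$ is obtained from $G$ by taking $b$ parallel copies of each edge, and each of the $b$ copies of $uv$ is oriented one way or the other, the $b$ copies split as "out of $u$" plus "into $u$" (equivalently "out of $v$"). Since $\overrightarrow{uv}$ is present, at least one copy points out of $u$; if $d^+(u)\le\floor{b/2}$ then strictly more than $\floor{b/2}$, hence at least $\lceil b/2\rceil$, of these copies point the other way, i.e. out of $v$, so $d^+(v)\geq\lceil b/2\rceil\geq\floor{b/2}\geq d^+(u)$. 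Then $(1+\eta b^{-1})d^+(v)+2\theta\geq d^+(v)\geq d^+(u)$, so the desired inequality holds in this case as well. (Alternatively, and perhaps more cleanly: among the $b$ parallel copies of the edge $uv$, at least $\lceil b/2\rceil$ of them point the same direction; if that direction is out of $u$ then $d^+(u)\geq\lceil b/2\rceil>\floor{b/2}$, contradicting $d^+(u)\le\floor{b/2}$ unless $b$ is odd and we must argue the boundary — so it is safer to use the counting argument just given, which only needs one copy out of $u$.) This completes the case analysis, and the lemma follows since $\overrightarrow{uv}$ was arbitrary.

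The step I expect to be the only real content is the second case: one must not try to bound $d^+(v)$ from below by anything involving $\eta$ or the invariant applied to edges out of $v$ (that could loop), but instead exploit the structural fact that $u$ and $v$ are joined by $b$ parallel copies in $G^b$ and that $\overrightarrow{uv}$ being oriented one way pins down how the remaining copies contribute to $d^+(v)$. Everything else is a one-line unfolding of the $\max$. I would also remark that the lemma is stated for all $\theta\ge 0$ and the argument above is uniform in $\theta$ (the term $2\theta\ge 0$ only helps), so no separate treatment of $\theta=0$ versus $\theta=1$ is needed.
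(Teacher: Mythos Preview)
Your proof is correct and follows essentially the same approach as the paper: both exploit that the $b$ parallel copies of the edge $uv$ force the out-degrees to satisfy $d^+(u)+d^+(v)\geq b$, so that $d^+(u)\leq\lfloor b/2\rfloor$ implies $d^+(v)\geq\lceil b/2\rceil$ (the paper phrases this as the contrapositive, doing the case split on $d^+(v)$ rather than on which term of the $\max$ is attained). One small wording slip: ``strictly more than $\lfloor b/2\rfloor$'' is not true when $b$ is even and exactly $b/2$ copies point each way, but your subsequent ``at least $\lceil b/2\rceil$'' is the correct bound and is all you use, so the argument stands.
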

\begin{proof}
    Suppose $\overrightarrow{G}^b$ satisfies Invariant~\ref{inv:degrees_modified} and 
    let $\overrightarrow{uv}$ be any edge in $\overrightarrow{G}^b$.
    If $d^+(v)<\ceil{\frac{b}{2}}$ then (because each edge in $G^b$ is duplicated $b$ times) $d^+(u)\geq b-d^+(v)>\floor{\frac{b}{2}}$. So by Invariant~\ref{inv:degrees_modified} we have $d^+(u)\leq\paren*{1+\eta \cdot b^{-1}}\cdot d^+(v)+2\theta$ and Invariant~$\theta$ is satisfied for $\overrightarrow{uv}$.
    Otherwise $d^+(v)\geq\ceil{\frac{b}{2}}$ and by Invariant~\ref{inv:degrees_modified}, $d^+(u)\leq \max\set*{\paren*{1+\eta \cdot b^{-1}}\cdot d^+(v)+2\theta, \floor*{\frac{b}{2}}} = \paren*{1+\eta \cdot b^{-1}}\cdot d^+(v)+2\theta$ thus Invariant~$\theta$ is satisfied for $\overrightarrow{uv}$.
\end{proof}



\section{A Structural Theorem}\label{sec:struc}

In this section, we formally establish the relationship between maintaining Invariant~$\theta$ for a graph $\overrightarrow{G}^b$, and the corresponding estimate of the density and arboricity of the graph $G$.
The following theorem is our result in its most general form: allowing for $(1 + \eps)$-approximations of $\rho$ and more. 
Skip ahead to Corollaries~\ref{cor:inv0}+\ref{cor:inv1} for a comprehensible application of the variables.

\begin{theorem}
\label{thm:structural}
Let $G$ be a graph and let $G^b$ be $G$ with each edge duplicated $b$ times. Let $\rho_b$ be the maximum subgraph density of $G^b$.
Let  $\overrightarrow{G}^b$ be any orientation of $G^b$ which has the following invariant: for some $c\geq 0$, every directed edge $\overrightarrow{uv}$ satisfies $d^+(u) \leq (1 + \eta \cdot b^{-1}) \cdot d^+(v)+c$.\\
Then for any $\gamma > 0$ 
there exists a value $k_{\max} \leq \log_{1 + \gamma} n$ for which:
\[
(1+\eta\cdot b^{-1})^{-k_{\max}}\Delta(\overrightarrow{G}^b) \leq (1+\gamma)\rho_b +c(\eta^{-1}\cdot b+1).
\]
\end{theorem}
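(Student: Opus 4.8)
The plan is to exhibit a dense subgraph of $G^b$ whose density witnesses the claimed inequality, found by a peeling argument driven by the local condition. Fix $\Delta = \Delta(\overrightarrow{G}^b)$, and let $v_0$ be a vertex achieving $d^+(v_0) = \Delta$. The key structural consequence of the invariant is that high-out-degree vertices must have many in-neighbours whose out-degrees are also fairly large: if $d^+(u) > (1+\eta b^{-1}) d^+(v) + c$ then $\overrightarrow{uv}$ cannot be an edge, so every edge into a vertex $v$ comes from a vertex $u$ with $d^+(u) \le (1+\eta b^{-1}) d^+(v) + c$; dually, every out-edge of $v$ goes to a vertex of out-degree at least roughly $(1+\eta b^{-1})^{-1}(d^+(v) - c)$. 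I would define level sets $S_k = \{v : d^+(v) \ge (1+\eta b^{-1})^{-k}\Delta - (\text{correction term})\}$, where the correction term is chosen (a geometric-type sum in $c$) so that the recursion closes; one checks that the $c(\eta^{-1}b + 1)$ in the statement is exactly $c \sum_{j\ge 0}(1+\eta b^{-1})^{-j}$ up to the $(1+\eta b^{-1})^{-k}$ scaling, which is where that constant comes from.

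First I would make precise the ``one step'' of peeling: for a vertex $v$ with $d^+(v)$ large, all of its $d^+(v)$ out-edges (counted with multiplicity in $G^b$) land in vertices of out-degree at least $(1+\eta b^{-1})^{-1}(d^+(v)-c)$. Hence, summing out-degrees over $S_k$, the number of edges of $G^b$ induced within $S_k \cup S_{k+1}$ (or leaving $S_k$ downward into $S_{k+1}$) is at least $|S_k|$ times the threshold value for level $k$. If the induced density $\rho(G^b[S_{k+1}])$ stays below the target $(1+\gamma)\rho_b + \dots$ — which it does since $\rho(G^b[S_{k+1}]) \le \rho_b$ — then $|E(G^b[S_{k+1}])| \le \rho_b |S_{k+1}|$, and combining with the lower bound on out-edges from $S_k$ forces $|S_{k+1}|$ to be substantially larger than $|S_k|$, by a factor related to $(1+\gamma)$. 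Here I would balance the parameters so that at each level where the desired inequality is violated, $|S_{k+1}| \ge (1+\gamma)|S_k|$ (or a comparable multiplicative growth); starting from $|S_0| \ge 1$, after $k_{\max} = \lceil \log_{1+\gamma} n \rceil$ steps we would exceed $n$ vertices, a contradiction. Therefore the inequality must hold at some level $k \le \log_{1+\gamma} n$, which is exactly the statement.

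I expect the main obstacle to be bookkeeping the additive term $c$ correctly through the recursion: because the out-degree lower bound degrades by a subtracted $c$ at every peeling step, the thresholds defining $S_k$ must carry a running geometric sum $c\bigl((1+\eta b^{-1})^{-1} + (1+\eta b^{-1})^{-2} + \cdots\bigr)$, and one needs to bound this tail by $c \cdot \eta^{-1} b$ (since $\sum_{j\ge1}(1+x)^{-j} = 1/x$ with $x = \eta b^{-1}$), so that after multiplying through by $(1+\eta b^{-1})^{-k}\le 1$ the total additive slack is at most $c(\eta^{-1}b + 1)$ — the $+1$ absorbing the $(1+\gamma)\rho_b$ rounding and the level-$0$ term. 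The secondary subtlety is handling multiplicities in $G^b$ cleanly (working throughout with $\rho_b$ and out-degrees in $G^b$ rather than in $G$ should make this transparent) and making sure the ``density of any induced subgraph is $\le \rho_b$'' step is applied to the right subgraph at each level. Once the threshold sequence is pinned down, the growth argument and the final contradiction are routine.
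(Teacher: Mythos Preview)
Your proposal is correct and follows essentially the same argument as the paper: define nested level sets by out-degree thresholds $\Delta(1+\eta b^{-1})^{-k}$ minus a geometric correction in $c$, use the invariant to show all out-edges from level $k$ land in level $k+1$, then compare the total out-degree of $S_k$ against $\rho_b|S_{k+1}|$ to force $(1+\gamma)$-growth of $|S_k|$ unless the claimed inequality already holds, terminating after $\log_{1+\gamma} n$ steps. The only minor quibble is your explanation of the ``$+1$'': it arises purely from the closed form $\frac{c}{1-(1+\eta b^{-1})^{-1}} = c(\eta^{-1}b+1)$ bounding the geometric tail, not from any ``$(1+\gamma)\rho_b$ rounding'' or level-$0$ term.
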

\begin{proof}
Let $G^b=(V, E^b)$. We define for non-negative integers $i$ the sets: 
\[
T_i := \set*{ v\in V \suchthat d^+(v) \geq \Delta(\overrightarrow{G}^b) \cdot \paren*{1 + \eta\cdot b^{-1}}^{-i} - c\sum_{j = 1}^i \paren*{1 + \eta\cdot b^{-1}}^{-j} }
\]

\noindent
Observe that for all non-negative integers $0\leq i < j$, $T_i \subseteq T_j$.
Moreover, observe that $T_0$ contains at least one element (the element of $\overrightarrow{G}^b$ with maximum out-degree), and each $T_i$ at most $n$ elements (since they can contain at most all vertices of $G$). 
Let $k$ be the smallest integer such that $|T_{k+1}| < (1 + \gamma) |T_k|$.
It follows that $k$ is upper bounded by the value $k_{\max} = \log_{(1 + \gamma)} n$.

In order to bound the maximum out-degree of $\overrightarrow{G}^b$, we want to show that no edges can be oriented from $T_{k}$ to a vertex not in $T_{k+1}$. To do so, we assume two such candidates $u \in T_k$ and $v \not \in T_{k+1}$, and show that $\overrightarrow{uv}$ violates: $d^+(u) \leq (1 + \eta \cdot b^{-1}) \cdot d^+(v)+c$. 
Per assumption we have 
\begin{align*}
    d^+(u) &\geq (1 + \eta\cdot b^{-1})^{-k} \Delta(\overrightarrow{G}^b)  - c\sum_{j = 1}^{k} (1 + \eta\cdot b^{-1})^{-j}
    &\text{(Since $u\in T_k$)}
    \intertext{and}
    d^+(v) &< (1 + \eta\cdot b^{-1})^{-1} (1 + \eta\cdot b^{-1})^{-k} \Delta(\overrightarrow{G}^b) -  c\sum_{j = 1}^{k+1} (1 + \eta\cdot b^{-1})^{-j}.
    &\text{(Since $v\not\in T_{k+1}$)}
\end{align*}
It follows that
\begin{align*}
    (1 + \eta\cdot b^{-1}) d^+(v) + c &< (1 + \eta\cdot b^{-1})^{-k} \Delta(\overrightarrow{G}^b) - c\sum_{j = 0}^{k} (1 + \eta\cdot b^{-1})^{-j} + c \\
    &= (1 + \eta\cdot b^{-1})^{-k} \Delta(\overrightarrow{G}^b)  - c\sum_{j = 1}^{k} (1 + \eta\cdot b^{-1})^{-j} \leq d^+(u).
\end{align*}

This would violate the assumed invariant of $\overrightarrow{G}^b$.
Hence for any $u \in T_k$ and any edge $\overrightarrow{uv}$, we have $v \in T_{k+1}$ and thus: 
$
\sum_{u \in T_k} d^+(u) \leq |E^b[T_{k+1}]|.
$
Finally, we can bound the density $\rho_b$ as:
\begin{align*}
    \rho_b =
    \max_{\emptyset\subset S\subseteq V}\frac{|E^b[S]|}{|S|}
    \geq \frac{|E^b[T_{k+1}]|}{|T_{k+1}|} &\geq \frac{\sum_{u \in T_k} d^+(u) }{(1+\gamma) |T_k| } 
    \\&\geq \frac{|T_k| \cdot \paren*{ \paren*{1 + \eta\cdot b^{-1}}^{-k} \Delta\paren[\Big]{\overrightarrow{G}^b} - c\sum_{j = 1}^k \paren*{1 + \eta\cdot b^{-1}}^{-j} } }{(1+\gamma)|T_k|} .
\end{align*}
We find $\paren*{1+\gamma}\rho_b +\frac{c}{1 - \frac{1}{1 + \eta\cdot b^{-1}}}  \geq  \paren*{1 + \eta\cdot{} b^{-1}}^{-k_{\max}} \Delta\paren*{\overrightarrow{G}^b}$,
which concludes the proof.\end{proof}

The parameter $\gamma$ is needed to get a $(1+\eps)$-approximation later on, where we will require that $\gamma = \Theta(\eps)$. 
For now, one can just think of $\gamma$ as being a constant. In fact in the following corollaries, we will choose $\gamma$ so $1+\gamma = e$.

\begin{corollary}
    \label{cor:inv0}
       Denote by $\rho$ the density of $G$.
    For any $\eta$ and $b$ such that 
    $\eta b^{-1} \in \BigOh(\frac{1}{\log n})$,
    we have that Invariant~\ref{inv:degrees} for the graph $G^b$ implies:
    $\Delta(\overrightarrow{G}^b)\in \BigOh(b \rho)$ and $\Delta(\overrightarrow{G})\in \BigOh(\rho)$. 
\end{corollary}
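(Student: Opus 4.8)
The plan is to instantiate Theorem~\ref{thm:structural} with the specific parameters appearing in Corollary~\ref{cor:inv0} and then simplify. Since Invariant~\ref{inv:degrees} is exactly the invariant hypothesised in Theorem~\ref{thm:structural} with $c=0$, the theorem applies directly. I would fix $\gamma$ so that $1+\gamma = e$, as the remark before the corollary suggests; then $k_{\max} = \log_{1+\gamma} n = \ln n$.

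With $c = 0$ the conclusion of Theorem~\ref{thm:structural} reads
\[
(1+\eta b^{-1})^{-k_{\max}} \Delta(\overrightarrow{G}^b) \leq (1+\gamma)\rho_b = e\,\rho_b .
\]
So the only real work is to bound the factor $(1+\eta b^{-1})^{k_{\max}}$ by a constant. Here I would use the hypothesis $\eta b^{-1} \in \BigOh(1/\log n)$: write $\eta b^{-1} \leq C/\ln n$ for some constant $C$ (absorbing the base-change constant between $\log$ and $\ln$), and then $(1+\eta b^{-1})^{k_{\max}} \leq (1 + C/\ln n)^{\ln n} \leq e^{C}$, using $1+x \leq e^x$. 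Hence $\Delta(\overrightarrow{G}^b) \leq e^{C+1}\rho_b \in \BigOh(\rho_b)$.

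The last step is to pass from $\rho_b$, the density of $G^b$, back to $\rho$, the density of $G$. Duplicating every edge $b$ times multiplies the number of edges in every subgraph by exactly $b$ while leaving vertex sets unchanged, so $\rho_b = b\rho$ (the maximiser is the same subgraph). This gives $\Delta(\overrightarrow{G}^b) \in \BigOh(b\rho)$, which is the first claim. For the second claim, recall from the discussion in Section~\ref{sec:tech} that rounding a fractional orientation of $G$ obtained from an orientation of $G^b$ with maximum out-degree $\Delta$ yields an orientation of $G$ with maximum out-degree at most $2\Delta/b$; applying this with $\Delta = \Delta(\overrightarrow{G}^b) \in \BigOh(b\rho)$ gives $\Delta(\overrightarrow{G}) \in \BigOh(\rho)$.

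I do not anticipate a serious obstacle here — the corollary is essentially a bookkeeping specialisation of the structural theorem. The one point requiring a little care is making the constant chase explicit: the hypothesis $\eta b^{-1} \in \BigOh(1/\log n)$ must be combined with the choice $1+\gamma=e$ so that the exponent $k_{\max}$ and the quantity $\eta b^{-1}$ are measured against the \emph{same} logarithm, ensuring $(1+\eta b^{-1})^{k_{\max}}$ is genuinely $\BigOh(1)$ rather than, say, $n^{o(1)}$. Everything else (the identity $\rho_b = b\rho$ and the rounding bound) is already supplied by the excerpt.
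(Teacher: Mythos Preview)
Your proposal is correct and follows essentially the same approach as the paper: set $\gamma=e-1$, apply Theorem~\ref{thm:structural} with $c=0$, bound $(1+\eta b^{-1})^{k_{\max}}$ by a constant via $1+x\leq e^x$ and the hypothesis $\eta b^{-1}\in\BigOh(1/\log n)$, then use $\rho_b=b\rho$ and the rounding bound $\Delta(\overrightarrow{G})\leq 2\Delta(\overrightarrow{G}^b)/b$ from Section~\ref{sec:tech}. The paper phrases the last step as ``at least $\ceil{b/2}$ copies of each edge in $\overrightarrow{G}$ point the same way in $\overrightarrow{G}^b$'', which is exactly the rounding argument you cite.
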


\begin{proof} Set $\gamma=e-1$, let $k_{\max}\leq\log_{(1+\gamma)}n=\log_e n$ be as in Theorem~\ref{thm:structural}.
By our choice of $\eta$ and $b$  there exists a constant $s>0$ such that $\eta b^{-1}\leq \frac{s}{\log_e n}$ for all $n\geq 1$, thus by Theorem~\ref{thm:structural} (with $c=0$) we now have
\begin{align*}
    \Delta(\overrightarrow{G}^b) 
    &\leq
    (1+\eta\cdot b^{-1})^{k_{\max}}(1+\gamma)\rho_b
    \leq e^{\eta\cdot b^{-1}\cdot k_{\max}}(1+\gamma)\rho_b
    \leq
    e^{s+1}\rho_b
    \in\BigOh(\rho_b).
\end{align*}
Finally, we note that per definition of subgraph density, $\rho_b = b \cdot \rho$.
For all $\overrightarrow{uv}$ in $\overrightarrow{G}$, there must be at least $\ceil{\frac{b}{2}}$ edges in $\overrightarrow{G}^b$ from $u$ to $v$ (else, $\overrightarrow{G}$ would include the edge $\overrightarrow{vu}$ instead). It immediately follows that the out-degree of $u$ in $\overrightarrow{G}$ is at most $d^+(u) \cdot \ceil{\frac{b}{2}}^{-1} \in \BigOh(b^{-1} \cdot \rho_b) = \BigOh( \rho)$.
\end{proof}


\begin{corollary}
    \label{cor:inv1}
           Denote by $\rho$ the density of $G$.
    Let $b = 1$ and $\eta = \frac{1}{\log_e(n)}$.
  
    Whenever Invariant~\ref{inv:degrees_additive} holds for the graph $\overrightarrow{G}=\overrightarrow{G}^b$, it must be that:  $\Delta(\overrightarrow{G})\in \BigOh(\rho+\log n)$. 
\end{corollary}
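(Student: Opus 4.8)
The plan is to instantiate Theorem~\ref{thm:structural} with the additive parameter $c=2$ (corresponding to Invariant~\ref{inv:degrees_additive}), with $b=1$, and with $\gamma=e-1$ so that $1+\gamma=e$, exactly as in the proof of Corollary~\ref{cor:inv0}. First I would note that with $b=1$ the graph $G^b$ is just $G$, so $\rho_b=\rho$ and $\Delta(\overrightarrow{G}^b)=\Delta(\overrightarrow{G})$. The theorem then furnishes a value $k_{\max}\leq\log_e n$ with
\[
(1+\eta)^{-k_{\max}}\,\Delta(\overrightarrow{G})\;\leq\;(1+\gamma)\rho + c\,(\eta^{-1}+1).
\]
Rearranging, $\Delta(\overrightarrow{G})\leq (1+\eta)^{k_{\max}}\bigl[(1+\gamma)\rho + c(\eta^{-1}+1)\bigr]$.

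The key step is to control the blow-up factor $(1+\eta)^{k_{\max}}$. Since $\eta=1/\log_e n$ and $k_{\max}\leq\log_e n$, we have $(1+\eta)^{k_{\max}}\leq e^{\eta k_{\max}}\leq e^{1}$, a constant. So $\Delta(\overrightarrow{G})\leq e\bigl[(1+\gamma)\rho + 2(\log_e n + 1)\bigr] = e(1+\gamma)\rho + 2e(\log_e n+1)$. With $1+\gamma=e$ this is $e^2\rho + 2e\log_e n + 2e \in \BigOh(\rho + \log n)$, which is the claimed bound. The only slightly delicate point is the substitution $\eta^{-1}\cdot b + 1 = \log_e n + 1$ in the additive term of Theorem~\ref{thm:structural}: here $c=2$ rather than $c\in\BigOh(1)$ generically, but since $2$ is a constant this changes nothing, and I would simply write $c=2$ throughout.

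I do not anticipate a genuine obstacle here — the corollary is a direct specialization of Theorem~\ref{thm:structural} parallel to Corollary~\ref{cor:inv0}, the only new feature being the nonzero $c$, whose contribution is exactly the additive $\eta^{-1}\cdot b$ term that, for this choice of $\eta$ and $b$, equals $\Theta(\log n)$. The one thing to be careful about is making sure Invariant~\ref{inv:degrees_additive} indeed matches the hypothesis of Theorem~\ref{thm:structural} with $c=2$ and $\eta b^{-1}=\eta$ (since $b=1$), which it does verbatim, so the theorem applies with no further argument.
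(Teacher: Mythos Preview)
Your proposal is correct and follows essentially the same approach as the paper: instantiate Theorem~\ref{thm:structural} with $c=2$, $\gamma=e-1$, bound $(1+\eta)^{k_{\max}}\leq e^{\eta k_{\max}}\leq e$ using $\eta=1/\log_e n$ and $k_{\max}\leq\log_e n$, and read off $\Delta(\overrightarrow{G})\leq e\bigl(e\rho + 2(\log_e n+1)\bigr)\in\BigOh(\rho+\log n)$. The paper's proof is line-for-line the same computation, only with slightly less explicit arithmetic.
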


\begin{proof}
    Set $\gamma=e-1$, let $k_{\max}$ be as in Theorem~\ref{thm:structural}.
    By our choice of $\eta$ and $b$, $\eta \cdot b^{-1}=\frac{1}{\log_e n}=\frac{1}{\log_{(1+\gamma)}n}$. Thus by Theorem~\ref{thm:structural} (with $c=2$) we now have
    \begin{align*}
        \Delta(\overrightarrow{G})
        = 
        \Delta(\overrightarrow{G}^b)
        &\leq
        (1+\eta\cdot b^{-1})^{k_{\max}}
        \paren*{
        (1+\gamma)\rho_b
        +c(\eta^{-1}\cdot b+1)        
        }
        \\
        &\leq
        e^{\eta\cdot b^{-1}\cdot k_{\max}}
        \paren*{
        (1+\gamma)\rho_b
        +c(\eta^{-1}\cdot b+1)        
        }
        \\
        &\leq
        e
        \paren*{
        e\cdot\rho_b
        +c(\eta^{-1}\cdot b+1)        
        }
        \\
        &\in\BigOh(\rho+\log n)
\end{align*}
\end{proof}

\section{A Simple Algorithm for Maintaining the Invariants}
\label{sec:basic}
We first provide a simple worst-case $\BigOh(\rho \log \rho \cdot \polylog(n))$  algorithm (where $\rho$ is the maximum subgraph density) to maintain Invariant~$\theta$ in $\overrightarrow{G}^b$ (i.e. we maintain one chosen invariant).  
Our data structure is purposefully more complicated than necessary here, to illustrate its use in future sections.  
Subsequent sections slightly adjust the algorithms.
Crucially, the bound on the recursive depth of our functions applies throughout the paper. 
Recall that $G^b$ is the graph $G$ with edges duplicated $b$ times. For convenience, we set $\lambda = \eta b^{-1}/64$ in the rest of the paper, and note that $(1+\lambda)^5\leq 1+\eta b^{-1}\leq 2$.
We maintain Invariant~$\theta$ using a data structure storing for all vertices $u$: 
\begin{enumerate}[(a), noitemsep]
    \item The value $d^+(u)$ of the current orientation $\overrightarrow{G}^b$,
    \item The set $N^+(u)$ in arbitrary order, and
    \item The set $N^-(u)$ in a sorted doubly linked list of buckets $B_j(u)$.
    Each bucket $B_j(u)$ contains, as a doubly linked list in arbitrary order, all $w \in N^-(u)$ where
    $ 
    j = \floor{ \log_{(1 + \lambda)} d^+(w)}$.
    The vertex $u$ has a pointer to the bucket $B_i(u)$ with $i = \floor*{ \log_{(1 + \lambda)} \max\set*{(1+\lambda)d^+(u),\floor*{\frac{b}{4}}}}$.
\end{enumerate}

\noindent
We run Algorithms~\ref{alg:insertion_in_G}+\ref{alg:deletion_in_G} on the graph $G$.
These invoke Algorithms~\ref{alg:insertion}+\ref{alg:deletion}, which in turn add  directed edges to and remove them from to $G^b$ (Algorithms~\ref{alg:add}+\ref{alg:remove}).
In our recursive algorithm calls, we may assume that for any edge insertion $(u, v)$ in $G^b$, we call $\textnormal{Insert}( \overrightarrow{uv})$ whenever $d^+(u) \leq d^+(v)$. 
Recall that $\theta$, $\eta$ and $b$ are parameters that are set beforehand:

\renewcommand\algorithmicthen{}

\noindent\begin{minipage}[t]{.5 \textwidth}
\null 
 \begin{algorithm2e}[H]
    \caption{Insert(edge $(u, v)$ in $G$)}
    \label{alg:insertion_in_G}
    \begin{algorithmic}
      \FOR{$i \in [b]$ }
      \IF{ $d^+(u) \leq d^+(v)$}
      \STATE Insert($\overrightarrow{uv}$)
      \ELSE
      \STATE Insert($\overrightarrow{vu}$)
      \ENDIF
      \ENDFOR
    \end{algorithmic}
  \end{algorithm2e}
\end{minipage}~%
\begin{minipage}[t]{.5\textwidth}
\null
 \begin{algorithm2e}[H]
    \caption{Delete(edge $(u, v)$ in $G$)}
    \label{alg:deletion_in_G}
    \begin{algorithmic}
      \FOR{$i \in [b]$ }
      \IF{ $u \in N^-(v)$}
      \STATE Delete($\overrightarrow{vu}$)
      \ELSE
      \STATE Delete($\overrightarrow{uv}$)
      \ENDIF
      \ENDFOR
    \end{algorithmic}
  \end{algorithm2e}
\end{minipage}

\noindent\begin{minipage}[t]{.5 \textwidth}
\null 
 \begin{algorithm2e}[H]
    \caption{\mbox{Insert$(\overrightarrow{uv})$, {\small where $d^+(u)\leq d^+(v)$}}}
    \label{alg:insertion}
    \begin{algorithmic}
      \STATE Add($\overrightarrow{uv}$)
      \STATE $x \gets \argmin \{ d^+(w) \mid w \in N^+(u) \}$ 
      \IF{
      $
        d^+(u) > \mathrlap{\max\set*{(1 + \lambda) \cdot d^+(x) + \theta,\floor*{\frac{b}{4}}}}
      $
     }
      \STATE Remove$(\overrightarrow{ux})$ \COMMENT{restores $d^+(u)$}
      \STATE \emph{Insert($\overrightarrow{xu}$)}
      \ELSE
      \FORALL{$w \in N^+(u)$}
      \STATE Update $d^+(u)$ in Buckets($N^-(w)$) 
      \ENDFOR
      \ENDIF
    \end{algorithmic}
  \end{algorithm2e}
\end{minipage}~%
\begin{minipage}[t]{.5\textwidth}
\null
 \begin{algorithm2e}[H]
    \caption{Delete$(\overrightarrow{uv})$}
    \label{alg:deletion}
    \begin{algorithmic}
    \STATE  Remove($\overrightarrow{uv}$)
      \STATE $x \gets \textnormal{First( Max( Buckets($ N^-(u)$)))} $
      \IF{
      $d^+(x) > \mathrlap{\max\set*{(1 + \lambda ) \cdot d^+(u)   + \theta,\floor*{\frac{b}{4}}}}
      $
}
            \STATE Add$(\overrightarrow{ux})$ \COMMENT{restores $d^+(u)$}
            \STATE \emph{Delete($\overrightarrow{xu}$)}
      \ELSE
      \FORALL{$w \in N^+(u)$}
      \STATE Update $d^+(u)$ in Buckets($N^-(w)$) 
      \ENDFOR
      \ENDIF
    \end{algorithmic}
  \end{algorithm2e}
\end{minipage}

\noindent\begin{minipage}[t]{.5 \textwidth}
\null 
 \begin{algorithm2e}[H]
    \caption{\mbox{Add$(\overrightarrow{uv})$}}
    \label{alg:add}
    \begin{algorithmic}
      \STATE $d^+(u) = d^+(u) + 1$
      \IF{ 
      $
        \nexists \overrightarrow{uv} \in \overrightarrow{G}^b
      $
     }
      \STATE Add $u$ to $N^-(v)$ and $v$ to $N^+(u)$
      \ENDIF
      \STATE Add one edge $\overrightarrow{uv}$ to $\overrightarrow{G}^b$.
    \end{algorithmic}
  \end{algorithm2e}
\end{minipage}~%
\begin{minipage}[t]{.5\textwidth}
\null
 \begin{algorithm2e}[H]
    \caption{Remove$(\overrightarrow{uv})$}
    \label{alg:remove}
    \begin{algorithmic}
    \STATE  $d^+(u) = d^+(u) - 1$
             \STATE Remove one edge $\overrightarrow{uv}$ from $\overrightarrow{G}^b$.
      \IF{ 
      $
        \nexists \overrightarrow{uv} \in \overrightarrow{G}^b
      $
     }
      \STATE Remove $u$ from $N^-(v)$, $v$ from $N^+(u)$
      \ENDIF
    \end{algorithmic}
  \end{algorithm2e}
\end{minipage}

\begin{definition}
\label{def:time}
We count time in discrete steps.
 A new time step starts just before Algorithm~\ref{alg:insertion_in_G} calls Insert or Algorithm~\ref{alg:deletion_in_G} calls Delete.  
For a time $t$, we denote for any variable $\phi$ in our code by $\phi_t$ its value before the invoking insertions (or deletions) at time $t$.
E.g., for a vertex $w$, $d^+(w)_t$ is the out-degree before invoking insertions at time $t$, and $d^+(w)_{t+1}$ is the out-degree just after.
\end{definition}

\subsection{Maintaining Invariant~\ref{inv:degrees_additive}.}
We show that by setting $\theta = 1$ (and choosing $\eta$ and $b$ carefully) we maintain  Invariant~\ref{inv:degrees_additive}:
    
\begin{theorem}
\label{thm:invariant1}
Let $G$ be a dynamic graph and $\rho$ be the density of $G$ at time $t$. 
We can choose our variables $\theta = 1$, $b = 1$ and $\eta \in \Theta(\log n)$ to maintain an out-orientation $\overrightarrow{G}^b = \overrightarrow{G}$ in $O\paren*{ (\rho + \log n) \cdot   \log n \cdot \log \rho }$ time per update in $G$ such that  Invariant~\ref{inv:degrees_additive} holds for $\overrightarrow{G}$. Moreover:

\begin{itemize}[noitemsep, nolistsep]
 \item $\forall u$, the out-degree $d^+(u)_{t+1}$ in $\overrightarrow{G}$ is at most $\BigOh( \rho + \log n)$, \quad \quad \small{(i.e. $\Delta( \overrightarrow{G}) \in \BigOh( \rho + \log n)$)}
\end{itemize}
\end{theorem}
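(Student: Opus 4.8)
The plan is to analyze Algorithms~\ref{alg:insertion}+\ref{alg:deletion} when $\theta=1$, $b=1$, and $\eta=\Theta(\log n)$ (so $\eta b^{-1}=\Theta(\log n)$, and by Corollary~\ref{cor:inv1} the out-degree bound $\Delta(\overrightarrow G)\in\BigOh(\rho+\log n)$ follows \emph{once} we have established that Invariant~\ref{inv:degrees_additive} actually holds after each update to $G$). So the real content is twofold: (i) correctness, i.e.\ each call to \textnormal{Insert}/\textnormal{Delete} restores Invariant~\ref{inv:degrees_modified} (which, via Lemma~\ref{lemma:inv_modified_implies_normal}, gives Invariant~\ref{inv:degrees_additive}), and (ii) the running-time bound $\BigOh((\rho+\log n)\log n\log\rho)$ per update.

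For correctness I would argue by induction on the recursion depth. When \textnormal{Insert}$(\overrightarrow{uv})$ is called we have $d^+(u)\le d^+(v)$ before the call; after \textnormal{Add} the only vertex whose out-degree increased is $u$. If the ``if'' branch triggers, we find the out-neighbour $x$ of minimum out-degree, remove $\overrightarrow{ux}$ (restoring $d^+(u)$ to its pre-call value, which satisfied the invariant) and recurse on \textnormal{Insert}$(\overrightarrow{xu})$; the precondition $d^+(x)\le d^+(u)$ holds because the ``if'' condition failed to be an equality in the right direction — more precisely because $x$ had minimum out-degree among $N^+(u)$ and the test $d^+(u)>\max\{(1+\lambda)d^+(x)+\theta,\lfloor b/4\rfloor\}$ forces $d^+(x)<d^+(u)$. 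The key monotonicity claim is that along the recursion the out-degree of the vertex being ``pushed'' strictly (indeed geometrically) decreases: if $d^+(u)>(1+\lambda)d^+(x)+1$ then $d^+(x)\le (d^+(u)-1)/(1+\lambda)$, so after a constant number of steps relative to $\log_{1+\lambda}\Delta$ the degree drops below $\lfloor b/4\rfloor$ and the recursion must stop because the ``if'' test can no longer fire. This bounds the recursion depth (the ``chain length'') by $\BigOh(\log_{1+\lambda}\Delta)=\BigOh(\eta^{-1}b\log\Delta)=\BigOh(\log n\log\rho)$ using $\Delta\in\BigOh(\rho+\log n)$ and $\lambda=\eta b^{-1}/64$. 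When the ``if'' branch does not trigger, the local condition at $u$ against all its out-neighbours already holds, and we push the updated value $d^+(u)$ into the buckets $B_\cdot(u)$ stored at each out-neighbour so that their data structures stay correct; one must check that this bucket-update does not itself violate any invariant (it does not, since no degree changed). The deletion side is symmetric, with ``minimum out-degree out-neighbour'' replaced by ``maximum out-degree in-neighbour'' (retrieved in $\BigOh(1)$ via the pointer to the top bucket), and with degrees now geometrically \emph{increasing} along the chain, again capped by $\Delta$.

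For the running time: each level of the recursion does $\BigOh(1)$ pointer work for \textnormal{Add}/\textnormal{Remove} and for locating $x$ (argmin via a maintained structure, or max-bucket via the stored pointer), plus — on the terminating level, or whenever the ``if'' fails — a loop over all $w\in N^+(u)$ updating $d^+(u)$ in the bucket list of $N^-(w)$, each such update costing $\BigOh(\log_{1+\lambda}\Delta)=\BigOh(\log n\log\rho)$ amortised-free time to move $u$ between buckets (since bucket indices are $\lfloor\log_{1+\lambda}d^+(w)\rfloor$, and $|N^+(u)|\le d^+(u)\le\Delta$). Multiplying: $\BigOh(\Delta)$ neighbours $\times\ \BigOh(\log n\log\rho)$ per bucket move, over $\BigOh(\log n\log\rho)$ recursion levels — but crucially the ``inform all out-neighbours'' step happens $\BigOh(1)$ times per recursion level in the sense that is charged once per level, giving $\BigOh(\Delta\cdot\log_{1+\lambda}\Delta\cdot\mathrm{depth})$; substituting $\Delta=\BigOh(\rho+\log n)$ and $\mathrm{depth}=\BigOh(\log n\log\rho)$ yields $\BigOh((\rho+\log n)\cdot\log n\log\rho\cdot\log n\log\rho)$ — here one has to be slightly more careful than a naive count, observing (as in Kopelowitz et al.) that the ``update all out-neighbours'' step is executed at most once along the whole chain (only at its terminating end, or when the test fails at the first vertex), so the true cost is $\BigOh(\Delta\cdot\log_{1+\lambda}\Delta)+\BigOh(\mathrm{depth})=\BigOh((\rho+\log n)\log n\log\rho)$ per update to $G^b$, and $b=1$ means one update to $G$ is one update to $G^b$.

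The main obstacle I anticipate is pinning down precisely \emph{when} along the recursion the local condition may be transiently violated and showing the chain terminates with the right bound — i.e.\ the bookkeeping that Invariant~\ref{inv:degrees_modified} (not Invariant~\ref{inv:degrees_additive}) is the correct intermediate invariant maintained edge-by-edge, that the $\lfloor b/4\rfloor$ vs.\ $\lfloor b/2\rfloor$ thresholds line up so that Lemma~\ref{lemma:inv_modified_implies_normal} applies at the end, and that the $\theta=1$ additive slack genuinely lets a freshly inserted edge $(u,v)$ always be oriented (toward the lower-degree endpoint) without creating an un-fixable violation. A secondary subtlety is the charging argument for the out-neighbour-notification loop, ensuring it is not multiplied by the full recursion depth; this is exactly the place where the argument must be done carefully rather than by a loose product bound.
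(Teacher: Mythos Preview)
Your high-level plan matches the paper's: establish Invariant~\ref{inv:degrees_modified} at termination (then invoke Lemma~\ref{lemma:inv_modified_implies_normal}), bound the chain length via the geometric change in out-degrees, and combine with the per-level cost. However, two of your cost estimates are wrong in ways that happen to cancel, so the right final bound emerges for the wrong reasons.

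First, locating $x=\argmin_{w\in N^+(u)}d^+(w)$ in \textnormal{Insert} is not $\BigOh(1)$: item~(b) of the data structure stores $N^+(u)$ in arbitrary order, so the argmin requires a linear scan of $\BigOh(d^+(u))$ time, and this cost is incurred at \emph{every} level of the recursion, not only the terminal one. The paper's accounting for \textnormal{Insert} is therefore $\BigOh(\rho+\log n)$ per level times depth $\BigOh(\lambda^{-1}\log\rho)$, which gives the claimed bound directly. Second, the bucket move at the terminal vertex $x_f$ costs only $\BigOh(1)$ per out-neighbour, not $\BigOh(\log_{1+\lambda}\Delta)$: since $d^+(x_f)$ changes by exactly~$1$, its rank $\lfloor\log_{1+\lambda}d^+(x_f)\rfloor$ changes by at most~$1$, so $x_f$ moves to an adjacent bucket in each $N^-(w)$. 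Your terminal cost of $\BigOh(\Delta\cdot\log_{1+\lambda}\Delta)$ happens to equal the true total $\BigOh(\Delta\cdot\text{depth})$, which is why your final bound is numerically correct.

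A smaller issue: your recursion-termination argument for \textnormal{Insert} appeals to the degree dropping below $\lfloor b/4\rfloor$, but with $b=1$ that threshold is~$0$ and plays no role. The paper (arguing on the Delete side) instead splits the chain into a low-degree segment of length $\BigOh(\log n)=\BigOh(\lambda^{-1})$, where degrees change only additively by at least~$1$ per step, and a high-degree segment where the two-step multiplicative bound $d^+(x_{i+1})_t>(1+\lambda)\,d^+(x_{i-1})_t$ limits the length to $\BigOh(\lambda^{-1}\log\rho)$; the key observation is that if the high-degree segment is nonempty then $\rho\in\Omega(\log n)$, which is how $\log\rho$ (rather than $\log\Delta$) appears in the depth bound.
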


\begin{proof}
Invariant~\ref{inv:degrees_additive} demands that $\forall \overrightarrow{uv}$ we maintain $d^+(u) < (1 + \eta b^{-1}) d^+(v) + 2$.
Corollary~\ref{cor:inv1} implies that if after time $t$ we maintain Invariant~\ref{inv:degrees_additive}, then we obtain the desired upper bound on all $d^+(u)_{t+1}$.  
What remains is to show that our algorithms maintain Invariant~\ref{inv:degrees_additive} in the desired runtime. We do this in three steps as we show: 

\begin{description}[noitemsep]
    \item[Correctness:]  Our algorithms maintain Invariant~\ref{inv:degrees_additive} in $G^b$,
        \item[Recursive depth:] Algorithms~\ref{alg:insertion}+\ref{alg:deletion} have a recursive depth of $\BigOh\paren*{\lambda^{-1} \cdot \log \rho}$, and 
    \item[Time:]  Algorithms~\ref{alg:insertion}+\ref{alg:deletion} spend $\BigOh(\rho + \log n)$ time before entering the recursion.
\end{description}

We prove these three properties for deletions only. 
Invoking Delete$( \overrightarrow{x_0 v})$ may cause us to recursively invoke Delete$(\overrightarrow{x_{i +1 } x_{i} })$: flipping a backward chain in $G^b$ from $x_0$.  
Only the final vertex $x_f$ in this chain decreases its out-degree once we terminate. 
For insertions we flip a forward chain $\overrightarrow{x_i x_{i+1}}$, which is handled symmetrically.

\subparagraph{Correctness.}
We show that we maintain Invariant~\ref{inv:degrees_additive}.
Suppose that we terminate at a vertex $x_f$. 
Then after our sequence of flips, the vertex $x_f$ is the only vertex that changed its out-degree (i.e. only for $x_f$: $d^+(x_f)_{t + 1} = d^+(x_f)_{t} - 1$). 
Because our algorithm terminated and $b = 1$, for $x \gets $ First( Max( Buckets($N^-(x_f)$))), $d^+(x)_t \leq  \max \{ (1 + \lambda) (d^+(x_f)_t - 1) + \theta, \floor{\frac{b}{4}} \}$. 
For all $w \in N^-(x_f)$: $d^+(w)_{t+1} \leq (1 + \lambda) d^+(x)_{t+1}$.
It follows $d^+(w)_{t+1} \leq \max\{  (1 + \lambda)^2 d^+(x_f)_{t+1} + 2, \floor{\frac{b}{2}} \}$.
We may apply Lemma~\ref{lemma:inv_modified_implies_normal} to conclude that,  once terminated, we satisfy Invariant~\ref{inv:degrees_additive}.

\subparagraph{Recursive depth. }
What remains is to upper bound the recursive depth of our algorithm, proving termination.
Our code implies that for all $i$: $d^+(x_{i+1})_t > (1 + \lambda) ( d^+(x_{i})_t - 1) + \theta$. Thus $d^+(x_{i+1})_t \geq  d^+(x_{i})_t + 1$. 
Let $x_s$ be the last vertex in the chain where $d^+(s)_t \in \BigOh( \log n)$. 
 The fact that out-degrees are integer and strictly increasing along the backward chain, implies that there are $ \BigOh(\log n)$ vertices preceding $s$. 
If $f = s$, the recursive depth is $ \BigOh(\log n) = \BigOh(\lambda^{-1})$ per definition.

Otherwise, we note that before this sequence of updates, we satisfied Invariant~\ref{inv:degrees_additive} and thus (by Corollary~\ref{cor:inv1}) know that for all $i$: $d^+(x_i)_t \in  \BigOh(\rho + \log n)$. 
If there exist vertices $x_i$ with $i > s$, then $\rho \in \Omega(\log n)$ and thus  $ \BigOh(\rho + \log n) =  \BigOh(\rho)$. 
Now we consider all $i > s$. 
We know that $d^+(x_{i+1})_t > (1 + \lambda) ( d^+(x_{i})_t - 1)$.
Thus, (using $d^+(x_{i})_t \geq  d^+(x_{i-1})_t + 1$) we get that: $d^+(x_{i+1})_t >  (1 + \lambda) d^+(x_{i-1})_t$. 
It follows that there are at most $  \log_{ (1 + \lambda)} \BigOh(\rho) =  \BigOh( \lambda^{-1} \log \rho)$ vertices in the chain of flipped edges: which upper bounds our recursive depth.

\subparagraph{Time spent.}
Whenever we insert a vertex $v \in N^{-}(u)$, it is either because we added the edge $(u, v)$ to $G$ (occurring once) or, because we flipped an edge $\overrightarrow{uv}$.
In the first case, we may afford spending
$\BigOh(\log_{(1+\lambda)}d^+(v)) = \BigOh(\lambda^{-1}\log(b\rho + \log n)) = \BigOh(\lambda^{-1}\log n)$ time searching through all buckets for the bucket containing $v$. 
In the latter case, for Insert we have $\max\set*{(1+\lambda)d^+(u)_t,\floor{\frac{b}{4}}}<d^+(v)_t+1\leq\max\set*{(1+\lambda)d^+(u)_t,\floor{\frac{b}{4}}}+1$, and for Delete we have $\max\set*{(1+\lambda)(d^+(u)_t-1),\floor{\frac{b}{4}}}<d^+(v)_t<(1+\lambda)\max\set*{(1+\lambda)(d^+(u)_t-1),\floor{\frac{b}{4}}}$.
Using the pointer from $u$ to the bucket $B_i(u)$ where $i=\floor*{\log_{(1+\lambda)}\max\set*{(1+\lambda)d^+(u),\floor*{\frac{b}{4}}}}$, we may insert $v$ into the correct bucket in $\BigOh(1)$ time. 
For each call of Delete$(\overrightarrow{x_{i+1} x_i})$, we spend $ \BigOh(1)$ time retrieving the vertex $x$ before we recurse. 
For the vertex $x_f$ at the end of the recursion, we consider all $ \BigOh(\rho + \log n)$ vertices $w \in N^+(x_f)$. We update the bucket that $x_f$ is in. 
Denote by $r(x_f)_{t+1} = \floor{ \log_{(1 + \lambda)} d^+(x_f)_t}$ the \emph{rank} of $f$ (i.e., the index of each bucket contained $x_f$ at time $t$). 
The rank of $x_f$ changes by at most $1$, hence we may update our data structure in $ \BigOh(d^+(f)_{t+1})$ time. 

For each call of Insert$(\overrightarrow{x_{i+1} x_i})$, we spend $ \BigOh(d^+(x_{i+1})_t) =  \BigOh(\rho + \log n)$ time retrieving the vertex $x_{i+2}$ before we recurse. Updating the data structure again takes $ \BigOh(1)$ time per updated element. 
It follows that the total time spent adding or removing an arc in $G^b$ is $ \BigOh( (\rho + \log n) \cdot \lambda^{-1} \log \rho) =  \BigOh((\rho + \log n) \log n \log \rho)$. 
Since $b = 1$, the theorem follows. 
\end{proof}

\subsection{Maintaining Invariant~\ref{inv:degrees}.}
We show that by setting $\theta = 0$ (and choosing $\eta$ and $b$ carefully) we maintain Invariant~\ref{inv:degrees}:

\begin{theorem}
\label{thm:invariant0}
Let $G$ be a dynamic graph and $\rho$ be the density of $G$ at time $t$.
We can choose our variables $\theta = 0$,  $\eta=3$, and $b \in \Theta(\log n), b\geq 2$ to maintain an out-orientation $\overrightarrow{G}^b$ in $\BigOh\paren*{ b \cdot \rho \cdot  \lambda^{-1}  \cdot \log \rho } = \BigOh( \rho \cdot \log^2 n \log \rho)$ time per update in $G$, maintaining Invariant~\ref{inv:degrees} for $\overrightarrow{G}^b$ with:
\begin{itemize}[noitemsep, nolistsep]
    \item $\forall v$, the out-degree $d^+(v)_{t+1}$ in $\overrightarrow{G}^b$ is at most $\BigOh( b  \cdot \rho)$, and
    \item $\forall u$, the out-degree of $u$ in $\overrightarrow{G}$ is at most $\BigOh(\rho)$.
\end{itemize}
\end{theorem}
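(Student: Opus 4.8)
The plan is to mirror the proof of Theorem~\ref{thm:invariant1} almost verbatim, tracking the changes caused by the two parameter choices $\theta = 0$ and $b \in \Theta(\log n)$ rather than $\theta = 1$ and $b = 1$. As before, the argument splits into \textbf{Correctness}, \textbf{Recursive depth}, and \textbf{Time spent}, and it suffices to treat deletions, since insertions flip a forward chain and are handled symmetrically. The governing upper bound on out-degrees will now come from Corollary~\ref{cor:inv0} instead of Corollary~\ref{cor:inv1}: we need $\eta b^{-1} \in \BigOh(1/\log n)$, which holds for $\eta = 3$ and $b \in \Theta(\log n)$, and this gives $\Delta(\overrightarrow{G}^b) \in \BigOh(b\rho)$ and $\Delta(\overrightarrow{G}) \in \BigOh(\rho)$ once Invariant~\ref{inv:degrees} is maintained.

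For \textbf{Correctness}, I would argue exactly as in Theorem~\ref{thm:invariant1}: after a deletion chain terminates at $x_f$, only $x_f$ changed its out-degree. The termination condition of Algorithm~\ref{alg:deletion} with $\theta = 0$ gives, for $x = $ First(Max(Buckets($N^-(x_f)$))), that $d^+(x)_t \le \max\{(1+\lambda)(d^+(x_f)_t - 1), \floor{b/4}\}$, and since every $w \in N^-(x_f)$ sits in a bucket no higher than $x$'s, $d^+(w)_{t+1} \le (1+\lambda)d^+(x)_{t+1} \le \max\{(1+\lambda)^2 d^+(x_f)_{t+1}, \floor{b/2}\} \le \max\{(1+\eta b^{-1})d^+(x_f)_{t+1}, \floor{b/2}\}$, using $(1+\lambda)^5 \le 1 + \eta b^{-1}$. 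This is Invariant~\ref{inv:degrees_modified} with $\theta = 0$, so Lemma~\ref{lemma:inv_modified_implies_normal} yields Invariant~\ref{inv:degrees}. One must also check that the first update to $G^b$ after an update to $G$ can always be performed while maintaining Invariant~\ref{inv:degrees_modified}, which is exactly the condition $0 < b/\eta \le \floor{b/2}$ noted before Invariant~\ref{inv:degrees_modified}, satisfied here since $b \ge 2$ and $\eta = 3 \ge \frac{2b}{b-1}$ for $b \ge 3$ (and one handles $b=2$ directly, or simply takes $b$ to be the even number in $\Theta(\log n)$).

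For \textbf{Recursive depth}, the code gives $d^+(x_{i+1})_t > (1+\lambda)(d^+(x_i)_t - 1)$ along the chain, but with $\theta = 0$ this no longer immediately forces strict integer increase once degrees are small — this is the main subtlety versus the $\theta=1$ case. The fix: let $x_s$ be the last chain vertex with $d^+(x_s)_t \le \floor{b/4}$; for such vertices the max in the termination test is $\floor{b/4}$, so the prefix up to $x_s$ has length $\BigOh(b) = \BigOh(\log n) = \BigOh(\lambda^{-1})$ (there are only $\floor{b/4}+1$ possible degree values, and once we are above $\floor{b/4}$ we stay above), and beyond $x_s$ we have $d^+(x_{i+1})_t > (1+\lambda)d^+(x_i)_t$ outright, which combined with $d^+(x_i)_t \in \BigOh(b\rho)$ from Corollary~\ref{cor:inv0} (applied to the state at time $t$, before the update) gives at most $\log_{1+\lambda}\BigOh(b\rho) = \BigOh(\lambda^{-1}\log(b\rho)) = \BigOh(\lambda^{-1}\log\rho)$ further vertices, using $\log b \in \BigOh(\log\log n)$ absorbed into the $\log\rho$ bound or handled via $\rho \ge 1$. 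Hence the recursive depth is $\BigOh(\lambda^{-1}\log\rho) = \BigOh(\log n \log\rho)$.

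For \textbf{Time spent}, each recursive call does $\BigOh(1)$ bucket retrieval plus, at the terminal vertex $x_f$, a scan of $N^+(x_f)$ of size $\BigOh(b\rho)$ with an $\BigOh(1)$-per-element bucket update since $x_f$'s rank changes by at most one; and at each intermediate Insert-style step we pay $\BigOh(d^+(x_{i+1})_t) = \BigOh(b\rho)$ to find the argmin neighbour. Reinserting a neighbour into the correct bucket costs $\BigOh(1)$ via $u$'s maintained pointer to $B_i(u)$, because on a flip the involved out-degrees differ from the pointer's index by at most a $(1+\lambda)$ factor, exactly as in Theorem~\ref{thm:invariant1}; the one-time cost when an edge of $G$ is genuinely inserted is $\BigOh(\lambda^{-1}\log(b\rho)) = \BigOh(\log^2 n)$. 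Multiplying $\BigOh(b\rho)$ per level by $\BigOh(\lambda^{-1}\log\rho)$ levels gives $\BigOh(b\rho \cdot \lambda^{-1}\log\rho) = \BigOh(\rho \log^2 n \log\rho)$ per update to $G$ (the outer loop over the $b$ copies contributes another factor, already absorbed since each copy-update triggers one chain; more carefully, the total is $b$ chains each of cost $\BigOh(b\rho \cdot \lambda^{-1}\log\rho / b)$ — I would state the bound as $\BigOh(b \cdot \rho \cdot \lambda^{-1}\log\rho)$ and expand $\lambda^{-1} = 64 b \eta^{-1} \in \Theta(\log n)$, $b \in \Theta(\log n)$ to get $\BigOh(\rho \log^2 n \log\rho)$). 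The two degree bounds in the theorem statement are then immediate from Corollary~\ref{cor:inv0}. I expect the \textbf{Recursive depth} step — specifically the careful handling of the small-degree regime where $\theta = 0$ removes the clean strict-increase argument — to be the main obstacle, and the threshold $\floor{b/4}$ built into the algorithm is precisely what makes it go through.
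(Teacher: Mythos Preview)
Your approach mirrors the paper's closely: same three-part structure (correctness via Invariant~\ref{inv:degrees_modified} and Lemma~\ref{lemma:inv_modified_implies_normal}, recursive depth, time), same appeal to Corollary~\ref{cor:inv0}. Two points in the recursive-depth step need tightening, though.

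First, your claim that ``beyond $x_s$ we have $d^+(x_{i+1})_t > (1+\lambda)d^+(x_i)_t$ outright'' is not what the recursion condition gives: with $\theta=0$ it only says $d^+(x_{i+1})_t > (1+\lambda)(d^+(x_i)_t - 1)$, and the $-1$ does not vanish merely because $d^+(x_i)_t > \floor{b/4}$. The paper handles this exactly as in Theorem~\ref{thm:invariant1}, via the two-step bound $d^+(x_{i+1})_t > (1+\lambda)(d^+(x_i)_t-1) \ge (1+\lambda)d^+(x_{i-1})_t$ once one has $d^+(x_i)_t \ge d^+(x_{i-1})_t + 1$.

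Second, you overcomplicate the small-degree prefix. The recursion condition already forces $d^+(x_{i+1})_t > \floor{b/4}$ for every $i \ge 0$, so $d^+(x_1)_t \ge \floor{b/4}+1$ immediately; the paper makes exactly this observation and concludes $s \le 2$. Your alternative argument that the prefix has length $\BigOh(b)$ because ``there are only $\floor{b/4}+1$ possible degree values'' presupposes strict increase of degrees along the prefix, which you have not shown (and which is precisely the subtlety you flagged as the main obstacle).

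Your time accounting is also a bit tangled where you try to absorb the factor of $b$ from the outer loop over copies. The paper's own bookkeeping here is terse (it records the out-degree in $\overrightarrow{G}^b$ as $\BigOh(\rho)$ rather than $\BigOh(b\rho)$ when transferring the argument from Theorem~\ref{thm:invariant1}), so matching the stated bound $\BigOh(b\rho\lambda^{-1}\log\rho)$ literally requires reading it the same way.
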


\begin{proof}
We show that at all times we maintain Invariant~\ref{inv:degrees_modified} for $\theta=0$.
Corollary~\ref{cor:inv0}, and our choice of variables, implies the desired upper bound on the out-degree of each vertex. 
We again consider:

\begin{description}[noitemsep]
    \item[Correctness:]  Our algorithms maintain Invariant~\ref{inv:degrees_modified} in $G^b$,
        \item[Recursive depth:] Algorithms~\ref{alg:insertion}+\ref{alg:deletion} have a recursive depth of $\BigOh\paren*{\lambda^{-1} \cdot \log \rho}$, and 
    \item[Time:]  Algorithms~\ref{alg:insertion}+\ref{alg:deletion} spend $\BigOh(\rho)$ time before entering the recursion.
\end{description}

We show the proof for deletions. 
Again, the proof for insertions is symmetrical (flipping a forward chain). 
Invoking Delete$( \overrightarrow{x_0 v})$ may cause us to recursively invoke Delete$(\overrightarrow{x_{i +1 } x_{i} })$: flipping a backward chain in $G^b$ from $x_0$.  
Only the last vertex $x_f$ in this chain decreases its out-degree once we terminate.

\subparagraph{Correctness.}
Suppose that we terminate at a vertex $x_f$. 
Then after our sequence of flips, only the vertex $x_f$ changed its out-degree (i.e. only for $x_f$: $d^+(x_f)_{t+1} = d^+(x_f)_{t} - 1$). 
Because our algorithm terminated, for $x \gets $ First( Max( Buckets($N^-(x_f)$))) it must be that:  
$d^+(x)_t \leq  
\max \{ (1 + \lambda) (d^+(x_f)_{t} - 1 ) + \theta, \floor{ \frac{b}{4} }  \}.$ 
It follows that for all vertices $w \in N^-(x_f):$ $d^+(w)_{t} \leq (1 + \lambda) \max \{ (1 + \lambda) (d^+(x_f)_{t} - 1 ) + \theta, \floor{ \frac{b}{4} }  \}$. 
Substituting $d^+(x_f)_t$ for $d^+(x_f)_{t+1}$ and using that by our choice of parameters, $(1+\lambda)^2\leq 1+\eta b^{-1}\leq 2$ now gives that for all $w \in N^-(x_f):$
$d^+(w)_{t} \leq  \max \{ (1 + \eta b^{-1}) d^+(x_f)_{t+1}  + 2\theta, \floor{ \frac{b}{2} }  \}$. 
By Lemma~\ref{lemma:inv_modified_implies_normal}, this implies that we maintain Invariant~\ref{inv:degrees}.

\subparagraph{Recursive depth.}
What remains is to upper bound the recursive depth of our algorithm.
Let $x_s$ be the first vertex in the chain where $d^+(x_s)_t \geq\floor{\frac{b}{4}}+2$. 
Note that per definition of our algorithm, $d^+(x_1)_t\geq\floor{\frac{b}{4}}+1$.
Thus, for all $i \geq 1$:  $ d^+(x_{i+1})_{t} \geq d^+(x_i)_t + 1$ and $s \leq 2$.  
We now make a case distinction. 
If $f \in \BigOh(\lambda^{-1})$ then per definition, the recursive depth is $\BigOh(\lambda^{-1})$. 

Otherwise, for all $i > 2$ it must be that $d^+(x_{i+1})_t > (1 + \lambda) ( d^+(x_{i}) - 1) \geq  (1 + \lambda) d^+(x_{i-1})$. 
Before this sequence of updates, we satisfied Invariant~\ref{inv:degrees}. Thus, by Corollary~\ref{cor:inv0}, for all $i$, $d^+(x_i)_t \in \BigOh(\rho)$.

\subparagraph{Time spent.}
The proof upper bounding the time spent is identical to that of Theorem~\ref{thm:invariant0}.
The one exception being, that the out-degree $d^+(u)$ for all vertices $u$ is at most $\rho$. 
Thus, the running time per update in $G^b$ is $\BigOh(\rho  \cdot \lambda^{-1} \log \rho) =  \BigOh(\rho \log n \log \rho)$. 
Each update in $G$ triggers $\Theta(\log n)$ updates in $G^b$ and so the runtime follows. 

\end{proof}

\section{Improved worst case algorithms}
\label{app:improved-rank}
We adapt the algorithm of Section~\ref{sec:basic}, replacing the algorithms for inserting and deleting directed edges in $G^b$ to update our running time. 
We store  $u \in N^-(v)$ in buckets determined not by the actual out-degrees $d^+(u)$ but rather by an approximation of what we call the \emph{out-rank} $r(u) = \floor*{\log_{(1+\lambda)}d^+(u)}$.

\begin{definition}
    For each vertex $v$, for all vertices $u \in N^{-}(v)$, we define the \emph{perceived} out-rank $r_v(u)$ as some integer stored in $v$ for $u \in N^-(v)$ (which we show is at most $1$ removed from $r(u)$).
\end{definition}

\noindent

In this section, 
We maintain for all $u$:
\begin{enumerate}[(a), noitemsep]
    \item The \emph{exact} value $d^+(u)$ of the current orientation $\overrightarrow{G}^b$,
    \item The set $N^+(u)$ in a linked list and a pointer some current `position' in the linked list. 
    \item The set $N^-(u)$ in a doubly linked list of buckets $B_j(u)$ sorted by $j$ from high to low.
    Each bucket $B_j(u)$ contains, as a doubly linked list in arbitrary order, all $w \in N^-(u)$ where
    $r_u(w)=j$.
    The vertex $u$ has a pointer to the bucket $B_i(u)$ with $i = \floor*{ \log_{(1 + \lambda)} \max\set*{(1+\lambda)d^+(u),\floor*{\frac{b}{4}}}}$.
\end{enumerate}

Any update in $G$ invokes Algorithms~\ref{alg:insertion_in_G}+\ref{alg:deletion_in_G}.
These algorithms now  invoke Algorithm~\ref{alg:insertion_rank} or \ref{alg:deletion_rank} (instead of~\ref{alg:insertion} or \ref{alg:deletion}).
These two in turn invoke the normal add and remove functions (Algorithm~\ref{alg:add}+\ref{alg:remove}). 
Whenever we add a vertex $w$ to a set $N^-(u)$, we set $r_u(w) = r(w)$. And when we add a vertex $v$ to a set $N^+(u)$, we do so in the position immediately \emph{before} the current position, so it becomes the last one we visit when we round-robin over $N^+(u)$.

\noindent\begin{minipage}[t]{.5 \textwidth}
\null 
 \begin{algorithm2e}[H]
    \caption{Insert($\overrightarrow{uv}$)  }
    \label{alg:insertion_rank}
    \begin{algorithmic}
    \STATE Add($\overrightarrow{uv}$)
    \FOR{$x$ in next $\ceil{\frac{2}{\lambda}}$ neighbours in $N^+(u)$} 
      \IF{ 
      $
      d^+(u) > \mathrlap{\max\set*{(1 + \lambda) \cdot d^+(x) + \theta,\floor*{\frac{b}{4}}}}
      $
     }
      \STATE Remove$(\overrightarrow{ux})$
      \STATE \emph{Insert($\overrightarrow{xu}$)}
      \STATE \textsc{Break}
      \ENDIF
      \ENDFOR
    \FORALL {
      $x$ visited in the previous loop
    }
    \STATE $r_x(u) = r(u)$
    \STATE Move $u$ to bucket $B_{r_x(u)}$ in $N^-(x)$
    \ENDFOR

    \end{algorithmic}
  \end{algorithm2e}
\end{minipage}~%
\begin{minipage}[t]{.5\textwidth}
\null
 \begin{algorithm2e}[H]
    \caption{Delete($\overrightarrow{uv}$)}
    \label{alg:deletion_rank}
    \begin{algorithmic}
        \STATE Remove($\overrightarrow{uv}$)
      \STATE $x \gets \textnormal{First( Max( Bucket($N^-(u)$)))} $
      \IF{
      $
       d^+(x) > \mathrlap{\max\set*{(1 + \lambda) \cdot d^+(u)   +  \theta,\floor*{\frac{b}{4}}}}
      $
}
            \STATE Add$(\overrightarrow{ux})$
            \STATE \emph{Delete($\overrightarrow{xu}$})
      \ELSE
    \FOR {$w$ in next $\ceil{\frac{2}{\lambda}}$ neighbors in $N^+(u)$ }
    \STATE $r_w(u) = r(u)$
    \STATE Move $u$ to bucket $B_{r_w(u)}$ in $N^-(w)$
    \ENDFOR
      \ENDIF
    \end{algorithmic}
  \end{algorithm2e}
\end{minipage}

\paragraph{Overview of techniques.}
Note that after incrementing (or decrementing) $d^+(u)$, we flip an edge $\overrightarrow{ux}$ (or $\overrightarrow{xu}$) whenever the following conditions hold:
\begin{align*}
    d^+(u) &> \max\set*{(1 + \lambda)  \cdot d^+(x) + \theta, \floor*{\frac{b}{4}}} & \text{(for Insert)}
    \\
    d^+(x) &> \max\set*{(1 + \lambda) \cdot d^+(u) + \theta, \floor*{\frac{b}{4}}} & \text{(for Delete)}.    
\end{align*}
These checks are the same as in Section~\ref{sec:basic} (and Section~\ref{sec:amortized} for deletions). 
As a result, the recursive depth of our algorithm is identical to that of Section~\ref{sec:basic}. 

The big difference with Section~\ref{sec:basic}, is that during insertions we do not loop over all $x\in N^+(u)$ each time (as that would be too expensive). Instead, we do a round robin scheme where we rely on the fact that if we recently checked the condition for edge $\overrightarrow{ux}$ without flipping it, then we need to add many more outgoing edges from $u$ before it violates the actual Invariant~\ref{inv:degrees_modified}. 
By checking $\ceil{\frac{2}{\lambda}}$ edges each time in round-robin order we are guaranteed to revisit $\overrightarrow{ux}$ before that happens.  

The second difference with Section~\ref{sec:basic}, is that for each vertex $u$ we cannot store a data structure on the in-neighbors of $u$ that uses their actual out-degree. 
Instead, we bucket the vertices $x \in N^-(u)$ using their out-degree at the time of adding the arc $\overrightarrow{xu}$. 
The location of $x$ in this data structure is thereby its \emph{perceived} rank $r_u(x)$.
Whenever we insert or delete an arc in $G^b$, we get a recursive call to our insertion and deletion functions that flips a chain of edges.
Only the final vertex $x_f$ on this chain changes their actual out-degree. 
Hence, for this final vertex $x_f$, we perform round robin over the $\ceil*{\frac{2}{\lambda}}$ next $w \in N^+(x_f)$ to update the perceived rank of $x_f$ in $N^-(w)$. Again, we can not afford to update all of them.

Recall that we parametrized time according to Definition~\ref{def:time}. We show:

\begin{lemma}\label{lemma:rank_bounded_change}
    Let $r_v(u)$ get updated by an Insert or Delete at time $s$.
    Let the next update to $r_v(u)$ occur during an Insert or Delete at time $t$. Then
    $\abs{d^+(u)_t-d^+(u)_s} \leq \tfrac{\lambda}{2}d^+(u)_s$ and $\abs{r(u)_t-r(u)_s}\leq 1$.
\end{lemma}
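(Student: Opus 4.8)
The plan is to track how $d^+(u)$ can change in the time interval $(s,t)$ and show that the round-robin scheme forces $r_v(u)$ to be refreshed before $d^+(u)$ has moved by more than a $\tfrac{\lambda}{2}$-fraction. The key observation is that $r_v(u)$ is updated exactly when the edge $\overrightarrow{uv}$ (or $\overrightarrow{vu}$) is visited in the round-robin loop over $N^+(u)$ — either at the end of an \texttt{Insert}$(\overrightarrow{ux})$ for $x$ among the next $\ceil{2/\lambda}$ out-neighbours, or in the \texttt{Delete} branch that does the analogous round-robin refresh. Between two consecutive visits to $\overrightarrow{uv}$, the pointer into $N^+(u)$ makes at most one full cycle through the (at most $d^+(u)$, really $|N^+(u)|$) out-neighbours, advancing $\ceil{2/\lambda}$ positions each time $d^+(u)$ is incremented or a relevant event at $u$ occurs; I would make precise that between visit $s$ and visit $t$, the number of \emph{increments} to $d^+(u)$ is at most $|N^+(u)|/\ceil{2/\lambda} \le \tfrac{\lambda}{2}|N^+(u)| \le \tfrac{\lambda}{2}d^+(u)_s$, after accounting for the fact that newly added out-neighbours are inserted just before the current pointer position (so they are seen last, not skipped). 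Symmetrically, decrements of $d^+(u)$ can only happen when $u=x_f$ is the terminal vertex of a deletion chain, and each such decrement is accompanied by a round-robin refresh step in \texttt{Delete}; I would argue that this bounds the decrements in $(s,t)$ similarly, or more simply that a decrement to $d^+(u)$ immediately triggers the refresh loop, so $t$ occurs at that decrement and only finitely many can accumulate — here one must be careful that the \texttt{Delete} refresh visits $\overrightarrow{uv}$ within $\ceil{2/\lambda}$ steps as well.

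Concretely, the steps I would carry out are: (1) Identify precisely the code locations where $r_v(u)$ is set, and observe that each such location is guarded by advancing the round-robin pointer over $N^+(u)$ by $\ceil{2/\lambda}$ positions, so that $\overrightarrow{uv}$ is revisited after at most $\ceil{|N^+(u)|\lambda/2}$ "$u$-events" (increments, or terminal-vertex decrements). (2) Bound the net and absolute change in $d^+(u)$ over that many events by $\tfrac{\lambda}{2}d^+(u)_s$; this uses $|N^+(u)| \le d^+(u)$ (each out-neighbour contributes at least one out-edge) and the fact that $d^+(u)$ does not grow so fast within the window that $|N^+(u)|$ balloons — i.e. I would note $|N^+(u)_\tau| \le d^+(u)_\tau \le (1+\tfrac{\lambda}{2})d^+(u)_s$ for $\tau$ in the window, folding this into the bound, perhaps with a slightly smaller constant like $\ceil{4/\lambda}$ if the paper's constants demand it (but the statement says $\ceil{2/\lambda}$, so the $(1+\lambda/2)$ slack must be absorbed into the $\le$, which works since $(1+\lambda/2)/\lceil 2/\lambda\rceil \le \lambda/2 \cdot \text{something} \le \lambda/2$ for small $\lambda$). (3) Conclude $|d^+(u)_t - d^+(u)_s| \le \tfrac{\lambda}{2}d^+(u)_s$, hence $d^+(u)_t \in [(1-\tfrac{\lambda}{2})d^+(u)_s,\ (1+\tfrac{\lambda}{2})d^+(u)_s]$. (4) For the rank bound, recall $r(u) = \floor{\log_{1+\lambda}d^+(u)}$ and $(1+\lambda)^5 \le 1+\eta b^{-1} \le 2$ from the setup; since $d^+(u)_t/d^+(u)_s \in [1-\tfrac{\lambda}{2}, 1+\tfrac{\lambda}{2}] \subseteq [(1+\lambda)^{-1}, 1+\lambda]$ (using $\lambda$ small, as $1+\lambda \ge 1+\tfrac{\lambda}{2}$ and $(1+\lambda)^{-1} = 1-\lambda+O(\lambda^2) \le 1-\tfrac{\lambda}{2}$ for $\lambda \le 1$), the floors of the logarithms differ by at most $1$, giving $|r(u)_t - r(u)_s| \le 1$.

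The main obstacle I anticipate is step (2): correctly accounting for the round-robin pointer's interaction with dynamic insertions into $N^+(u)$ and with the asymmetry between increments (which go through \texttt{Insert}'s loop) and decrements (which go through \texttt{Delete}'s terminal refresh). One has to verify that \emph{every} event that changes $d^+(u)$ also advances the round-robin pointer by $\ceil{2/\lambda}$ — in particular that a \texttt{Delete} that does flip an edge (taking the \texttt{if} branch) and therefore does \emph{not} run the refresh loop never changes $d^+(u)$ at $u$ itself (indeed it changes $d^+(x_f)$ at the chain's end, and $u$ is internal, so its out-degree is restored), and that a vertex's out-degree changing as an \emph{internal} chain vertex is a no-op net change. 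Pinning down this bookkeeping — that $d^+(u)$ changes net-by-one precisely when $u$ is a chain endpoint, and that each such endpoint-event is paired with $\ceil{2/\lambda}$ pointer advances — is the delicate part; once it is established, the arithmetic in steps (3)–(4) is routine.
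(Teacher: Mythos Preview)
Your proposal is correct and follows essentially the same approach as the paper. The one simplification you are missing is that the observation you already make in step (1)---new out-neighbours are inserted immediately before the current round-robin pointer---by itself eliminates the worry in step (2) about $|N^+(u)_\tau|$ growing: only out-neighbours of $u$ present at time $s$ can possibly be visited before $v$ is reached again, so the number of degree-changing events at $u$ in the window is at most $|N^+(u)_s|/\lceil 2/\lambda\rceil \le d^+(u)_s/\lceil 2/\lambda\rceil \le \tfrac{\lambda}{2}d^+(u)_s$ directly, with no need to track the evolving out-degree or adjust constants. The paper's proof is exactly this three-line argument, and your step (4) arithmetic matches theirs.
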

\begin{proof}
    Only out-neighbours to $u$ that exist at time $s$ can be visited by the round-robin procedure before $r_v(u)$ is updated again.
    Since we visit $\ceil{\frac{2}{\lambda}}$ of them per Insert or Delete that changes $d^+(u)$, we can do at most $d^+(u)_s / \ceil{\frac{2}{\lambda}} \leq \frac{\lambda}{2}d^+(u)_s$ Inserts or Deletes changing $d^+(u)$ before time $t$. 
    Thus, since $0< \lambda < 1$:
    \begin{align*}
        d^+(u)_t &\geq \paren*{1-\tfrac{\lambda}{2}}d^+(u)_s > \paren*{1+\lambda}^{-1}d^+(u)_s &\implies r(u)_t \geq r(u)_s-1 
        \\
        d^+(u)_t &\leq \paren*{1+\tfrac{\lambda}{2}}d^+(u)_s < \paren*{1+\lambda}d^+(u)_s &\implies r(u)_t \leq r(u)_s+1
        &
    \end{align*}
\end{proof}

\begin{lemma}\label{lemma:perceived_rank_error}
    For all edges $\overrightarrow{uv}$ at all steps during Insert or Delete, $\abs{r_v(u)-r(u)}\leq 1$.
\end{lemma}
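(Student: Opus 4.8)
\textbf{Proof plan for Lemma~\ref{lemma:perceived_rank_error}.}
The plan is to track the life-cycle of the quantity $r_v(u)$ for a fixed directed edge $\overrightarrow{uv}$ and argue that it is always within $1$ of the true rank $r(u)$. The perceived rank $r_v(u)$ is only ever \emph{set} at a moment when the code executes ``$r_v(u) \gets r(u)$'' --- this happens (i) when the arc $\overrightarrow{uv}$ is first created inside \textnormal{Add}, where by our convention we immediately set $r_v(u)=r(u)$, and (ii) whenever $u$ is one of the $\ceil{2/\lambda}$ round-robin neighbours visited during an \textnormal{Insert} or \textnormal{Delete} that terminated at $u$ (the ``else'' branch of Algorithm~\ref{alg:deletion_rank} or the update loop of Algorithm~\ref{alg:insertion_rank}). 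At each such setting step we obviously have $r_v(u)=r(u)$ exactly, so the claim holds at those steps. Between two consecutive setting steps, $r_v(u)$ is frozen while $r(u)$ may drift; the task is to bound that drift.

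First I would formalize that $d^+(u)$ can only change as a result of an \textnormal{Add} or \textnormal{Remove} call on some arc out of $u$, and that in the round-robin schemes of Algorithms~\ref{alg:insertion_rank}+\ref{alg:deletion_rank} the vertex $u$ advances its round-robin pointer over $N^+(u)$ by exactly $\ceil{2/\lambda}$ positions each time its out-degree changes (once the chain terminates at $u$; the convention that newly added out-neighbours are inserted just before the pointer guarantees they are the \emph{last} to be revisited and so cannot be ``skipped over'' without first being counted). Hence between the setting step at time $s$ and the next setting step at time $t$, the number of out-degree changes at $u$ is at most $d^+(u)_s / \ceil{2/\lambda} \le \tfrac{\lambda}{2} d^+(u)_s$ --- this is exactly the content of Lemma~\ref{lemma:rank_bounded_change}, which also concludes $|r(u)_t - r(u)_s| \le 1$. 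So at the \emph{endpoints} $s$ and $t$ of a frozen interval we have $|r_v(u) - r(u)| = |r(u)_s - r(u)_t| \le 1$.

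The remaining point --- and the one I expect to be the only real subtlety --- is that the lemma asserts the bound ``at all steps during Insert or Delete'', i.e. also at intermediate moments \emph{inside} a frozen interval, not just at its endpoints $s,t$. For this I would strengthen the Lemma~\ref{lemma:rank_bounded_change} bookkeeping slightly: since at most $\tfrac{\lambda}{2}d^+(u)_s$ out-degree changes occur strictly between $s$ and $t$, at \emph{any} intermediate time $t'$ with $s \le t' \le t$ the number of changes so far is also at most $\tfrac{\lambda}{2}d^+(u)_s$, so the two displayed inequalities in the proof of Lemma~\ref{lemma:rank_bounded_change} hold verbatim with $t$ replaced by $t'$, giving $(1+\lambda)^{-1}d^+(u)_s < d^+(u)_{t'} < (1+\lambda) d^+(u)_s$ and therefore $|r(u)_{t'} - r(u)_s| \le 1$. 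Combined with $r_v(u) = r(u)_s$ throughout $[s,t)$, this yields $|r_v(u) - r(u)_{t'}| \le 1$ at every step, as claimed. I would then note the edge case that before the very first setting step the arc $\overrightarrow{uv}$ does not exist, so there is nothing to check, and that the base case is the \textnormal{Add} convention which sets $r_v(u) = r(u)$ exactly. The argument is essentially a rephrasing of Lemma~\ref{lemma:rank_bounded_change} with ``next update'' read as ``any time up to the next update,'' so no new machinery is needed; the care is entirely in making sure the round-robin pointer accounting covers newly inserted out-neighbours, which the insertion-before-the-pointer convention handles.
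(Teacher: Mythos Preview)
Your proposal is correct and takes essentially the same approach as the paper: both arguments reduce the claim directly to Lemma~\ref{lemma:rank_bounded_change}, using that $r_v(u)$ is reset to $r(u)$ at each update and that the true rank drifts by at most~$1$ between consecutive updates. Your write-up is in fact more careful than the paper's one-line proof in that you explicitly extend the drift bound from the endpoint $t$ to every intermediate step $t'\in[s,t]$, which is needed for the ``at all steps'' phrasing; this is a straightforward strengthening of the counting in Lemma~\ref{lemma:rank_bounded_change} and introduces no new ideas.
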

\begin{proof}
    Follows trivially from Lemma~\ref{lemma:rank_bounded_change} by the fact that each time it gets updated the true value has changed by at most $1$.
\end{proof}

We now apply an argument that we have applied in previous sections, introducing a bit more slack than previously:

\begin{lemma}\label{lemma:delete_rank}
   During a Delete$(\overrightarrow{uv})$ at time $t$, let $x \gets $ First( Max($N^-(u)$)) and 
    \[
    d^+(x)_t \leq \max\set*{\paren*{1+\lambda}(d^+(u)_t-1)+\theta, 
    \floor*{\frac{b}{4}}}.
    \]
    Then for all $w \in N^-(u)$ it must be that: 
     \begin{align*}
        d^+(w)_t
        &\leq
    (1+\lambda)^3\cdot\max\set*{\paren*{1+\lambda}(d^+(u)_t-1)+\theta, \floor*{\frac{b}{4}}}
        \\
        &\leq  
    \max\set*{\paren*{1+\eta b^{-1}}(d^+(u)_t-1)+2\theta, \floor*{\frac{b}{2}}}.
     \end{align*}
\end{lemma}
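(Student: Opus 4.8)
The plan is to chain together two inequalities. For the first one, I would use the perceived-rank machinery: every $w \in N^-(u)$ sits in some bucket $B_j(u)$ with $j = r_u(w)$, and by Lemma~\ref{lemma:perceived_rank_error} we have $|r_u(w) - r(w)| \le 1$, hence $d^+(w)_t < (1+\lambda)^{r(w)_t + 1} \le (1+\lambda)^{r_u(w)_t + 2}$. Since $x \gets$ First(Max($N^-(u)$)) lies in the highest non-empty bucket, $r_u(x)_t \ge r_u(w)_t$ for every such $w$, and $d^+(x)_t \ge (1+\lambda)^{r(x)_t} \ge (1+\lambda)^{r_u(x)_t - 1}$. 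Combining these gives $d^+(w)_t < (1+\lambda)^{r_u(w)_t + 2} \le (1+\lambda)^{r_u(x)_t + 2} \le (1+\lambda)^3 \, d^+(x)_t$. Plugging in the hypothesised bound on $d^+(x)_t$ yields the first displayed inequality, using that $(1+\lambda)^3 \cdot \lfloor b/4 \rfloor \le$ the $\max$ on the right (since $(1+\lambda)^3 \le 2$ and we can absorb this into the $\lfloor b/4 \rfloor$ vs.\ $\lfloor b/2 \rfloor$ slack — I'd want to be a little careful here, but morally $(1+\lambda)^3 \lfloor b/4\rfloor \le \lfloor b/2 \rfloor$ for the relevant range of $b$).

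For the second inequality, I would distribute $(1+\lambda)^3$ over the $\max$ and bound each branch. On the first branch: $(1+\lambda)^3\big((1+\lambda)(d^+(u)_t - 1) + \theta\big) = (1+\lambda)^4 (d^+(u)_t-1) + (1+\lambda)^3\theta \le (1+\eta b^{-1})(d^+(u)_t-1) + 2\theta$, using that $(1+\lambda)^4 \le (1+\lambda)^5 \le 1 + \eta b^{-1}$ (recorded right after the definition $\lambda = \eta b^{-1}/64$) and that $(1+\lambda)^3 \le 2$. On the second branch: $(1+\lambda)^3 \lfloor b/4 \rfloor \le 2 \lfloor b/4 \rfloor \le \lfloor b/2 \rfloor$. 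Taking the $\max$ of the two bounded branches gives exactly $\max\{(1+\eta b^{-1})(d^+(u)_t - 1) + 2\theta, \lfloor b/2 \rfloor\}$, which is the claimed second inequality.

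The main obstacle I anticipate is being precise about the perceived-rank bookkeeping in the first step: one must verify that at the moment in the Delete execution when we read $x \gets$ First(Max($N^-(u)$)), the buckets genuinely reflect $r_u(\cdot)$ values satisfying $|r_u(w) - r(w)| \le 1$ simultaneously for $x$ and for every other $w \in N^-(u)$ (Lemma~\ref{lemma:perceived_rank_error} gives this, but one should make sure it applies at exactly this step, including the freshly-decremented $d^+(u)$), and that "First(Max(\dots))" really returns a vertex in the highest-rank non-empty bucket so that the comparison $r_u(x)_t \ge r_u(w)_t$ is valid. The rest is routine algebra with the fixed constants. I would present the proof as: (1) the three-line rank comparison establishing $d^+(w)_t \le (1+\lambda)^3 d^+(x)_t$; (2) substitute the hypothesis to get the first display; (3) distribute over the $\max$ and apply $(1+\lambda)^5 \le 1 + \eta b^{-1} \le 2$ to get the second.
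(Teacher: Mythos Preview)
Your proposal is correct and follows essentially the same route as the paper: the paper chains $r(x)_t \geq r_u(x)_t - 1 \geq r_u(w)_t - 1 \geq r(w)_t - 2$ via Lemma~\ref{lemma:perceived_rank_error} to conclude $d^+(w)_t \leq (1+\lambda)^3 d^+(x)_t$, then substitutes the hypothesis and invokes $(1+\lambda)^5 \leq 1+\eta b^{-1} \leq 2$ for the second display, exactly as you outline. One small remark: your parenthetical worry about absorbing $(1+\lambda)^3\lfloor b/4\rfloor$ into $\lfloor b/2\rfloor$ is misplaced at the first display (which is literally $(1+\lambda)^3$ times the $\max$, so nothing needs absorbing there); it belongs only to the second display, where your branch-by-branch bound is the right way to finish.
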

\begin{proof}
    The vertex $x \gets $ First( Max($N^-(u)$)) has the largest perceived rank of all vertices in $N^-(u)$.  
    Thus, the perceived rank $r_u(w)_t$ is at most $r_u(x)_t$. 
    By Lemma~\ref{lemma:perceived_rank_error}, we now get: 
    
    \[
    r(x)_t \geq r_u(x)_t-1 \geq r_u(w)_t-1 \geq r(w)_t-2 \implies d^+(x)_t\geq (1+\lambda)^{r(x)_t}\geq (1+\lambda)^{r(w)_t-2}\geq (1+\lambda)^{-3}d^+(w)_t.
\]

\noindent
It follows that $d^+(w)_t \leq (1 + \lambda)^3 \cdot \max\set*{\paren*{1+\lambda}(d^+(u)_t-1)+\theta, 
    \floor*{\frac{b}{4}}}$.
    By noting that $(1 + \lambda)^5 \leq (1 + \eta b^{-1})\leq 2$ we recover the lemma. 
\end{proof}

\begin{lemma}\label{lemma:insert_rank}
    If during an Insert at time $s$, the out-neighbour $x\in N^+(u)_s$ is verified to satisfy $d^+(u)_s+1\leq \max\set*{(1+ \lambda)d^+(x)_s + \theta, \floor{\frac{b}{4}}}$, then at any time $t$ up to and including the next time that we check the constraint we have that:
    \begin{align*}
    d^+(u)_t 
    &\leq (1+\lambda)^4\cdot\max\set*{(1+\lambda)d^+(x)_t+\theta, \floor*{\frac{b}{4}}}
    \\
    &\leq\max\set*{(1+\eta b^{-1})d^+(x)_t+2\theta, \floor*{\frac{b}{2}}}.        
    \end{align*}
\end{lemma}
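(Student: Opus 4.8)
The plan is to mirror the argument of Lemma~\ref{lemma:delete_rank}, but now tracking how much $d^+(u)$ can grow (rather than how much an in-neighbour's degree can exceed $d^+(u)$) between two consecutive times the round-robin scheme checks the edge $\overrightarrow{ux}$. First I would invoke the round-robin counting argument already used in Lemma~\ref{lemma:rank_bounded_change}: between the check at time $s$ and the next check at some time $t$, only out-neighbours of $u$ present at time $s$ can be visited, and since $\ceil{\frac{2}{\lambda}}$ of them are visited per Insert/Delete that changes $d^+(u)$, at most $\frac{\lambda}{2}d^+(u)_s$ such operations occur. Hence $d^+(u)_t \leq (1+\frac{\lambda}{2})d^+(u)_s < (1+\lambda)d^+(u)_s$ and symmetrically $d^+(u)_s < (1+\lambda)d^+(u)_t$; i.e. $d^+(u)$ and $d^+(x)$ each move by at most one multiplicative factor of $(1+\lambda)$ over this window (for $x$ one must be slightly more careful, see below).

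Next I would chain the inequalities. Start from the verified hypothesis $d^+(u)_s + 1 \leq \max\{(1+\lambda)d^+(x)_s + \theta, \floor{\frac{b}{4}}\}$, so in particular $d^+(u)_s \leq \max\{(1+\lambda)d^+(x)_s + \theta, \floor{\frac{b}{4}}\}$. Then bound $d^+(u)_t \leq (1+\lambda)d^+(u)_s$ to bring in the time-$t$ degree of $u$; then bound $d^+(x)_s$ in terms of $d^+(x)_t$. Here $x$ is an \emph{out}-neighbour of $u$, so $x$'s own degree is not controlled by $u$'s round robin; but the only reason $d^+(x)$ could have changed is through flips that alter $x$'s out-degree, and I expect the intended bound is that over this window $d^+(x)$ likewise changes by at most a $(1+\lambda)^2$ factor — one factor because $x$ itself, if it drops, drops by round-robin-limited amounts, and the analysis elsewhere (Corollary~\ref{cor:inv0}, Lemma~\ref{lemma:rank_bounded_change}) gives that perceived ranks and true ranks differ by at most $1$. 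Accounting $(1+\lambda)$ for the growth of $d^+(u)$ and up to $(1+\lambda)^2$ for the change of $d^+(x)$ inside the $\max$ (one extra factor absorbing the $+\theta$/floor bookkeeping as in Lemma~\ref{lemma:delete_rank}), together with the slack already present in the hypothesis, yields the stated $(1+\lambda)^4$ factor:
\[
d^+(u)_t \leq (1+\lambda)d^+(u)_s \leq (1+\lambda)\max\set*{(1+\lambda)d^+(x)_s+\theta,\floor*{\tfrac{b}{4}}} \leq (1+\lambda)^4\max\set*{(1+\lambda)d^+(x)_t+\theta,\floor*{\tfrac{b}{4}}}.
\]
Finally, since $(1+\lambda)^5 \leq 1+\eta b^{-1} \leq 2$ by the standing choice of $\lambda = \eta b^{-1}/64$, the factor $(1+\lambda)^4(1+\lambda) = (1+\lambda)^5$ collapses the $\max$-expression $\max\{(1+\lambda)d^+(x)_t+\theta,\floor{\frac{b}{4}}\}$ into $\max\{(1+\eta b^{-1})d^+(x)_t + 2\theta, \floor{\frac{b}{2}}\}$, exactly as in the last line of Lemma~\ref{lemma:delete_rank}; this gives the second displayed inequality.

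The main obstacle I anticipate is the second step: precisely controlling $d^+(x)_s$ versus $d^+(x)_t$. Unlike in Lemma~\ref{lemma:delete_rank}, where all relevant vertices are in-neighbours of a single vertex whose round robin governs their perceived ranks, here $x$ ranges over out-neighbours of $u$, and $u$'s round robin only governs when $u$'s \emph{own} perceived rank is refreshed at $x$ — it says nothing directly about how $d^+(x)$ itself evolves. I would need to argue that between the two checks of $\overrightarrow{ux}$, the number of updates touching $d^+(x)$ is itself bounded (again via a round-robin / rank-drift argument applied at $x$), so that $d^+(x)$ cannot have shrunk by more than a bounded multiplicative factor; this is the delicate point where the extra powers of $(1+\lambda)$ in the statement (compared to the cleaner additive analysis of Sawlani–Wang) are spent, and where care is needed when $d^+(x)$ is small so that the $\floor{\frac{b}{4}}$ branch of the $\max$ takes over.
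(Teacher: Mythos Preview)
Your approach has a genuine gap at exactly the point you flag as the ``main obstacle'': bounding $d^+(x)_s$ in terms of $d^+(x)_t$. You propose to argue that $d^+(x)$ changes by at most a bounded multiplicative factor over the window $[s,t]$ via ``a round-robin / rank-drift argument applied at $x$''. But no such bound is available. The window $[s,t]$ is determined by $u$'s round robin over $N^+(u)$; during this window arbitrarily many operations in $G^b$ can touch $x$ without touching $u$ at all, so $d^+(x)$ can in principle drop far below $d^+(x)_s$. Lemma~\ref{lemma:rank_bounded_change} controls how much $d^+(x)$ can drift between two consecutive refreshes of $r_w(x)$ for $w\in N^+(x)$, but $u\in N^-(x)$, so that lemma gives you nothing here.

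The paper's proof avoids this obstacle with a different idea: a case split on whether any Delete changed $d^+(x)$ during $[s,t]$. If not, $d^+(x)_s\le d^+(x)_t$ and your chain (with only one factor of $(1+\lambda)$) already works. If yes, let $s'$ be the time just after the \emph{last} such Delete terminated. The crucial point is that a Delete that terminates at $x$ has, by the termination condition and Lemma~\ref{lemma:delete_rank}, certified that \emph{every} $w\in N^-(x)$---in particular $u$---satisfies $d^+(w)_{s'}\le (1+\lambda)^3\max\{(1+\lambda)d^+(x)_{s'}+\theta,\lfloor b/4\rfloor\}$. This gives a fresh inequality relating $d^+(u)$ and $d^+(x)$ at time $s'$, with $d^+(x)_{s'}\le d^+(x)_t$ (no later decreases); one more factor of $(1+\lambda)$ from Lemma~\ref{lemma:rank_bounded_change} brings $d^+(u)_{s'}$ to $d^+(u)_t$. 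So the $(1+\lambda)^4$ is spent as $(1+\lambda)^3$ from Lemma~\ref{lemma:delete_rank} plus $(1+\lambda)$ for $u$'s drift, not on bounding the drift of $d^+(x)$.
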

\begin{proof}
    If there are no Deletes changing $d^+(x)$ between times $s$ and $t$, we have $d^+(x)_s \leq d^+(x)_t$ and
    \begin{align*}
        d^+(u)_t
        &\leq (1+\lambda)\cdot d^+(u)_s   
        &\text{(By Lemma~\ref{lemma:rank_bounded_change})}
        \\
        &\leq (1+\lambda)\cdot (d^+(u)_s + 1)
        \\
        &\leq (1+\lambda)\cdot
        \max\set*{\paren*{1+\lambda}d^+(x)_s + \theta, \floor*{\frac{b}{4}}}
        &\text{(By our assumption)}
        \\
        &\leq (1+\lambda)\cdot
        \max\set*{\paren*{1+\lambda}d^+(x)_t + \theta, \floor*{\frac{b}{4}}}
        &\text{(Since $d^+(x)_s \leq d^+(x)_t$)}
        \\
        &\leq \max\set*{(1+\eta b^{-1})d^+(x)_t+2\theta, \floor*{\frac{b}{2}}}.
    \end{align*}
    Suppose now that there was a Delete between times $s$ and $t$ that changed $d^+(x)$. Denote by $s'$ the time just after the last such delete finished. 
    It must be that $s<s'\leq t$.  
    Then $d^+(x)_{s'-1}-1=d^+(x)_{s'}$. Since after $s'$, there was no deletion decreasing $d^+(x)$ it must be that $d^+(x)_{s'}  \leq d^+(x)_t$ and by Lemma~\ref{lemma:delete_rank}, for all $w\in N^-(x)_{s'}$:
    \begin{align*}
        d^+(w)_{s'} &\leq (1+\lambda)^3\cdot
        \max\set*{\paren*{1+\lambda}d^+(x)_{s'}+\theta, \floor*{\frac{b}{4}}}
    \end{align*}
    In particular, $u\in N^-(x)_{s'}$ and
    \begin{align*}
        d^+(u)_t
        &\leq
        (1+\lambda)\cdot d^+(u)_{s'}
        &\text{(By Lemma~\ref{lemma:rank_bounded_change})}
        \\
        &\leq
        (1+\lambda)^4\cdot\max\set*{\paren*{1+\lambda}d^+(x)_{s'}+\theta, \floor*{\frac{b}{4}}}
        &\text{(By Lemma~\ref{lemma:delete_rank})}
        \\
        &\leq
        (1+\lambda)^4\cdot\max\set*{\paren*{1+\lambda}d^+(x)_{t}+\theta, \floor*{\frac{b}{4}}}
        &\text{(Since $d^+(x)_{s'} \leq d^+(x)_{t}$)}
        \\
        & \leq
        \max\set*{\paren*{1+\eta b^{-1}}d^+(x)_{t}+2\theta, \floor*{\frac{b}{2}}}
        &&
    \end{align*}
\end{proof}

\begin{lemma}[Maintaining Invariant~\ref{inv:degrees_modified}]
\label{lemma:rank_invariant_maintenance}
Whenever Algorithms~\ref{alg:insertion_rank} and~\ref{alg:deletion_rank} terminate, they maintain an orientation $\overrightarrow{G}^b$ where for each edge $\overrightarrow{uv}$ in $\overrightarrow{G}^b$, $
d^+(u) \leq \max\set*{(1 + \eta b^{-1} ) \cdot  d^+(v) + 2 \theta, \floor*{\frac{b}{2}}}$.
\end{lemma}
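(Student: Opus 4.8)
\textbf{Proof proposal for Lemma~\ref{lemma:rank_invariant_maintenance}.}

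The plan is to argue by induction on the recursion depth of Algorithms~\ref{alg:insertion_rank} and~\ref{alg:deletion_rank}, showing that when the topmost call returns, every directed edge $\overrightarrow{uv}$ of $\overrightarrow{G}^b$ satisfies $d^+(u)\leq\max\set*{(1+\eta b^{-1})d^+(v)+2\theta,\floor*{\frac{b}{4}}}$ (and hence $\leq\max\set*{(1+\eta b^{-1})d^+(v)+2\theta,\floor*{\frac{b}{2}}}$, which is Invariant~\ref{inv:degrees_modified}). The key observation is that a single call to Insert$(\overrightarrow{uv})$ or Delete$(\overrightarrow{uv})$ changes $d^+$ at exactly two vertices during the call, but after the full cascade terminates, only the degree of the final vertex $x_f$ of the flipped chain has net changed; every intermediate vertex on the chain has had one out-edge added and one removed, so its out-degree is restored. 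Therefore it suffices to verify the invariant (i) for all edges incident to $x_f$, and (ii) for all edges whose endpoints were untouched — the latter holding vacuously since both $d^+$ values are unchanged from before the update, when the invariant held.

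For the decremental case, I would invoke Lemma~\ref{lemma:delete_rank} directly: when the recursion bottoms out at $x_f$, the termination condition guarantees that $x\gets$ First(Max($N^-(x_f)$)) satisfies $d^+(x)_t\leq\max\set*{(1+\lambda)(d^+(x_f)_t-1)+\theta,\floor*{\frac{b}{4}}}$, and Lemma~\ref{lemma:delete_rank} converts this — absorbing the perceived-rank error of up to two ranks via the $(1+\lambda)^3$ slack, then using $(1+\lambda)^5\leq 1+\eta b^{-1}$ — into $d^+(w)_t\leq\max\set*{(1+\eta b^{-1})(d^+(x_f)_t-1)+2\theta,\floor*{\frac{b}{2}}}$ for every $w\in N^-(x_f)$. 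Since $d^+(x_f)_{t+1}=d^+(x_f)_t-1$, this is exactly the claimed bound for all in-edges of $x_f$. For out-edges $\overrightarrow{x_f z}$: the out-degree of $x_f$ only decreased and the out-degree of $z$ is unchanged, so the inequality that held before is preserved. For the incremental case, I would likewise read off Lemma~\ref{lemma:insert_rank}: when Insert terminates at $x_f$, either $x_f$ got a new out-edge and the round-robin check on its $\ceil*{\frac{2}{\lambda}}$ scanned neighbours $x$ passed, in which case Lemma~\ref{lemma:insert_rank} gives $d^+(x_f)_t\leq\max\set*{(1+\eta b^{-1})d^+(x)_t+2\theta,\floor*{\frac{b}{2}}}$ for each such $x$; and crucially, for the out-neighbours of $x_f$ that were \emph{not} scanned this round, the same Lemma applies at the time $t$ of any later check, because its statement guarantees the bound holds "at any time up to and including the next time we check the constraint." Combined with the fact that in-edges of $x_f$ only become easier to satisfy when $d^+(x_f)$ grows... wait — $d^+(x_f)$ grows on insert, so in-edges $\overrightarrow{w x_f}$ need $d^+(w)\leq(1+\eta b^{-1})d^+(x_f)+2\theta$, which is only loosened; these are fine.

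The main obstacle I anticipate is handling the out-neighbours of $x_f$ that are \emph{not} visited in the current round-robin pass of Algorithm~\ref{alg:insertion_rank}: the naive worry is that such an edge $\overrightarrow{x_f z}$ could violate the invariant because $d^+(x_f)$ just increased while we never rechecked $z$. The resolution — and the reason Lemma~\ref{lemma:insert_rank} is phrased the way it is — is that the last time $\overrightarrow{x_f z}$ \emph{was} checked, the check passed with one unit of headroom ($d^+(x_f)_s+1\leq\max\set*{(1+\lambda)d^+(z)_s+\theta,\floor*{\frac{b}{4}}}$), and by Lemma~\ref{lemma:rank_bounded_change} the round-robin schedule ensures $d^+(x_f)$ cannot grow by more than a $(1+\tfrac{\lambda}{2})$ factor (nor can $d^+(z)$ shrink uncontrollably, via Lemma~\ref{lemma:delete_rank} applied at an intervening delete) before we revisit $z$; the accumulated slack is exactly absorbed by the gap between $(1+\lambda)^4$ and $1+\eta b^{-1}$. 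So the proof reduces to a careful bookkeeping of which lemma supplies the bound for which category of edge at termination time, with no new inequality needed beyond $(1+\lambda)^5\leq 1+\eta b^{-1}\leq 2$ and Lemma~\ref{lemma:inv_modified_implies_normal} for the final passage to Invariant~$\theta$.
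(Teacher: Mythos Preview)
Your approach is essentially the paper's: only $x_f$ has a net degree change, so you invoke Lemma~\ref{lemma:delete_rank} for in-edges at $x_f$ in the decremental case and Lemma~\ref{lemma:insert_rank} for out-edges at $x_f$ in the incremental case (the latter covering both scanned and unscanned out-neighbours via the round-robin guarantee), with all other edges unaffected. One small wrinkle your case (ii) glosses over: the newly created flipped edges $\overrightarrow{x_{i+1}x_i}$ strictly interior to the chain did not exist before the update, so ``the invariant held before'' does not literally apply to them; the paper handles this (tersely, by pointing back to Theorems~\ref{thm:invariant1} and~\ref{thm:invariant0}) via the precondition $d^+(x_{i+1})\leq d^+(x_i)$ of each recursive Insert call, which makes these edges trivially satisfy the bound once both degrees are restored. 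Also, your opening claim that you will establish the inequality with $\floor{b/4}$ is stronger than what Lemmas~\ref{lemma:delete_rank} and~\ref{lemma:insert_rank} actually deliver---they only give $\floor{b/2}$---and the reference to Lemma~\ref{lemma:inv_modified_implies_normal} at the end is unnecessary here, since the present lemma is exactly Invariant~\ref{inv:degrees_modified} itself.
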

\begin{proof}
    By construction, when calling Insert$(\overrightarrow{uv})$ at time $t$ we always have $d^+(u)_t \leq d^+(v)_t$. As argued in Theorems~\ref{thm:invariant0}+~\ref{thm:invariant1}s, this new edge may never invalidate Invariant~\ref{inv:degrees_modified} between $u$ and $v$. 
    Now consider the chain of edges that get recursively flipped until we reach the final vertex $x_f$. 
    The vertex $x_f$ is the only vertex for which $d^+(x_f)_{t+1} = d^+(x_f)_t + 1$.
    Thus, for all other vertex pairs not including $x_f$, Invariant~\ref{inv:degrees_modified} is maintained. 
    Since the algorithm terminated at $x_f$ it must be that for all $x \in N^+(x_f)_{t}$ where the constraint was checked at time $t$: 
    $d^+(x_f)_{t+1} = d^+(x_f)_t + 1 \leq \max\set*{(1+ \lambda)d^+(x)_t + \theta, \floor{\frac{b}{4}}} = \max\set*{(1+ \lambda)d^+(x)_{t+1} + \theta, \floor{\frac{b}{4}}}$.
    By Lemma~\ref{lemma:insert_rank}, it follows that Invariant~\ref{inv:degrees_modified} is maintained between $x_f$ and all vertices in $N^+(x_f)_{t+1}$.

    The argument for Delete$(\overrightarrow{uv})$ is symmetrical, applying Lemma~\ref{lemma:delete_rank} instead.
\end{proof}

\begin{lemma}
    \label{lem:timebeforerecurse}
    Algorithms~\ref{alg:insertion_rank}+\ref{alg:deletion_rank} spend $\BigOh(\lambda^{-1})$ time before recursing, except for the outermost call which spends $\BigOh(\lambda^{-1}\log n)$ time.
\end{lemma}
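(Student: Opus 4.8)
The plan is to bound the non-recursive work done by Algorithms~\ref{alg:insertion_rank} and~\ref{alg:deletion_rank} by inspecting each line that is executed before the recursive call (the \emph{Insert}$(\overrightarrow{xu})$ or \emph{Delete}$(\overrightarrow{xu})$ in italics) fires, and by separating the very first (outermost) invocation from all subsequent ones. The key observation is that the round-robin design was introduced precisely so that no call ever touches more than $\BigOh(\lambda^{-1})$ out-neighbours: in Insert we iterate over the next $\ceil{2/\lambda}$ neighbours in $N^+(u)$, and in Delete we likewise only update $\ceil{2/\lambda}$ neighbours in the non-flipping branch. So the dominant cost is this $\ceil{2/\lambda} \in \BigOh(\lambda^{-1})$ loop, provided every individual operation inside the loop is $\BigOh(1)$.

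First I would go through Algorithm~\ref{alg:insertion_rank} line by line. The call to \texttt{Add}$(\overrightarrow{uv})$ is $\BigOh(1)$: incrementing $d^+(u)$, testing membership of $\overrightarrow{uv}$, and appending to two linked lists all take constant time (the membership test is against the current edge list, which we may assume is $\BigOh(1)$ via the stored adjacency structure, and in any case the edge $(u,v)$ already has the $N^+/N^-$ bookkeeping if it was present). Inside the for-loop, retrieving ``the next neighbour in $N^+(u)$'' via the maintained position pointer is $\BigOh(1)$; evaluating the condition $d^+(u) > \max\{(1+\lambda)d^+(x)+\theta, \floor{b/4}\}$ requires only the exact value $d^+(x)$, which is stored, so it is $\BigOh(1)$; and \texttt{Remove}$(\overrightarrow{ux})$ is $\BigOh(1)$ by the same reasoning as \texttt{Add}. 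If the condition fires we \texttt{Break} and recurse, so the loop runs at most $\ceil{2/\lambda}$ times. The final \texttt{forall} block over the (at most $\ceil{2/\lambda}$) visited vertices sets $r_x(u)=r(u)$ and moves $u$ to bucket $B_{r_x(u)}$ in $N^-(x)$; since the bucket index changed by at most $1$ (by Lemma~\ref{lemma:rank_bounded_change}, or simply because $r(u)$ moved by at most $1$ since the last update of $r_x(u)$), and the bucket list is doubly linked and sorted, each such move is $\BigOh(1)$ — possibly allocating one new adjacent bucket. So the total is $\BigOh(\lambda^{-1})$. The same line-by-line accounting applies to Algorithm~\ref{alg:deletion_rank}: \texttt{Remove}$(\overrightarrow{uv})$ is $\BigOh(1)$; \texttt{First(Max(Bucket($N^-(u)$)))} is $\BigOh(1)$ because $u$ keeps the buckets sorted and has a pointer to the top non-empty one; the condition check uses stored exact degrees; \texttt{Add}$(\overrightarrow{ux})$ is $\BigOh(1)$; and the \texttt{else} branch runs the same $\ceil{2/\lambda}$-length round-robin update.

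The one subtlety — and the reason the outermost call is singled out — is the following: in Algorithm~\ref{alg:deletion_rank}, after \texttt{Remove}$(\overrightarrow{uv})$, the value $r(u)$ itself has just changed, so the buckets $B_j(x)$ for $x\in N^+(u)$ that hold $u$ may now be stale by more than the usual amount only in the outermost call, where $u$ was not a chain-internal vertex whose degree change we control; analogously, when processing an update to $G$ we must fix $u$'s pointer to its own bucket $B_i(u)$ with $i = \floor{\log_{(1+\lambda)}\max\{(1+\lambda)d^+(u),\floor{b/4}\}}$. More to the point, the outermost call has to perform the one-time search over buckets of $N^-$ (or the initial positioning) that costs $\BigOh(\log_{(1+\lambda)} d^+ (\cdot)) = \BigOh(\lambda^{-1}\log n)$, exactly as in the time-spent paragraph of Theorem~\ref{thm:invariant1}; recursive calls inherit a valid pointer and avoid this. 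I would therefore state: recursive (inner) calls do $\BigOh(\lambda^{-1})$ work before recursing; the single outermost call additionally pays $\BigOh(\lambda^{-1}\log n)$ for this initial bucket search, giving the claimed bound. The main obstacle I anticipate is being careful that the bucket-maintenance operations really are $\BigOh(1)$ amortised-free — i.e., genuinely worst-case — which hinges on Lemma~\ref{lemma:perceived_rank_error} guaranteeing the perceived rank is off by at most $1$, so that ``move $u$ to bucket $B_{r_x(u)}$'' only ever shifts $u$ by one bucket and creates at most one new bucket node, never triggering a longer cascade.
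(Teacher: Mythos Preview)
Your overall structure matches the paper's: the $\ceil{2/\lambda}$ round-robin loop bounds the number of neighbours touched, and Lemma~\ref{lemma:rank_bounded_change} makes each bucket move $\BigOh(1)$. That part is fine.

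The gap is in your treatment of \texttt{Add}$(\overrightarrow{uv})$. You assert it is $\BigOh(1)$ because it ``appends to two linked lists'', but $N^-(v)$ is not a plain list: it is the bucket structure, and inserting $u$ there means locating the bucket $B_{r(u)}(v)$. This is exactly where the $\BigOh(\lambda^{-1}\log n)$ versus $\BigOh(1)$ distinction lives, and your later attempt to explain the outermost-call exception does not correctly identify it. The sentence about ``$r(u)$ itself has just changed, so the buckets $B_j(x)$ \ldots\ may now be stale'' is not the reason; staleness of perceived ranks is already handled by the round-robin and is bounded by Lemma~\ref{lemma:perceived_rank_error} regardless of call depth.

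The paper's argument is this: when \texttt{Add}$(\overrightarrow{uv})$ is triggered by a brand-new edge in $G$ (the outermost call), there is no a priori relation between $d^+(u)$ and $d^+(v)$, so locating $B_{r(u)}(v)$ may require scanning $\BigOh(\log_{(1+\lambda)} d^+(u)) = \BigOh(\lambda^{-1}\log n)$ buckets. When \texttt{Add} arises from a flip inside a recursive call, the flip condition together with Lemmas~\ref{lemma:insert_rank} and~\ref{lemma:delete_rank} guarantees that $d^+(u)$ and $\max\{(1+\lambda)d^+(v),\floor{b/4}\}$ differ by at most a factor $(1+\lambda)^{\BigOh(1)}$; hence the target bucket index $r(u)$ is within $\BigOh(1)$ of the index $i$ to which $v$ already holds a pointer, and the insertion is $\BigOh(1)$. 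Your phrase ``recursive calls inherit a valid pointer'' gestures at this but does not say \emph{which} pointer (it is $v$'s pointer into its own $N^-(v)$ buckets) or \emph{why} it is close enough (the degree-closeness from the flip lemmas). Make that explicit and the proof is complete.
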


\begin{proof}
    Whenever we insert a vertex $v \in N^{-}(u)$, it is either because we added the edge $(u, v)$ to $G$ (occurring once) or, because we flipped an edge $\overrightarrow{uv}$.
In the first case, we may afford spending
$\BigOh(\log_{(1+\lambda)}d^+(v)) = \BigOh(\lambda^{-1}\log(b\rho + \log n)) = \BigOh(\lambda^{-1}\log n)$ time searching through all buckets for the bucket containing $v$. 
In the latter case, for Insert we have $\max\set*{(1+\lambda)d^+(u)_t,\floor{\frac{b}{4}}}<d^+(v)_t+1\leq(1+\lambda)^4\max\set*{(1+\lambda)d^+(u)_t,\floor{\frac{b}{4}}}+1$ by Lemma~\ref{lemma:insert_rank}, and similarly for Delete we have $\max\set*{(1+\lambda)(d^+(u)_t-1),\floor{\frac{b}{4}}}<d^+(v)_t\leq(1+\lambda)^3\max\set*{(1+\lambda)(d^+(u)_t-1),\floor{\frac{b}{4}}}$ by Lemma~\ref{lemma:delete_rank}.
Using the pointer from $u$ to the bucket $B_i(u)$ where $i=\floor*{\log_{(1+\lambda)}\max\set*{(1+\lambda)d^+(u),\floor*{\frac{b}{4}}}}$, we may insert $v$ into the correct bucket in $\BigOh(1)$ time. 

    During Insert($\overrightarrow{uv}$) or Delete$(\overrightarrow{uv})$ we loop over at most $\BigOh(\lambda^{-1})$ elements to change their bucket. 
    By Lemma~\ref{lemma:rank_bounded_change}, each element changes their position in the data structure by at most $1$, which can be done in $\BigOh(1)$ time. 
\end{proof}

\subparagraph{Concluding our argument.}
By Lemma~\ref{lemma:rank_invariant_maintenance}, our algorithms maintain Invariant~\ref{inv:degrees_modified} at all times. 
By Lemma~\ref{lem:timebeforerecurse}, our algorithms spend $\BigOh(\lambda^{-1})$ time before recursing (except for the outermost call, which uses $\BigOh(\lambda^{-1}\log n)$ time).
We now make a case distinction.
Either we set $\left( \theta = 1, \, b = 1, \, \eta \in \Theta(\log n) \right)$,
or, we set $\left( \theta = 0, \, b \in \Theta(\log n), \, \eta=3 \right)$.
Because our recursive condition in Algorithms~\ref{alg:insertion_rank}+\ref{alg:deletion_rank} is the same as in Algorithms~\ref{alg:insertion}+\ref{alg:deletion}, we may immediately apply the proofs of Theorem~\ref{thm:invariant1}+\ref{thm:invariant0} to upper bound the recursive depth of our algorithms by $\BigOh(\lambda^{-1} \log \rho)$. 
Thus, the total time for inserting or deleting a single edge in $\overrightarrow{G}^b$ is $\BigOh(\lambda^{-1}\log n+\lambda^{-2}\log\rho)=\BigOh(\lambda^{-2}\log\rho)$
For every update in $G$, we do $\Theta(b)$ updates in $G^b$.
Thus, for both choices of our variables, our algorithms run in time $\BigOh( b \cdot \lambda^{-2} \log \rho)$, and they maintain Invariant~\ref{inv:degrees_modified} for the chosen $\theta \in \{ 0, 1\}$.
Thus, we conclude:

\begin{theorem}
\label{thm:rank0}
Let $G$ be a dynamic graph and $\rho$ be the density of $G$ at update time.
We can choose our variables $\theta = 0$, $\eta=3$, and $b \in \Theta(\log n), b\geq 2$ to maintain an out-orientation $\overrightarrow{G}^b$ in worst case $\BigOh(\log^3 n \log \rho)$  time per operation in $G$, maintaining Invariant~\ref{inv:degrees} for $\overrightarrow{G}^b$ with:
\begin{itemize}[noitemsep, nolistsep]
    \item $\forall v$, the out-degree $d^+(v)$ in $\overrightarrow{G}^b$ is at most $\BigOh( b  \cdot \rho)$, and
    \item $\forall u$, the out-degree of $u$ in $\overrightarrow{G}$ is at most $\BigOh(\rho)$.
\end{itemize}
\end{theorem}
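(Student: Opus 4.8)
The plan is to assemble Theorem~\ref{thm:rank0} directly from the machinery already built in Section~\ref{app:improved-rank}, rather than re-proving anything from scratch. The statement has three parts: (i) correctness, i.e.\ the maintained orientation satisfies Invariant~\ref{inv:degrees}; (ii) the worst-case update time of $\BigOh(\log^3 n \log\rho)$; and (iii) the two out-degree bounds. For (i), I would simply specialise the parameters to $\theta=0$, $\eta=3$, $b\in\Theta(\log n)$ with $b\geq 2$, check that $0<\tfrac{b}{\eta}\leq\floor{\tfrac{b}{2}}$ so Invariant~\ref{inv:degrees_modified} is satisfiable (this is exactly the condition $\eta\geq\tfrac{2b}{b-1}$, which holds since $\eta=3$ and $b\geq 3$ suffices; for $b=2$ one needs $\eta\geq 4$, so in fact one should take $b\geq 3$ or adjust $\eta$), invoke Lemma~\ref{lemma:rank_invariant_maintenance} to conclude the algorithm maintains Invariant~\ref{inv:degrees_modified} at all times after each completed update to $G$, and then apply Lemma~\ref{lemma:inv_modified_implies_normal} to upgrade this to Invariant~\ref{inv:degrees}.

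For (iii), the out-degree bounds follow immediately from the structural results: once Invariant~\ref{inv:degrees} holds, Corollary~\ref{cor:inv0} (whose hypothesis $\eta b^{-1}\in\BigOh(1/\log n)$ is met by our choice $\eta=3$, $b\in\Theta(\log n)$) gives $\Delta(\overrightarrow{G}^b)\in\BigOh(b\rho)$ and $\Delta(\overrightarrow{G})\in\BigOh(\rho)$, which is exactly what is claimed for every vertex $v$ (resp.\ $u$). For (ii), the running time, I would follow the ``Concluding our argument'' paragraph: by Lemma~\ref{lem:timebeforerecurse} each recursive call of Algorithm~\ref{alg:insertion_rank} or~\ref{alg:deletion_rank} spends $\BigOh(\lambda^{-1})$ time before recursing, except the outermost which spends $\BigOh(\lambda^{-1}\log n)$; by the recursive-depth analysis imported from Theorem~\ref{thm:invariant0} (the recursive condition being identical), the depth is $\BigOh(\lambda^{-1}\log\rho)$; so one update in $G^b$ costs $\BigOh(\lambda^{-1}\log n + \lambda^{-2}\log\rho) = \BigOh(\lambda^{-2}\log\rho)$. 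Each update to $G$ triggers $\Theta(b)$ updates in $G^b$, so the total is $\BigOh(b\lambda^{-2}\log\rho)$. Substituting $\lambda=\eta b^{-1}/64 = \Theta(1/b) = \Theta(1/\log n)$ gives $\lambda^{-1}\in\Theta(\log n)$ and hence $\BigOh(b\cdot\log^2 n\cdot\log\rho) = \BigOh(\log^3 n\log\rho)$.

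I expect the main subtlety — not so much an obstacle as a point requiring care — to be in re-using the recursive-depth argument of Theorem~\ref{thm:invariant0} verbatim. That argument relied on the fact that \emph{before} the current update the orientation satisfied Invariant~\ref{inv:degrees}, so that Corollary~\ref{cor:inv0} bounds every $d^+(x_i)_t\in\BigOh(\rho)$ along the flip chain; since the round-robin algorithm only changes which out-neighbours get their bucket updated (not the recursive flip condition itself, nor the fact that the invariant holds between updates to $G$), this import is legitimate, but the write-up should explicitly note that Lemma~\ref{lemma:rank_invariant_maintenance} guarantees the invariant holds at the start of each update to $G$, which is what feeds the inductive structural bound. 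The only genuinely delicate constant is checking the satisfiability condition for $(\theta,\eta,b)=(0,3,\Theta(\log n))$; I would state it as requiring $b\geq 3$ (so that $\tfrac{2b}{b-1}\leq 3 = \eta$) and note that $b\in\Theta(\log n)$ certainly permits this. Everything else is bookkeeping.
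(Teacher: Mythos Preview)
Your proposal is correct and follows essentially the same approach as the paper: the paper's proof is precisely the ``Concluding our argument'' paragraph you cite, which assembles Lemma~\ref{lemma:rank_invariant_maintenance}, Lemma~\ref{lem:timebeforerecurse}, the recursive-depth bound imported from Theorem~\ref{thm:invariant0}, and the structural Corollary~\ref{cor:inv0} in exactly the order you describe. Your extra care about the satisfiability constraint (noting that $\eta=3$ requires $b\geq 3$, not merely $b\geq 2$) is a valid observation that the paper glosses over.
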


\begin{theorem}
\label{thm:rank1}
Let $G$ be a dynamic graph and $\rho$ be the density of $G$ at time $t$.
We can choose our variables  $\theta = 1$, $b = 1$ and $\eta \in \Theta(\log n)$ to maintain an out-orientation  $\overrightarrow{G}^b = \overrightarrow{G}$ in worst case $\BigOh\paren*{ \log^2 n \log \rho }$ time per update in $G$ such that  Invariant~\ref{inv:degrees_additive} holds for $\overrightarrow{G}$. Moreover:

\begin{itemize}[noitemsep, nolistsep]
 \item $\forall u$, the out-degree $d^+(u)$ in $\overrightarrow{G}$ is at most $\BigOh( \rho + \log n)$, \quad \quad \small{(i.e. $\Delta( \overrightarrow{G}) \in \BigOh( \rho + \log n)$)}
\end{itemize}
\end{theorem}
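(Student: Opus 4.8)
The plan is to apply essentially the same three-part template already used in Theorems~\ref{thm:invariant0} and~\ref{thm:invariant1}, but now feeding in the lemmas established for the round-robin/perceived-rank implementation of Section~\ref{app:improved-rank}. Specifically, with the parameter choice $\theta=1$, $b=1$, and $\eta\in\Theta(\log n)$ (so that $\lambda=\eta b^{-1}/64\in\Theta(1/\log n)$ and $\lambda^{-1}\in\Theta(\log n)$), the goal is to show: (i) correctness, i.e. the algorithms maintain Invariant~\ref{inv:degrees_modified} for $\theta=1$, hence (by Lemma~\ref{lemma:inv_modified_implies_normal}) Invariant~\ref{inv:degrees_additive}; (ii) the recursive depth of each chain flip is $\BigOh(\lambda^{-1}\log\rho)$; and (iii) the time spent per level of recursion is $\BigOh(\lambda^{-1})$, except $\BigOh(\lambda^{-1}\log n)$ for the outermost call. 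Combining (i)--(iii) and multiplying by the number $\Theta(b)=\Theta(1)$ of $G^b$-updates triggered per $G$-update gives the stated $\BigOh(\lambda^{-1}\log n+\lambda^{-2}\log\rho)=\BigOh(\log^2 n\log\rho)$ worst-case bound; finally Corollary~\ref{cor:inv1} converts Invariant~\ref{inv:degrees_additive} into the out-degree bound $\Delta(\overrightarrow{G})\in\BigOh(\rho+\log n)$.

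For part (i), I would simply invoke Lemma~\ref{lemma:rank_invariant_maintenance}, which already states that Algorithms~\ref{alg:insertion_rank} and~\ref{alg:deletion_rank} maintain Invariant~\ref{inv:degrees_modified} (with $2\theta=2$) upon termination; Lemma~\ref{lemma:inv_modified_implies_normal} then upgrades this to Invariant~\ref{inv:degrees_additive}. For part (ii), the key observation (already noted in the ``Overview of techniques'' paragraph) is that the \emph{recursion-triggering condition} in Algorithms~\ref{alg:insertion_rank}/\ref{alg:deletion_rank} is identical to that in Algorithms~\ref{alg:insertion}/\ref{alg:deletion}: we flip $\overrightarrow{ux}$ on Insert exactly when $d^+(u)>\max\{(1+\lambda)d^+(x)+\theta,\floor{b/4}\}$, and symmetrically for Delete. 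Hence the out-degrees along a flipped chain are strictly monotone by integer steps (increasing by $\geq 1$ along a backward Delete chain, decreasing along a forward Insert chain), and the argument from Theorem~\ref{thm:invariant1} applies verbatim: a prefix of $\BigOh(\log n)=\BigOh(\lambda^{-1})$ vertices has out-degree $\BigOh(\log n)$, and beyond that point one has $\rho\in\Omega(\log n)$ and the ``two steps multiply by $(1+\lambda)$'' argument bounds the remaining chain length by $\log_{1+\lambda}\BigOh(\rho)=\BigOh(\lambda^{-1}\log\rho)$. For part (iii), I would cite Lemma~\ref{lem:timebeforerecurse} directly, which gives $\BigOh(\lambda^{-1})$ time per recursive call and $\BigOh(\lambda^{-1}\log n)$ for the outermost one — this is precisely where the round-robin scheme (checking only $\ceil{2/\lambda}$ out-neighbours, Lemmas~\ref{lemma:rank_bounded_change}--\ref{lemma:insert_rank}) and the bucketing by perceived rank (each element moves by at most one bucket, Lemma~\ref{lemma:perceived_rank_error}) buy the improvement over the naive $\BigOh(\rho+\log n)$-per-level cost of Section~\ref{sec:basic}.

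The main subtlety — though most of the heavy lifting is already discharged in the lemmas — is making sure the correctness argument for Invariant~\ref{inv:degrees_modified} survives the fact that during the chain flip we only have \emph{perceived} ranks, and that we only verify the constraint against $\ceil{2/\lambda}$ out-neighbours of the terminal vertex $x_f$ rather than all of them. This is exactly what Lemmas~\ref{lemma:insert_rank} and~\ref{lemma:delete_rank} handle: a constraint checked (and not violated) at an earlier time $s$ remains valid up to the next check at time $t$ after absorbing the $(1+\lambda)$-type slack from degree drift, and the chosen constants $(1+\lambda)^5\leq 1+\eta b^{-1}\leq 2$ leave enough room to collapse the accumulated $(1+\lambda)$ factors into the single $(1+\eta b^{-1})$ factor of Invariant~\ref{inv:degrees_additive}. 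So for this theorem I would keep the proof short: state the three bullets, dispatch correctness via Lemmas~\ref{lemma:rank_invariant_maintenance} and~\ref{lemma:inv_modified_implies_normal}, dispatch recursive depth by pointing to the unchanged recursion condition and the Theorem~\ref{thm:invariant1} argument, dispatch timing via Lemma~\ref{lem:timebeforerecurse}, multiply out, and invoke Corollary~\ref{cor:inv1} for the final out-degree bound. I expect no genuinely new obstacle here — the theorem is a packaging of the Section~\ref{app:improved-rank} machinery under the specific parameter setting $\theta=1$, $b=1$.
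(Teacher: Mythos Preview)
Your proposal is correct and matches the paper's own argument essentially line for line: the paper's ``Concluding our argument'' paragraph does exactly what you describe --- invoke Lemma~\ref{lemma:rank_invariant_maintenance} for correctness, observe that the recursion-triggering condition in Algorithms~\ref{alg:insertion_rank}/\ref{alg:deletion_rank} coincides with that in Algorithms~\ref{alg:insertion}/\ref{alg:deletion} so the recursive-depth bound from Theorem~\ref{thm:invariant1} transfers verbatim, cite Lemma~\ref{lem:timebeforerecurse} for the per-level cost, multiply by $b=1$, and read off the out-degree bound from Corollary~\ref{cor:inv1}. Your write-up is if anything slightly more explicit than the paper's (which treats Theorems~\ref{thm:rank0} and~\ref{thm:rank1} together in one short paragraph), but the content and structure are identical.
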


\section{Improved amortised algorithms}
\label{sec:amortized}

Previously, we relied upon the fact that vertices in $v \in N^-(u)$ were put in buckets based on their exact out-degree $d^+(v)$.
Maintaining these exact values requires $\Omega(\rho)$ update time, and is thus not a suitable option when we aim for polylogarithmic update time.
To this end, we store for all edges $\overrightarrow{uv}$ a single integer $\phi(u, v)$ which we will call their threshold value.
Note that $\phi(u, v) \neq \phi(v, u)$. 
We base our algorithmic logic and analysis on the threshold value instead. 
We maintain for all $u$:

\begin{enumerate}[(a), noitemsep]
    \item The value $d^+(u)$ of the current orientation $\overrightarrow{G}^b$,
    \item The set $N^+(u)$ as a sorted doubly linked list of linked lists $L_j(u)$. 
    Each $L_j(u)$ contains all $w \in N^+(u)$ with $\phi(u, w) = j$ as a linked list in arbitrary order. 
    The linked lists $L_j(u)$ are stored in a linked list sorted by $j$. 
    We maintain a pointer to the location $j = d^+(u)$. 
    \item The set $N^-(u)$ in a sorted doubly linked list of buckets $B_j(u)$.
    Each bucket $B_j(u)$ contains, as a doubly linked list in arbitrary order, all $w \in N^-(u)$ where
    $ 
    j = \floor{ \log_{(1 + \lambda)} d^+(w)}$.
    The vertex $u$ has a pointer to the bucket $B_i(u)$ with $i = \floor*{ \log_{(1 + \lambda)} \max\set*{(1+\lambda)d^+(u),\floor*{\frac{b}{4}}}}$.
\end{enumerate}

\noindent
Any update in $G$, invokes Algorithms~\ref{alg:insertion_in_G}+\ref{alg:deletion_in_G}.
These algorithms now  invoke Algorithm~\ref{alg:insertion_amort} or \ref{alg:deletion_amort} (instead of~\ref{alg:insertion} or \ref{alg:deletion}).
These two in turn invoke the normal add and remove functions (Algorithm~\ref{alg:add}+\ref{alg:remove}). 

\noindent\begin{minipage}[t]{.5 \textwidth}
\null 
 \begin{algorithm2e}[H]
    \caption{\mbox{Insert$(\overrightarrow{uv})$, {\small where $d^+(u)\leq d^+(v)$}}}
    \label{alg:insertion_amort}
    \begin{algorithmic}
      \STATE Add($\overrightarrow{uv}$)
      \WHILE{  $\exists x \in N^+(u)$ with $d^+(u) \geq \max \{  (1 + \lambda) \cdot \phi(u, x), \floor{\frac{b}{4}}$ \} }
        \IF{$d^+(x) + 1 < d^+(u)$ }
        \STATE Remove$(\overrightarrow{ux})$
        \STATE \emph{Insert}$(\overrightarrow{xu})$
     \RETURN
        \ELSE
        \STATE $\phi(u, x) = d^+(u)$
        \STATE Update $x$ in $ N^+(u)$ and $u$ in $N^-(x)$ 
        \ENDIF
      \ENDWHILE
    \end{algorithmic}
  \end{algorithm2e}
\end{minipage}~%
\begin{minipage}[t]{.5\textwidth}
\null
 \begin{algorithm2e}[H]
    \caption{Delete$(\overrightarrow{uv})$}
    \label{alg:deletion_amort}
    \begin{algorithmic}
      \STATE Remove($\overrightarrow{uv}$)
      \FOR{ $i$ in decreasing order} 
      \FORALL{$ x \in B_i(u)$}
      \IF{$d^+(x) > \max \{ (1 + \lambda) d^+(u) + \theta, \floor{\frac{b}{4}} \}$}
        \STATE Add$(\overrightarrow{ux})$
        \STATE \emph{Delete}$(\overrightarrow{xu})$
         \RETURN
        \ELSIF{$\phi(x, u) > (1 + \lambda)d^+(x)$}
        \STATE $\phi(x, u) = d^+(x)$
        \STATE Update $x$ in $N^-(u)$ and $u$ in $N^+(x)$
        \ELSE
        \RETURN
        \ENDIF
        \ENDFOR
      \ENDFOR
    \end{algorithmic}
  \end{algorithm2e}
\end{minipage}


\begin{lemma}
\label{lem:amor_eq_1}
    Suppose that the graph $G^b$ contains $\overrightarrow{ux}$. Then 
    $d^+(u) \leq  \max\{ (1 + \lambda) \cdot \phi(u, x), \floor{\frac{b}{4}} \}$.
\end{lemma}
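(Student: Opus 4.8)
The plan is to prove Lemma~\ref{lem:amor_eq_1} as an invariant that is established whenever an edge $\overrightarrow{ux}$ is created, and preserved under every operation that can change either $d^+(u)$ or $\phi(u,x)$ while $\overrightarrow{ux}$ remains in $\overrightarrow{G}^b$. So first I would identify the complete list of events that can affect the quantity $d^+(u) - \max\{(1+\lambda)\phi(u,x),\floor{b/4}\}$ for an existing arc $\overrightarrow{ux}$: (i) the arc is first added by \textnormal{Add}$(\overrightarrow{ux})$ inside Algorithm~\ref{alg:insertion_amort} (the outer Insert of a fresh $G^b$-edge) or inside the \textnormal{Add}$(\overrightarrow{ux})$ step of Algorithm~\ref{alg:deletion_amort}; (ii) $d^+(u)$ is incremented by an \textnormal{Add}$(\overrightarrow{uy})$ for some other $y$ (i.e. during an Insert rooted at $u$); (iii) $d^+(u)$ is decremented; (iv) $\phi(u,x)$ is reassigned in the \texttt{else} branch of the while-loop of Algorithm~\ref{alg:insertion_amort}, or in the \texttt{elsif} branch of Algorithm~\ref{alg:deletion_amort}.

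Next I would check each case. For (iii), decrementing $d^+(u)$ only makes the inequality easier, so nothing to do. For (iv) via Algorithm~\ref{alg:insertion_amort}: the loop sets $\phi(u,x) = d^+(u)$ exactly when $d^+(u) \geq \max\{(1+\lambda)\phi_{\text{old}}(u,x),\floor{b/4}\}$, and afterwards $d^+(u) = \phi(u,x) \leq (1+\lambda)\phi(u,x)$, so the invariant holds with room to spare; for (iv) via Algorithm~\ref{alg:deletion_amort}, the \texttt{elsif} sets $\phi(x,u)=d^+(x)$ — but that concerns the arc $\overrightarrow{xu}$, i.e. $u \in N^+(x)$, so relabelling it is exactly a case-(iv) update for that arc, and after it $d^+(x) = \phi(x,u)$, fine. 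The real work is case (ii): when $d^+(u)$ goes up by one during an Insert rooted at $u$, the while-loop of Algorithm~\ref{alg:insertion_amort} keeps iterating as long as there is some $x' \in N^+(u)$ violating $d^+(u) < \max\{(1+\lambda)\phi(u,x'),\floor{b/4}\}$; on each such $x'$ it either flips the edge and returns (so $d^+(u)$ is restored to its pre-increment value, and the invariant for all remaining arcs is re-established at the lower degree) or it refreshes $\phi(u,x') = d^+(u)$. Hence when the loop terminates (without flipping) there is no $x' \in N^+(u)$ with $d^+(u) \geq \max\{(1+\lambda)\phi(u,x'),\floor{b/4}\}$, which is precisely the claimed inequality for every out-neighbour simultaneously; and if it terminates by flipping-and-returning, the relevant increment to $d^+(u)$ was undone. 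For case (i), when \textnormal{Add}$(\overrightarrow{ux})$ first creates the arc, I would note that immediately afterwards either the while-loop body processes $x$ and restores/refreshes as above, or — in the deletion algorithm — the \textnormal{Add}$(\overrightarrow{ux})$ is only performed when $d^+(x) > \max\{(1+\lambda)d^+(u)+\theta,\floor{b/4}\}$ with $\phi(u,x)$ having just been (or about to be) set appropriately; I would trace through Algorithm~\ref{alg:deletion_amort} carefully to confirm $\phi(u,x)$ is consistent at creation time, presumably using that $x$ was in some bucket $B_i(u)$ and $\phi(u,x)$ was maintained there.

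I would organise the write-up as an induction over the discrete time steps of Definition~\ref{def:time} together with the recursive call depth: the invariant holds vacuously before any updates, and I verify it is preserved through each recursive Insert/Delete call using the case analysis above, with the subtle point that intermediate configurations inside a recursive chain need not satisfy it for the vertex currently being processed — only the final configuration does, which is exactly what the lemma asserts (it speaks of the state when the algorithms have finished). I expect the main obstacle to be case (i)/(ii) bookkeeping in Algorithm~\ref{alg:deletion_amort}: there, an \textnormal{Add}$(\overrightarrow{ux})$ happens inside the doubly-nested loop, and I need to argue that the freshly-incremented $d^+(u)$ still respects $\max\{(1+\lambda)\phi(u,x),\floor{b/4}\}$ — this should follow because $x$ was examined from bucket $B_i(u)$ where its out-rank (hence an upper bound related to $\phi(u,x)$) was large enough, i.e. from the flip condition $d^+(x) > \max\{(1+\lambda)d^+(u)+\theta,\floor{b/4}\}$ and the relation between $\phi$ and the bucket index, but making this chain of inequalities airtight, especially the interplay with the $\floor{b/4}$ term and the $+\theta$ slack, is the delicate part.
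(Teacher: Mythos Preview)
Your overall structure — an induction over the events that can affect the inequality for a fixed arc $\overrightarrow{ux}$ — is exactly the paper's approach, and your handling of cases (ii), (iii), (iv) matches the paper's proof.

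The one place you go astray is case (i), specifically the \textnormal{Add}$(\overrightarrow{ux})$ inside Delete. You propose to derive the bound on $\phi(u,x)$ from the bucket $B_i(u)$ that $x$ sits in and from the flip condition $d^+(x) > \max\{(1+\lambda)d^+(u)+\theta,\floor{b/4}\}$. But both of those concern $d^+(x)$ and $\phi(x,u)$ (the label on the arc $\overrightarrow{xu}$), not $\phi(u,x)$; the buckets $B_i(u)$ are for in-neighbours and carry no information about $\phi(u,\cdot)$. So this route cannot establish the inequality you need, and the ``delicate interplay with $\floor{b/4}$ and $+\theta$'' you anticipate is a phantom.

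The actual resolution is the observation you already used for case (iv): whenever $\phi(u,x)$ is assigned — and the paper treats arc creation as one such assignment — it is assigned the current value of $d^+(u)$, giving $d^+(u)=\phi(u,x)\leq(1+\lambda)\phi(u,x)$ immediately. For the Add inside Delete, note further that it exactly undoes the preceding Remove, so $d^+(u)$ is restored to its prior value and all \emph{previously existing} out-arcs of $u$ remain fine by induction; the freshly created arc is handled by the initialisation of $\phi(u,x)$. Case (i) is therefore no harder than case (iv).
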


\begin{proof}
    Fix some arc $\overrightarrow{ux}$. 
Whenever the value $\phi(u, x)$ is set, it is set to $d^+(u)$. 
Thus, we satisfy the inequality.  The only risk to the desired inequality is increasing $d^+(u)$, whilst it is bigger than $\floor{\frac{b}{4}} $.

Suppose $d^+(u)$  is momentarily increased after adding some arc $\overrightarrow{uv}$. If the inequality for $\overrightarrow{ux}$ is violated, then $x$ is eligible for the while loop.

If the while loop processes $x$, then it will either flip $\overrightarrow{ux}$ or resets $\phi(u,x)$.  If it flips $\overrightarrow{ux}$, then $\overrightarrow{ux}$ is removed from the orientation so there is no inequality to satisfy. Otherwise, we reset $\phi(u,x)$ to $d^+(u)$, which satisfies the inequality.

If the while loop doesn't process $x$, then it must have selected another vertex $y \in N^+(u)$ and flipped $\overrightarrow{uy}$ before processing $x$. In this case, $d^+(u)$ is restored to its previous value when the inequality for $\overrightarrow{ux}$ was satisfied.
\end{proof}

\begin{lemma}
\label{lem:amor_eq_2}
Suppose that $G^b$ contains an edge $\overrightarrow{uz}$. Then:
    $\phi(u, z)  \leq \max \{ (1 + \lambda)^3 ( d^+(z) + \theta), \floor{ \frac{b}{4}} \}$.
\end{lemma}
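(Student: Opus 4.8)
\textbf{Proof plan for Lemma~\ref{lem:amor_eq_2}.}
The plan is to track the value $\phi(u,z)$ over time and relate it to $d^+(z)$ at the moment $\phi(u,z)$ was last written, then propagate that bound to the present. Whenever $\phi(u,z)$ is set, it is set to $d^+(u)$ for the \emph{current} value of $d^+(u)$; so I would first identify, for the edge $\overrightarrow{uz}$ currently in $G^b$, the last time $s$ at which $\phi(u,z)$ received a value. There are two cases for how $\phi(u,z)$ was written at time $s$: either by the \textbf{else}-branch of Algorithm~\ref{alg:insertion_amort} run from $u$ (so $\phi(u,z)_s = d^+(u)_s$, and at that moment the while-loop guard held, i.e. $d^+(u)_s \ge \max\{(1+\lambda)\phi(u,z)_{\text{prev}},\floor{b/4}\}$), or by the \textbf{elsif}-branch of Algorithm~\ref{alg:deletion_amort} run from $z$ (so $\phi(u,z)_s = d^+(z)_s$ directly, where the new edge orientation is $\overrightarrow{uz}$ — here the roles of the two endpoints in that algorithm's notation need to be matched carefully: in Algorithm~\ref{alg:deletion_amort} the loop runs over $x \in B_i(u)$ and resets $\phi(x,u)$, so we are in the instantiation where the deletion algorithm is run from $z$ and $u$ plays the role of the looped neighbour).

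The cleanest route is probably to combine Lemma~\ref{lem:amor_eq_1} with Invariant~\ref{inv:degrees_modified} (available via Lemma~\ref{lemma:rank_invariant_maintenance} / the correctness of the amortised algorithms, or more directly from the local conditions enforced). Concretely: at the time $\phi(u,z)$ was last set to some value $v^\star$, I claim $v^\star \le \max\{(1+\lambda)^2(d^+(z)_{\text{then}}+\theta),\floor{b/4}\}$ — in the insert case because the loop guard forced $d^+(u)$ to have grown past $(1+\lambda)$ times the old $\phi(u,z)$, but the \emph{new} $\phi(u,z)=d^+(u)$ is still tied to $d^+(z)$ through the fact that the flip did not happen (so $d^+(z)+1 \ge d^+(u)$, giving $d^+(u) \le d^+(z)+1 \le (1+\lambda)(d^+(z)+\theta)$ when $d^+(u) > \floor{b/4}$, using $\theta \ge 1$ in the relevant parameter regime, or the $\floor{b/4}$ branch otherwise); in the delete case $\phi(u,z)=d^+(z)$ exactly. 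Then I must account for how $d^+(z)$ can only \emph{decrease} between that time and now without $\phi(u,z)$ being rewritten: this is the key monotonicity point — if $d^+(z)$ dropped enough that $\phi(z\text{-side threshold})$ needed updating, the elsif-branch of the deletion algorithm would have fired and reset $\phi(u,z)$; since it did not, $d^+(z)$ has not dropped below $(1+\lambda)^{-1}$ of its value at time $s$. Multiplying the accumulated $(1+\lambda)$ factors and bounding $(1+\lambda)^3$ from above yields the stated inequality.

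The main obstacle I anticipate is the bookkeeping of \emph{which} algorithm and \emph{which} endpoint-role last touched $\phi(u,z)$, and making the case analysis airtight across the threshold $\floor{b/4}$ and the $\theta \in \{0,1\}$ dichotomy — in particular showing that the $\floor{b/4}$ branch of the max is correctly preserved when the out-degrees are small, and that no update path can leave $\phi(u,z)$ stale by more than three multiplicative $(1+\lambda)$ steps relative to the current $d^+(z)$. A secondary subtlety is that the deletion algorithm's elsif-branch only rewrites $\phi(x,u)$ when it walks the bucket containing $x$ and has not yet returned; I would need the invariant that the buckets are scanned in decreasing order of (a function of) out-degree together with the early \textbf{return} to argue that if $\overrightarrow{uz}$ survives in $G^b$ with a stale $\phi$, then necessarily $\phi(u,z) \le (1+\lambda)d^+(z)$ still held at the last relevant deletion — otherwise the branch would have triggered. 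Once these invariants are pinned down, the arithmetic collapsing the factors into $(1+\lambda)^3$ is routine.
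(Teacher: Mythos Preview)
Your high-level structure matches the paper's: argue the inequality holds when $\phi(u,z)$ is (re)set, and then show it is preserved as $d^+(z)$ subsequently decreases. But two of your proposed mechanisms are wrong, and one of them is exactly where the $(1+\lambda)^3$ comes from.

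First, the detour through Invariant~\ref{inv:degrees_modified} is circular here: for the amortised algorithms, that invariant is only established as Corollary~\ref{cor:amortizedinv}, which \emph{uses} this very lemma. (Lemma~\ref{lemma:rank_invariant_maintenance} is about Algorithms~\ref{alg:insertion_rank}+\ref{alg:deletion_rank}, not Algorithms~\ref{alg:insertion_amort}+\ref{alg:deletion_amort}.)

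Second, and more importantly, your ``key monotonicity point'' is false: it is \emph{not} true that if $\phi(u,z)$ was not rewritten then $d^+(z)$ has stayed within a $(1+\lambda)^{-1}$ factor of its old value. The deletion loop at $z$ may hit the \textbf{else}-return at some other in-neighbour $t$ \emph{before} it ever reaches $u$, so $\phi(u,z)$ stays untouched while $d^+(z)$ drops repeatedly. What you call a ``secondary subtlety'' is in fact the main case, and the bound you state there, $\phi(u,z)\le(1+\lambda)d^+(z)$, is too strong. The paper instead argues inductively across each deletion that decrements $d^+(z)$: if the loop returns early at $t$, then from the bucket ordering $\phi(u,z)\le(1+\lambda)\phi(t,z)$; from the failed \textbf{elsif} guard $\phi(t,z)\le(1+\lambda)d^+(t)$; and from the failed \textbf{if} guard $d^+(t)\le(1+\lambda)d^+(z)+\theta$. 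Chaining these three gives $\phi(u,z)\le(1+\lambda)^3(d^+(z)+\theta)$ directly for the \emph{new} value of $d^+(z)$, with no need to control how far $d^+(z)$ has fallen since $\phi(u,z)$ was last set. This chain is precisely the origin of the cube.

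Finally, your insert-case bound relies on ``$\theta\ge 1$ in the relevant parameter regime'', but $\theta=0$ is one of the two regimes the lemma must cover; you will need to handle that branch via the $\floor{b/4}$ term rather than by assuming $\theta\ge 1$.
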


\begin{proof}
    Fix an arc $\overrightarrow{uz}$. When the value $\phi(u, z)$ is set, it is set to $d^+(u)$ at a point in time where $d^+(u) \leq \max\{  (1 + \lambda) d^+(z)+ \theta, \floor{\frac{b}{4}} \}$. This  satisfies the desired inequality. The only risk to the desired inequality is when $d^+(z)$ decreases.
    Now we perform a case distinction. 
    If $\floor{\frac{b}{4}}  \geq (1 + \lambda)d^+(z) + \theta$, then decreasing $d^+(z)$ did not change the fact that previously $\phi(u, z) \leq \floor{\frac{b}{4}}$.

    Suppose $d^+(z) \geq \frac{b}{4}$, and that it momentarily decreases after deleting $\overrightarrow{zy}$ for some $y$.
    We know that $u \in N^{-}(z)$, so $u \in B_j(z)$ for some integer $j$. 
    In the loop of Delete$(\overrightarrow{zy})$, we consider three cases: 

    (a): we encounter the vertex $u \in N^+(z)$, without hitting any of the two returns. 
    We know that $\phi(u, z) > (1 + \lambda) d^+(u)$ and we set $\phi(u, z)$ to be $d^+(u)$.
    This decreases $\phi(u, z)$, which means that we continue satisfying the inequality.

    (b): we flip an arc $\overrightarrow{tz}$ and return, restoring $d^+(z)$ to its original (inequality-satisfying) value.

    (c): before reaching case (a), we encounter an arc $\overrightarrow{tz}$ for $t \in N^-(z)$ which causes us to hit return (the else in our code). Because we reached this point in the code before case (a) we know that:

    \begin{enumerate}[]
        \item $\phi(u,z) \leq (1 + \lambda) \phi(t,z)$ (we loop over all buckets $B_i(z)$ in decreasing order and did not encounter the vertex $u$. Thus, $u$ is either in the same bucket as $t$ or in a lower bucket). 
        \item $\phi(t,z) < (1 + \lambda) d^+(t)$.
        \item $d^+(t) \leq  \max \{ (1 + \lambda)d^+(z) + \theta, \floor{\frac{b}{4}} \} = (1 + \lambda)d^+(z) + \theta$.
    \end{enumerate}
    Combining these inequalities, we have
    \begin{align*}
    \phi(u,z) \leq (1 + \lambda) \phi(t,z) \leq (1 + \lambda)^2 d^+(t) \leq 
(1 + \lambda)^2 \cdot  ( (1 + \lambda)d^+(z) + \theta)
    \end{align*}
Hence, we recover that $\phi(u,z) \leq (1 + \lambda)^3 (d^+(z) + \theta)$.
\end{proof}

\begin{corollary}
    \label{cor:amortizedinv}
    Our amortised algorithms maintain Invariant~$\theta'$. 
\end{corollary}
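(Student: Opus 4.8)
The plan is to combine Lemmas~\ref{lem:amor_eq_1} and~\ref{lem:amor_eq_2} to directly verify Invariant~\ref{inv:degrees_modified} (i.e., Invariant~$\theta'$) for an arbitrary directed edge $\overrightarrow{uz}$ in $\overrightarrow{G}^b$. Fix such an edge. Lemma~\ref{lem:amor_eq_1} gives $d^+(u) \leq \max\set{(1+\lambda)\phi(u,z), \floor{b/4}}$, and Lemma~\ref{lem:amor_eq_2} gives $\phi(u,z) \leq \max\set{(1+\lambda)^3(d^+(z)+\theta), \floor{b/4}}$. Chaining these yields
\[
d^+(u) \leq \max\set*{(1+\lambda)^4\bigl(d^+(z)+\theta\bigr),\ (1+\lambda)\floor*{\tfrac{b}{4}},\ \floor*{\tfrac{b}{4}}},
\]
after distributing the outer $\max$ over the inner one. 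The first step is to simplify the two ``$\floor{b/4}$'' terms: since $(1+\lambda)\floor{b/4} \geq \floor{b/4}$, only the former survives, and $(1+\lambda)\floor{b/4} \leq \floor{b/2}$ provided $\lambda$ is small enough and $b \geq 2$, which holds by our parameter choices (recall $(1+\lambda)^5 \leq 1+\eta b^{-1} \leq 2$). The second step is to bound the main term: $(1+\lambda)^4(d^+(z)+\theta) \leq (1+\lambda)^4 d^+(z) + (1+\lambda)^4\theta \leq (1+\eta b^{-1}) d^+(z) + 2\theta$, using $(1+\lambda)^4 \leq 1+\eta b^{-1}$ for the multiplicative part and $(1+\lambda)^4 \leq 2$ together with $\theta \in \set{0,1}$ for the additive part. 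Together these give exactly $d^+(u) \leq \max\set{(1+\eta b^{-1})d^+(z)+2\theta,\ \floor{b/2}}$, which is Invariant~$\theta'$ for the edge $\overrightarrow{uz}$.

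One gap to address is that Lemmas~\ref{lem:amor_eq_1} and~\ref{lem:amor_eq_2} as stated assert inequalities ``whenever $G^b$ contains the edge $\overrightarrow{ux}$'', but their proofs argue about the invariant being \emph{restored} after each potential violation is processed (e.g., the while loop in Algorithm~\ref{alg:insertion_amort} and the nested loop in Algorithm~\ref{alg:deletion_amort}). So I would be careful to state the corollary as: whenever Algorithms~\ref{alg:insertion_amort} and~\ref{alg:deletion_amort} terminate, the resulting orientation $\overrightarrow{G}^b$ satisfies Invariant~$\theta'$. The point is that both lemmas hold at every moment $\overrightarrow{ux}$ (resp.\ $\overrightarrow{uz}$) is present in the stable configuration between top-level updates, because every operation that could break one of the two inequalities is immediately followed by the repair logic that the lemma proofs analyse. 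I would also note that the chain of recursive Insert/Delete calls only changes the out-degree of the final vertex $x_f$, so as in Theorems~\ref{thm:invariant0} and~\ref{thm:invariant1}, all edge constraints not touching $x_f$ are untouched, and the two lemmas cover the constraints incident to $x_f$.

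I do not anticipate a serious obstacle here — the corollary is essentially bookkeeping that packages the two technical lemmas into the invariant statement. The one place requiring mild care is the arithmetic reconciling the various powers of $(1+\lambda)$ with $1+\eta b^{-1}$ and the floor terms; this is exactly the kind of slack-tracking done in Lemma~\ref{lemma:delete_rank} and Lemma~\ref{lemma:insert_rank}, and the chosen relation $(1+\lambda)^5 \leq 1+\eta b^{-1} \leq 2$ is precisely calibrated to absorb the $(1+\lambda)^4$ that appears after composing the two lemmas, leaving one factor of $(1+\lambda)$ to spare for the floor term. So the ``hard part'' is really just making sure the composed constants land inside the budget, which the parameter setup already guarantees.
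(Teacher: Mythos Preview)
Your proof is correct and takes essentially the same approach as the paper: combine Lemma~\ref{lem:amor_eq_1} and Lemma~\ref{lem:amor_eq_2} for a fixed edge $\overrightarrow{uz}$ to obtain $d^+(u) \leq \max\set*{(1+\lambda)\max\set*{(1+\lambda)^3(d^+(z)+\theta),\floor{b/4}},\floor{b/4}} \leq \max\set*{(1+\eta b^{-1})d^+(z)+2\theta,\floor{b/2}}$. The paper compresses this into a single displayed chain; you spell out the arithmetic more carefully, but the argument is identical.
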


\begin{proof}
We combine Lemma~\ref{lem:amor_eq_1} and \ref{lem:amor_eq_2} to get that for all $z \in N^-(u)$: \newline
$d^+(u) \leq \max \{ (1 + \lambda)  \max\{ (1 + \lambda)^3 (d^+(z) + \theta), \floor{\frac{b}{4}} \},\floor{\frac{b}{4}}\}  \leq \max \set*{ (1 + \eta b^{-1}) d^+(z) + 2\theta, \floor{\frac{b}{2}}}$.
\end{proof}

\subsection{Running time analysis}
We now move on to the amortised analysis of the algorithm. At a high level, the idea is as follows. Recall that for each arc $\overrightarrow{ux}$ we have a label $\phi(u,x)$ equal to $d^+(u)$ at some point in time. The labels $\phi(u,x)$ guide the data structure by suggesting arcs to flip. When we operate on an arc $\overrightarrow{ux}$ based on $\phi(u,x)$, if $\overrightarrow{ux}$ is not in fact a good arc to work with, then $d^+(u)$ must have deviated substantially from $\phi(u,x)$, and we reset $\phi(u,x)$ to $d^+(u)$. Loosely speaking, we amortised the effort to relabel  $\overrightarrow{ux}$ against the change to $d^+(u)$.

\begin{lemma}
    Adding an arc to $G^b$ takes $\BigOh( \lambda^{-1})$ amortised time.
\end{lemma}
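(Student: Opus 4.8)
The plan is to bound the amortised cost of a single call to \textnormal{Add}$(\overrightarrow{uv})$ in $G^b$ by $\BigOh(\lambda^{-1})$, using a potential function argument keyed on the discrepancy between $d^+(u)$ and the thresholds $\phi(u,x)$. First I would observe that the actual (non-recursive) work done by a single invocation of Algorithm~\ref{alg:insertion_amort} at vertex $u$ consists of: the $\BigOh(1)$ bookkeeping of \textnormal{Add}, plus one iteration of the while loop for each out-neighbour $x$ that gets its threshold $\phi(u,x)$ reset (each such iteration costs $\BigOh(1)$: updating $x$ in $N^+(u)$ uses the pointer at location $d^+(u)$, and moving $u$ within the buckets of $N^-(x)$ changes its bucket index by a bounded amount by Lemma~\ref{lemma:rank_bounded_change}-style reasoning), plus possibly one final iteration that triggers a flip and a recursive call. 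So the non-recursive cost of the whole chain of recursive \textnormal{Insert} calls is $\BigOh(1)$ per level plus $\BigOh(1)$ per threshold reset across all levels; the recursion depth is $\BigOh(\lambda^{-1}\log\rho)$ by the arguments imported from Theorems~\ref{thm:invariant0}+\ref{thm:invariant1}, but I want to charge the potentially large number of threshold resets to a potential.

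The key step is to design the potential. For each arc $\overrightarrow{ux}$ present in $\overrightarrow{G}^b$, define something like $\Phi(\overrightarrow{ux}) = \max\{0, c\cdot\lambda^{-1}\cdot(\text{number of unit increments } d^+(u) \text{ can still absorb before } \overrightarrow{ux} \text{ forces a reset})\}$ — more concretely, a term that is proportional to how far $d^+(u)$ currently is below the threshold $(1+\lambda)\phi(u,x)$ at which $x$ becomes eligible for the while loop. When $\phi(u,x)$ is freshly set to $d^+(u)$, Lemma~\ref{lem:amor_eq_1}'s proof shows the next reset cannot happen until $d^+(u)$ has grown by a $(1+\lambda)$ factor, i.e.\ by roughly $\lambda\cdot d^+(u)$ increments, so one reset is "paid for" by $\Theta(\lambda\cdot d^+(u))$ increments of $d^+(u)$ — but each such increment is itself an \textnormal{Add} operation. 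The cleanest route: let the potential count, for each vertex $u$, a quantity like $\lambda^{-1}$ times the number of out-arcs of $u$ whose threshold is stale, or alternatively assign each \textnormal{Add} that increments $d^+(u)$ a credit of $\BigOh(1)$ that accumulates on $u$ and is released to pay for the next batch of resets. Since resetting all stale thresholds of $u$ happens only when $d^+(u)$ crosses a $(1+\lambda)$-geometric threshold, and at that point there are at most $d^+(u)$ of them while at least $\Omega(\lambda d^+(u))$ \textnormal{Add}s occurred since the last such crossing, the per-\textnormal{Add} amortised charge is $\BigOh(\lambda^{-1})$.

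I would then assemble the argument: the true cost of \textnormal{Add} plus the ensuing \textnormal{Insert}-chain is (recursion-depth $\times \BigOh(1)$) plus (total resets $\times \BigOh(1)$); the recursion-depth part is actually also $\BigOh(\lambda^{-1})$ amortised rather than $\BigOh(\lambda^{-1}\log\rho)$ worst case — here I would either absorb it into the same potential (a flip at a vertex $w$ on the chain is itself triggered because $d^+(w)$ moved past a threshold, so the chain length is governed by the same geometric-progression bookkeeping used in the recursive-depth analysis of Theorems~\ref{thm:invariant0}+\ref{thm:invariant1}, and one charges the $\BigOh(1)$ cost at each chain vertex to the increment that displaced it) or cite the analogous amortisation already standard for this chain-flipping scheme. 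Finally, each threshold reset and each chain-flip cost is covered by released potential, and $\Delta\Phi$ per \textnormal{Add} is $\BigOh(\lambda^{-1})$, giving the claimed amortised bound.

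The main obstacle I anticipate is getting the potential function to simultaneously (i) pay for the unbounded-in-the-worst-case number of threshold resets in a single while loop, (ii) pay for the depth of the recursive flip-chain, and (iii) remain a valid potential under \emph{deletions} as well (deletions decrement $d^+$, which can only help the \textnormal{Insert}-side invariants but interacts with the symmetric analysis for \textnormal{Delete}), all while only ever charging $\BigOh(\lambda^{-1})$ per operation and never going negative. The delicate point is that $d^+(u)$ appears in the geometric thresholds, so the "number of increments until the next reset" scales with the current degree, and one must be careful that the credits deposited by small-degree operations suffice for resets that occur at larger degree — the thresholded $\floor{b/4}$ floor in the invariant is exactly what rescues this, since below that floor no resets are forced, mirroring the role it plays in Lemmas~\ref{lem:amor_eq_1} and~\ref{lem:amor_eq_2}.
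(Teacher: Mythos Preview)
Your credit-scheme intuition is right and is essentially the paper's argument, but there is a genuine error in how you handle the recursion depth. You claim the depth of Algorithm~\ref{alg:insertion_amort} is $\BigOh(\lambda^{-1}\log\rho)$ ``by the arguments imported from Theorems~\ref{thm:invariant0}+\ref{thm:invariant1}''. That import is invalid: those theorems analyse Algorithms~\ref{alg:insertion}+\ref{alg:deletion}, whose recursive test $d^+(u) > (1+\lambda)d^+(x)+\theta$ forces geometric decay of degrees along the chain. In Algorithm~\ref{alg:insertion_amort} the recursive branch is taken under the weaker condition $d^+(x)+1 < d^+(u)$, so degrees along the flip-chain need only drop by $1$ per step, and the recursion depth can be $\Theta(\rho)$ in the worst case (the paper states this explicitly). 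Your ``recursion-depth $\times\,\BigOh(1)$'' term therefore cannot be bounded the way you propose.

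The fix is the thing you list only as an ``alternative'' without committing to it, and it is in fact the entire argument: every while-loop iteration---whether it resets $\phi(u,x)$ \emph{or} flips $\overrightarrow{ux}$---is triggered by the same staleness condition $d^+(u)\geq(1+\lambda)\phi(u,x)$. So flips are not a separate cost to be controlled via recursion depth; they are just another kind of stale-threshold event, and the same credit pays for them. The paper does exactly this: charge $\BigOh(\lambda^{-1})$ credits to each operation that (net) increments some $d^+(u)$, spread them uniformly over $N^+(u)$ so each out-arc receives $\Omega(1/d^+(u))$, and observe that by the time any arc $\overrightarrow{ux}$ is processed in the while loop, $d^+(u)$ has grown by a $(1+\lambda)$-factor since $\phi(u,x)$ was set, hence the arc has accumulated $\Omega(1)$ credit. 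Separating flips from resets, as you do, manufactures a gap that the (incorrect) worst-case depth bound cannot close; once you merge them, the recursion depth never needs to be bounded at all.
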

\begin{proof}
    We note that the recursive depth of Insert($\overrightarrow{uv}$) may be $\BigOh(\rho)$.
    However, we show that the amortised cost of each edge that we process is not too bad. 

    Observe that the net effect of adding an arc, after all flips, is to increase the out-degree $d^+(u)$ of a single vertex $u$ by $1$.

    Now, the running time of adding an arc is proportional to the number of arcs $\overrightarrow{ux}$ processed in the while loop over all recursive calls to Insert($\overrightarrow{uv}$). Each such edge $\overrightarrow{ux}$ has $d^+(u) \geq (1 + \lambda) \phi(u,x) + 2 \theta$, where $\phi(u,x)$ was set to $d^+(u)$ at a previous point in time. Consequently $d^+(u)$ has increased by at least a $(1+\lambda)$-factor since $\phi(u,x)$ was set.

    We amortise the time spent processing arcs $\overrightarrow{ux}$ for fixed $u$ against the increase to $d^+(u)$. Each time an edge insertion results in increase $d^+(u)$, we pay for $\BigOh(1/\lambda)$ units of work distributed uniformly over $N^+(u)$. That is, each $x \in N^+(u)$ receives $\Omega(1 / d^+(u))$ fractional credits. By the time an arc $\overrightarrow{ux}$ is processed in the while loop of insertion, $\overrightarrow{ux}$ has acquired at least one unit of credit, which pays for the time to process it.
\end{proof}

\begin{lemma}
    Removing an arc from $G^b$ takes $\BigOh(\lambda^{-1}\log \rho)$ amortised time.
\end{lemma}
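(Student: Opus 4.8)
The plan is to write the cost of one arc removal---i.e.\ one top-level call \textsc{Delete}$(\overrightarrow{uv})$ in $G^b$---as $\BigOh(D+R)$, where $D$ is the number of recursive calls in the cascade and $R$ is the total number of arcs $\overrightarrow{xu}$ that reach the \textsc{elsif} branch of Algorithm~\ref{alg:deletion_amort} (the relabels $\phi(x,u)\gets d^+(x)$) over the whole cascade. This decomposition is valid because every recursive invocation does $\BigOh(1)$ work beyond its relabels---locating First(Max(Bucket($N^-(u)$))), performing the flip, and the constant-time bucket/list updates the maintained pointers allow---and in a single invocation the number of loop iterations is at most one more than the number of relabels it performs. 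I would then bound $D$ in the worst case and amortise $R$.

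Bounding $D$ is immediate from earlier work: a flip is triggered by exactly the condition $d^+(x)>\max\{(1+\lambda)d^+(u)+\theta,\floor{b/4}\}$ analysed in Theorems~\ref{thm:invariant0} and~\ref{thm:invariant1}, so along the flip chain $x_0,x_1,\dots$ the (true) out-degrees strictly increase, and once they exceed $\floor{b/4}$ they grow by a $(1+\lambda)$ factor at each step; since Invariant~$\theta'$ held before the update (Corollary~\ref{cor:amortizedinv}, together with Lemma~\ref{lemma:inv_modified_implies_normal} and Corollaries~\ref{cor:inv0}/\ref{cor:inv1}) every $d^+(x_i)$ is $\BigOh(\rho)$ (resp.\ $\BigOh(\rho+\log n)$), whence $D\in\BigOh(\lambda^{-1}\log\rho)$ in the worst case.

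For $R$ I would use the credit scheme dual to the one used for insertions. Recall that every arc removal in $G^b$ has the net effect of decreasing $d^+(x_f)$ by $1$ for a single vertex $x_f$ (all other changes in the cascade cancel); whenever this happens, deposit $\Theta(\lambda^{-1})$ credits, spread uniformly over the arcs $\overrightarrow{x_f w}$ with $w\in N^+(x_f)$, so that each receives $\Omega(\lambda^{-1}/d^+(x_f))$ credits. Now fix an arc $\overrightarrow{xu}$ that reaches the \textsc{elsif} branch at time $t$. By construction $p:=\phi(x,u)>(1+\lambda)\,d^+(x)_t$, where $p$ is the value of $d^+(x)$ at the last time $s<t$ that $\phi(x,u)$ was set (at a relabel or at the creation of $\overrightarrow{xu}$), and $\overrightarrow{xu}$ has been present throughout $[s,t]$. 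Setting $q=d^+(x)_t\ge 1$, over $[s,t]$ the value $d^+(x)$ descended (net) from $p$ to $q<p/(1+\lambda)$, hence it net-decreased across every level $v\in\{q+1,\dots,p\}$ at least once, each time crediting $\overrightarrow{xu}$ with $\Omega(\lambda^{-1}/v)$. Thus by time $t$ the arc has collected at least
\[
\Omega\!\Big(\lambda^{-1}\sum_{v=q+1}^{p}\tfrac1v\Big)\;\ge\;\Omega\!\Big(\lambda^{-1}\cdot\tfrac{p-q}{p}\Big)\;=\;\Omega(1)
\]
credits (the last step uses $p-q>\tfrac{\lambda}{1+\lambda}p$), which pays for the $\BigOh(1)$ relabel; since the relabel resets $\phi(x,u)$ and hence the accumulated credit, nothing is double-counted. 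As each arc removal in $G^b$ deposits only $\Theta(\lambda^{-1})$ credits, $R$ costs $\BigOh(\lambda^{-1})$ amortised per arc removal, and adding the worst-case $\BigOh(\lambda^{-1}\log\rho)$ skeleton bound proves the lemma.

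The delicate point is the bookkeeping that keeps the credit argument conservative: one must check that each relabel really costs $\BigOh(1)$ using the pointers into $B_\bullet(u)$ and into the $\phi$-sorted lists $L_\bullet(x)$ (this is where the fact that $\phi$ and the out-ranks move in controlled steps is used); that $\overrightarrow{xu}$ is present continuously on $[s,t]$ so that it actually receives the intermediate-level deposits (equivalently, that $\phi$ is (re)initialised to $d^+$ of the tail on arc creation, which is also what Lemmas~\ref{lem:amor_eq_1} and~\ref{lem:amor_eq_2} rely on); and that only the \emph{net} unit changes of out-degrees---not the transient decrement/increment of intermediate vertices on a flip chain, nor the handling of parallel copies in $G^b$---are ever charged, so that the total deposit per arc removal stays $\Theta(\lambda^{-1})$.
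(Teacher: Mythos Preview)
Your overall plan matches the paper's: decompose the cost of a top-level \textsc{Delete} into the length $D$ of the flip chain plus the number of relabel steps $R$, bound $D\in\BigOh(\lambda^{-1}\log\rho)$ exactly as in Theorems~\ref{thm:invariant0}/\ref{thm:invariant1}, and amortise $R$ by depositing $\Theta(\lambda^{-1})$ credits on $N^+(x_f)$ each time a net decrement of $d^+(x_f)$ occurs. The recursion-depth part is fine.

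The gap is in your treatment of the relabel cost. You assert that each relabel is $\BigOh(1)$, appealing to ``$\phi$ and the out-ranks move in controlled steps''. That is not true here. In the data structure (c), $x$ sits in $B_j(u)$ with $j=\floor{\log_{(1+\lambda)}\phi(x,u)}$, and the buckets are only a sorted doubly linked list; the sole pointer $u$ maintains is to $B_i(u)$ with $i=\floor{\log_{(1+\lambda)}\max\{(1+\lambda)d^+(u),\floor{b/4}\}}$. When you reset $\phi(x,u)$ from $p$ to $q=d^+(x)$, the new bucket index can be $\delta:=\floor{\log_{(1+\lambda)}p}-\floor{\log_{(1+\lambda)}q}$ levels below the old one, and nothing bounds $\delta$ by a constant (indeed $d^+(x)$ may have dropped by an arbitrary factor since $\phi(x,u)$ was set, and $q$ need not be within $\BigOh(1)$ ranks of $d^+(u)$ either). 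So moving $x$ costs $\Theta(\delta)$, not $\BigOh(1)$, and the paper explicitly does this case analysis on $\delta$.

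Fortunately your own credit bound is already strong enough---you just threw too much away. Your harmonic sum gives credits $\Omega\bigl(\lambda^{-1}\sum_{v=q+1}^{p}1/v\bigr)=\Omega\bigl(\lambda^{-1}\ln(p/q)\bigr)$, and since $\delta\approx\log_{(1+\lambda)}(p/q)\approx\lambda^{-1}\ln(p/q)$, this is $\Omega(\delta)$, which pays for the $\Theta(\delta)$ bucket move. Replacing your final inequality $\sum 1/v\ge (p-q)/p$ by the sharper $\sum 1/v\ge\ln(p/q)$ (and dropping the claim that relabels are $\BigOh(1)$) repairs the argument and makes it essentially identical to the paper's Case~2 computation.
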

\begin{proof}
    The total running time for a deletion is proportional to the total number of arcs processed in the while loop of Delete$(\overrightarrow{uv})$, over all recursive calls to Delete. Each arc $(x,u)$ processed in the loop (except for the very last one) has one of two outcomes: either it is flipped and we make a recursive call to Delete, or we reset $\phi(x,u)$.

    Since our recursive condition is the same as in Algorithm~\ref{alg:deletion}, we may immediately apply the proofs for upper bounding the recursive depth for deletions from  Theorem~\ref{thm:invariant0} and \ref{thm:invariant1}; showing that the recursive depth is $\BigOh( \lambda^{-1}\log \rho)$. 

    Next we address the number of arcs $\overrightarrow{xu}$ where we reset $\phi(x,u)$.
    We note that $\phi(x,u)$ is only updated when it exceeds $d^+(x)$ by a $(1+\lambda)$-factor.  
    In other words, consider the time $\textsc{start}$ when $\phi(x, u)$ was set.
    At the time $\textsc{end}$ when it is reset to $d^+(x)$, the out-degree of $d^+(x)$ has decreased (by at least a $1 + \lambda$ factor).
    We consider the approximate rank of $d^+(x)$ at two time steps: a lower bound on the rank when $\phi(x, u)$ was set, and an upper bound for when it is reset during a deletion. 
    Formally, we write:  $s = \lfloor \log_{1 + \lambda} \phi(x, u)\rfloor $ and $t = \lceil \log_{1 + \lambda} d^+(x) \rceil$.
    Finally we denote $\delta = s - t$.  
    Note that to update our data structure on $N^-(u)$, we need to move $\phi(x, u)$ by $\Theta(\delta)$ buckets.      
    We make a case distinction based on whether $\delta < 3$ or $\delta \geq 3$. 

    \textbf{Case 1: $\delta < 3$.}
    In this case when setting $\phi(x, u) = d^+(x)$ we need to move $x$ $\BigOh(1)$ buckets in the data structure on $N^-(u)$.
    By the time $\phi(u,x)$ is reset, $d^+(x)$ has decreased by at least $(1 + \lambda)^{s} - (1+ \lambda)^{s - 1} \geq \frac{\lambda \phi(u, x)}{(1 + \lambda)}$ since $\phi(u,x)$ was set. The net effect of each deletion (after all flips and recursive calls) is to decrease the degree of a single vertex $x$ by $1$. When this occurs, we pay for $4 / \lambda$ units of work that are distributed uniformly over $N^+(x)$. Consequently, by the time we reset $\phi(u,x)$ in a call to Delete($\overrightarrow{uv}$) (for some $v$), $x$ has already acquired one fractional unit of work to pay for the $\BigOh(1)$ work.

    \textbf{Case 2: $\delta \geq 3$.}
    In this case, the rank of $d^+(x)$ decreased by at least $\Theta(\delta)$ and at least three levels. 
The net effect of each deletion (after all flips and recursive calls) is to decrease the degree of a single vertex $x$ by $1$, at which point we distribute $4 / \lambda$ credits over $N^-(x)$. 
    Between time $\textsc{start}$ and $\textsc{end}$, the out-degree $d^+(x)$ may arbitrarily increase and decrease. However, we can always find a sequence of (not necessarily consecutive) edge deletions $S = \{ (\alpha, \beta)_i \}$ such that after deletion $(\alpha, \beta)_i$ in $G^b$, the out-degree $d^+(x)$ decremented by one, and for any pair of consecutive edge deletions $(\alpha, \beta)_i$ $(\alpha', \beta')_{i+1}$ in $S$, the out-degree of $x$ at the end of deleting $(\alpha, \beta)_i$  equals the out-degree of $x$ at the start of deleting $(\alpha', \beta')_{i+1}$.
    Denote by $S^i \subset S$ all deletions in $S$ where after the deletion, the vertex $x$ has a rank $t + i + 1$ for $0 < i < s - t$. 
    For every deletion in $S^i$, we distribute $4 / \lambda$ of units of work over $N^+(x)$.

    For all $i \in (0, s - t - 1)$, decreasing the rank of $x$ from $t + i + 1$ to $t + i$ requires exactly $(1+\lambda)^{t+i + 1} - (1+\lambda)^{t+i} = \lambda (1 + \lambda)^{t+i}$ deletions.
    Thus, $S^i$ has exactly $(1+\lambda)^{t+i + 1} - (1+\lambda)^{t+i} = \lambda (1 + \lambda)^{t+i}$. Per definition of $S^i$, after each deletion, $N^+(x)$ has at most $(1 + \lambda)^{t + i + 2}$ out-edges. 
    Thus, whenever we distribute after each deletion $4 / \lambda$ credits over all $N^+(x)$, the number of credits per edge $C$ is at least:
    \[
    C \geq \sum_{i = 0}^{\delta - 2}   4 \lambda^{-1} \frac{\textnormal{\# of deletions in } S^i }{\textnormal{out-degree of } x \textnormal{ during deletions in } S^i }  =  \sum_{i = 0}^{\delta - 2}  4 \lambda^{-1}  \frac{ \lambda (1 + \lambda)^{t + i} }{ (1 + \lambda)^{t+i + 2}  } = \sum_{i = 0}^{\delta - 2} \frac{4 \lambda}{\lambda (1 + \lambda)^2} \geq  \delta - 2 
    \]
    Here, the second-to-last inequality follows from the fact that $\lambda \leq 1$ and thus $\frac{4}{(1 + \lambda)^2} \geq 1$.  
    Hence, for $\delta \geq 3$ we have acquired $O(\delta)$ credits on every edge in $N^+(x)$, which we may  use to pay for relocating $x$ by $\delta$-buckets.


\end{proof}




We may now apply Corollary~\ref{cor:inv0} and Corollary~\ref{cor:inv1}.
These set $\eta$ and $b$ such that $\lambda^{-1} \in \BigOh(\log n)$. 
We note that for every insertion in $G$, we insert $b$ edges in $G^b$.
Thus, we conclude:

\begin{theorem}
\label{thm:amor0}
Let $G$ be a dynamic graph and $\rho$ be the density of $G$ at update time.
We can choose our variables $\theta = 0$, $\eta=3$, and $b \in \Theta(\log n), b\geq 2$ to maintain an out-orientation $\overrightarrow{G}^b$ in $\BigOh(\log^2 n \log \rho)$ amortized time per operation in the original graph $G$, maintaining Invariant~\ref{inv:degrees} for $\overrightarrow{G}^b$ with:
\begin{itemize}[noitemsep, nolistsep]
 \item $\forall u$, the out-degree $d^+(u)$ in $\overrightarrow{G}$ is at most $\BigOh( \rho + \log n)$, \quad \quad \small{(i.e. $\Delta( \overrightarrow{G}) \in \BigOh( \rho + \log n)$)}
\end{itemize}
\end{theorem}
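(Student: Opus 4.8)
The plan is to assemble the theorem from the pieces already established in Section~\ref{sec:amortized}, exactly as the text preceding the statement sketches. First I would fix the parameter choice $\theta=0$, $\eta=3$, $b\in\Theta(\log n)$ with $b\geq 2$; note that this is a legal choice since it satisfies the constraint $\eta\geq\frac{2b}{b-1}$ mentioned after Invariant~\ref{inv:degrees_modified}, so Invariant~$\theta'$ is satisfiable, and moreover $\lambda=\eta b^{-1}/64\in\Theta(1/\log n)$, i.e.\ $\lambda^{-1}\in\BigOh(\log n)$. By Corollary~\ref{cor:amortizedinv}, the amortised algorithms (Algorithms~\ref{alg:insertion_amort}+\ref{alg:deletion_amort}) maintain Invariant~$\theta'$ at all times; by Lemma~\ref{lemma:inv_modified_implies_normal} with $\theta=0$ this means $\overrightarrow{G}^b$ satisfies Invariant~\ref{inv:degrees}. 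Then Corollary~\ref{cor:inv0} with this parameter choice gives $\Delta(\overrightarrow{G}^b)\in\BigOh(b\rho)$ and $\Delta(\overrightarrow{G})\in\BigOh(\rho)$. Hmm --- wait, the theorem statement claims the out-degree bound $\BigOh(\rho+\log n)$ for $\overrightarrow{G}$, which is weaker than the $\BigOh(\rho)$ that Corollary~\ref{cor:inv0} actually yields, so I would simply quote Corollary~\ref{cor:inv0} and observe $\BigOh(\rho)\subseteq\BigOh(\rho+\log n)$; this is the out-degree part of the claim.

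Next I would handle the running time. By the two amortised running-time lemmas just proved, adding an arc to $G^b$ costs $\BigOh(\lambda^{-1})$ amortised and removing an arc costs $\BigOh(\lambda^{-1}\log\rho)$ amortised. Each update to $G$ triggers exactly $b=\Theta(\log n)$ arc insertions (Algorithm~\ref{alg:insertion_in_G}) or arc deletions (Algorithm~\ref{alg:deletion_in_G}) in $G^b$, so the amortised cost per update in $G$ is $\BigOh(b\cdot\lambda^{-1}\log\rho)=\BigOh(\log n\cdot\log n\cdot\log\rho)=\BigOh(\log^2 n\log\rho)$, using $b,\lambda^{-1}\in\BigOh(\log n)$. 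One subtlety to spell out: the per-update bound is \emph{amortised}, so I would remark that the amortisation in the two running-time lemmas is against a potential (the credits distributed over $N^+(u)$ / $N^-(x)$), and that summing the potential argument over the $b$ arc-operations of a single $G$-update preserves amortisation; there is no interaction between the potentials of different arc-operations that would break this.

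The main obstacle --- to the extent there is one --- is not in this final assembly but in the running-time lemmas it cites, and in particular in making sure the amortisation bookkeeping is airtight: the credit scheme for deletions (Case 1 vs.\ Case 2, $\delta<3$ vs.\ $\delta\geq 3$) must correctly pay both for the recursive flips and for the bucket relocations caused by resetting $\phi(x,u)$. Since the excerpt has already proved those lemmas, for the theorem itself I would only need to (i) verify the parameter choice is admissible, (ii) chain Corollary~\ref{cor:amortizedinv} $\to$ Lemma~\ref{lemma:inv_modified_implies_normal} $\to$ Corollary~\ref{cor:inv0} for correctness and the degree bound, and (iii) multiply the per-arc amortised costs by $b$ and substitute $b,\lambda^{-1}\in\BigOh(\log n)$ for the time bound. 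The proof is therefore essentially a two-line citation argument, and I expect no genuine difficulty beyond stating the parameter admissibility cleanly.
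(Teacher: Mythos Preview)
Your proposal is correct and follows exactly the same assembly as the paper: the paragraph preceding the theorem simply cites Corollaries~\ref{cor:inv0}/\ref{cor:inv1} for the degree bound, notes $\lambda^{-1}\in\BigOh(\log n)$, multiplies the per-arc amortised costs by the $b$ arc operations per $G$-update, and concludes. You even correctly catch that the stated $\BigOh(\rho+\log n)$ bound is weaker than the $\BigOh(\rho)$ that Corollary~\ref{cor:inv0} actually delivers for $\theta=0$ (the bullet items in Theorems~\ref{thm:amor0} and~\ref{thm:amor1} appear to be swapped in the paper).
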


\begin{theorem}
\label{thm:amor1}
Let $G$ be a dynamic graph and $\rho$ be the density of $G$ at time $t$.
We can choose our variables $\theta = 1$, $b = 1$ and $\eta \in \Theta(\log n)$  to maintain an out-orientation  $\overrightarrow{G}^b = \overrightarrow{G}$ in amortized $\BigOh\paren*{ \log n \log \rho }$ time per update in $G$ such that  Invariant~\ref{inv:degrees_additive} holds for $\overrightarrow{G}$. Moreover:
\begin{itemize}[noitemsep, nolistsep]
    \item $\forall v$, the out-degree $d^+(v)$ in $\overrightarrow{G}^b$ is at most $\BigOh( b  \cdot \rho)$, and
    \item $\forall u$, the out-degree of $u$ in $\overrightarrow{G}$ is at most $\BigOh(\rho)$.
\end{itemize}
\end{theorem}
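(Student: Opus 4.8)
The plan is to obtain Theorem~\ref{thm:amor1} purely by assembling results already in hand: the structural Corollary~\ref{cor:inv1} from Section~\ref{sec:struc}, the correctness Corollary~\ref{cor:amortizedinv} for Algorithms~\ref{alg:insertion_amort} and~\ref{alg:deletion_amort}, and the two amortised-time lemmas of this subsection. First I would instantiate the parameters exactly as Corollary~\ref{cor:inv1} requires: $\theta = 1$, $b = 1$, and $\eta = 1/\log_e n$ (so $\eta^{-1}\in\Theta(\log n)$ and hence $\lambda = \eta b^{-1}/64\in\Theta(1/\log n)$, i.e.\ $\lambda^{-1}\in\Theta(\log n)$; the relation $(1+\lambda)^5\le 1+\eta b^{-1}\le 2$ still holds for $n\ge e$). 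Because $b=1$ we have $\overrightarrow{G}^b=\overrightarrow{G}$, so every update to $G$ consists of exactly one arc insertion or deletion in $G^b$, performed by Algorithm~\ref{alg:insertion_amort} or Algorithm~\ref{alg:deletion_amort}.

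For correctness, Corollary~\ref{cor:amortizedinv} already gives that these algorithms maintain Invariant~$\theta'$; since $b=1$ makes $\floor{b/2}=0$, Invariant~$\theta'$ here coincides with Invariant~\ref{inv:degrees_additive} (equivalently, apply Lemma~\ref{lemma:inv_modified_implies_normal}), so Invariant~\ref{inv:degrees_additive} holds for $\overrightarrow{G}$ between updates to $G$. Feeding Invariant~\ref{inv:degrees_additive} with the chosen $\eta,b$ into Corollary~\ref{cor:inv1} then yields the claimed bound on the out-degrees of $\overrightarrow{G}=\overrightarrow{G}^b$. For the running time, I would quote the two lemmas of this subsection: an arc insertion into $G^b$ costs $\BigOh(\lambda^{-1})$ amortised and an arc deletion costs $\BigOh(\lambda^{-1}\log\rho)$ amortised, where the recursive-depth bound $\BigOh(\lambda^{-1}\log\rho)$ for deletions is imported verbatim from Theorem~\ref{thm:invariant1} because the recursive flip condition in Algorithm~\ref{alg:deletion_amort} is identical to the one in Algorithm~\ref{alg:deletion}. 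Since $\lambda^{-1}\in\Theta(\log n)$ and each update to $G$ triggers exactly $b=1$ such arc operation, each update to $G$ costs $\BigOh(\log n\log\rho)$ amortised; to make this formal I would note that $G$ starts empty (initial potential $0$) and that the potential used in those lemmas — fractional credits stored on arcs — is always nonnegative, so total work over any prefix of the update sequence is $\BigOh(\log n\log\rho)$ times the number of updates.

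I do not expect a genuine obstacle at this stage: the delicate work is already done, namely the multi-level credit argument in the amortised deletion lemma (Case $\delta\ge 3$, where relabelling $\phi(x,u)$ across $\delta$ rank buckets is paid for by $\Omega(\delta)$ credits harvested as $d^+(x)$ drops through $\delta$ rank levels) and the correctness casework of Lemmas~\ref{lem:amor_eq_1} and~\ref{lem:amor_eq_2}. The one point to be careful about is the reduction ``amortised per arc operation in $G^b$ $\Rightarrow$ amortised per update in $G$'', which is immediate precisely because $b=1$ decouples the two graphs; for $b\in\Theta(\log n)$ (as in Theorem~\ref{thm:amor0}) one instead absorbs an extra $\BigOh(b)$ factor. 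Everything else is bookkeeping over the parameter choice that makes $\lambda^{-1}=\Theta(\log n)$.
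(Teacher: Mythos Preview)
Your proposal is correct and takes essentially the same approach as the paper, whose proof is the brief paragraph preceding Theorems~\ref{thm:amor0}--\ref{thm:amor1}: invoke Corollary~\ref{cor:inv1} for the parameter choice and degree bound, Corollary~\ref{cor:amortizedinv} for correctness, and the two amortised lemmas (with $\lambda^{-1}\in\Theta(\log n)$ and $b=1$) for the running time. Your choice $\eta=1/\log_e n$ indeed matches Corollary~\ref{cor:inv1}; the ``$\eta\in\Theta(\log n)$'' and the $\BigOh(\rho)$ (rather than $\BigOh(\rho+\log n)$) out-degree bullets in the theorem statement appear to be typos carried over from Theorem~\ref{thm:amor0}.
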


\section{Obtaining  \texorpdfstring{$(1 + \eps)$}{(1+ε)} Approximations }
\label{sec:onepluseps}
Finally, we note that we can choose our variables carefully to obtain a $(1 + \eps)$ approximations  of the maximum subgraph density or minimum out-degree. 
 Theorem~\ref{thm:rank0} implies that, for suitable choices of $\eta$ and $b$, we can for any graph $G$ maintain a directed graph $\overrightarrow{G}^b$ (where $G^b$ is the graph $G$ with every edge duplicated $b$ times) such that $\overrightarrow{G}^b$  maintains Invariant~\ref{inv:degrees}.
 By Theorem~\ref{thm:structural}, $\overrightarrow{G}^b$ approximates the densest subgraph of $G$ and the minimum out-orientation of $G^b$ (where the approximation factor is dependent on $\beta$ and $\eta$). 
  The running time of the algorithm is $\BigOh( b^3 \cdot \log \alpha)$ where $\alpha$ is the arboricity of the graph. 
  In this section we show that for any $0 < \eps < 1$, we can choose  an $\eta>0$ and a $b \in \BigOh(\eps^{-2} \log n)$ to ensure that $\overrightarrow{G}^b$ maintains a:
  
  \begin{itemize}
      \item $(1 + \eps)$-approx. of \textbf{the maximum densest subgraph} of $G$ in $\BigOh(\eps^{-6} \log^3 n \log \alpha)$ time.
      \item $(1 + \eps)$-approx. of the minimum out-orientation of $G^b$. This implies an explicit \textbf{$(2 + \eps)$-approximation of the minimum out-orientation} of $G$  in $\BigOh(\eps^{-6} \log^3 n \log \alpha)$ time.
      \item  $(1 + \eps)$-approx. of the minimum out-orientation of $G^b$. 
      Through applying clever rounding introduced by Christiansen and Rotenberg~\cite{christiansenICALP} we obtain an explicit \textbf{$(1 + \eps)$-approximation of the minimum out-orientation} of $G$. 
   By slightly opening their black-box algorithm, we can show that applying their technique does not increase our running time.  Thus, our total running time is thus $\BigOh(\eps^{-6} \log^3 n \log \alpha)$.
  \end{itemize}

\subsection*{Obtaining a \texorpdfstring{$(1 + \eps)$}{(1+ε)} Approximation for Densest Subgraph}

\begin{corollary}\label{cor:eps_subgraphdensity}
Let $G$ be a dynamic graph subject to edge insertions and deletions with  adaptive maximum subgraph density $\rho$. Let $G^b$ be $G$ where every edge is duplicated $b$ times. Let $0 \leq \epsilon < 1$. We can maintain an orientation $\overrightarrow{G}^b$ such that 
\[
\rho \leq  b^{-1} \cdot \Delta( \overrightarrow{G}^b) \leq (1+\eps)\rho
\]
with update time $\BigOh(\eps^{-6}\log^3(n)\log \rho )$ per operation in $G$.
\end{corollary}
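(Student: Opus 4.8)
The corollary is essentially a matter of choosing the parameters $\eta$ and $b$ correctly in Theorem~\ref{thm:rank0} and then tracking the approximation factors through Theorem~\ref{thm:structural}. First I would set $b = \Theta(\eps^{-2}\log n)$ (with $b$ even, $b \geq 2$) and $\eta$ a small constant, say $\eta = 3$, so that $\eta b^{-1} \in \BigOh(\eps^{2}/\log n)$, and in particular $\eta b^{-1} \in \BigOh(1/\log n)$, which is what Theorem~\ref{thm:rank0} requires. This already gives us a worst-case update time of $\BigOh(\log^3 n \log\rho)$ per operation in $G^b$; since each update in $G$ triggers $\Theta(b) = \Theta(\eps^{-2}\log n)$ updates in $G^b$, and $b$ also enters the bucket-search cost, I would verify that the total update time is $\BigOh(b \cdot \lambda^{-2}\log\rho)$ with $\lambda^{-1} = \Theta(b/\eta) = \Theta(\eps^{-2}\log n)$, which works out to $\BigOh(\eps^{-6}\log^3 n \log\rho)$ per operation in $G$.

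Next I would establish the sandwich inequality $\rho \leq b^{-1}\Delta(\overrightarrow{G}^b) \leq (1+\eps)\rho$. The left inequality is immediate from the Picard--Queyranne duality $\rho_b \leq \Delta(\overrightarrow{G}^b)$ (stated in Section~\ref{sec:tech}) together with $\rho_b = b\rho$ (noted in the proof of Corollary~\ref{cor:inv0}), so $b\rho = \rho_b \leq \Delta(\overrightarrow{G}^b)$. For the right inequality I would invoke Theorem~\ref{thm:structural} with $c = 0$ (since $\theta = 0$ means Invariant~\ref{inv:degrees}, i.e.\ the multiplicative-only condition), choosing the free parameter $\gamma = \Theta(\eps)$. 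Theorem~\ref{thm:structural} then yields, for some $k_{\max} \leq \log_{1+\gamma} n$,
\[
\Delta(\overrightarrow{G}^b) \leq (1+\eta b^{-1})^{k_{\max}} (1+\gamma)\rho_b \leq e^{\eta b^{-1} k_{\max}}(1+\gamma)\rho_b.
\]
The exponent is $\eta b^{-1} k_{\max} \leq \eta b^{-1} \log_{1+\gamma} n = \frac{\eta b^{-1}}{\ln(1+\gamma)}\ln n$; with $b = \Theta(\eps^{-2}\log n)$ and $\gamma = \Theta(\eps)$ this is $\BigOh(\eps^2 \ln n / (\eps \ln n)) = \BigOh(\eps)$, so $e^{\eta b^{-1}k_{\max}} = 1 + \BigOh(\eps)$. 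Combined with $1+\gamma = 1+\BigOh(\eps)$ and dividing by $b$, this gives $b^{-1}\Delta(\overrightarrow{G}^b) \leq (1+\BigOh(\eps))\rho$; rescaling $\eps$ by a constant factor (and absorbing it into the $\BigOh$ in the running time) gives exactly $(1+\eps)\rho$.

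The main obstacle is the bookkeeping to make the two uses of $\eps$ compatible: we need $b$ large enough that \emph{both} the rounding-style error $(1+\eta b^{-1})^{k_{\max}}$ \emph{and} the slack $(1+\gamma)$ from Theorem~\ref{thm:structural} are each $1+\BigOh(\eps)$, while keeping $b$ as small as $\Theta(\eps^{-2}\log n)$ so the running time is only $\BigOh(\eps^{-6}\log^3 n\log\rho)$ rather than something worse. The quadratic-in-$\eps^{-1}$ blowup of $b$ is exactly what is needed: $\eta b^{-1} k_{\max} = \Theta(b^{-1}\log n / \log(1+\gamma)) = \Theta(b^{-1}\log n / \gamma)$, so to force this below $\eps$ with $\gamma = \Theta(\eps)$ one needs $b = \Omega(\eps^{-2}\log n)$. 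I would also double-check the edge case $\eps = 0$ in the statement (where the claim degenerates to $b^{-1}\Delta(\overrightarrow{G}^b) = \rho$, which is not literally what the algorithm gives) — most likely the intended range is $0 < \eps < 1$ as in the surrounding text, and I would state it that way. Everything else is routine substitution.
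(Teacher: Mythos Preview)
Your proposal is correct and follows essentially the same route as the paper: set $\eta=3$, $b=\Theta(\eps^{-2}\log n)$, invoke Theorem~\ref{thm:rank0} to maintain Invariant~\ref{inv:degrees}, and then apply Theorem~\ref{thm:structural} with $c=0$ and $\gamma=\Theta(\eps)$ to bound $(1+\eta b^{-1})^{k_{\max}}(1+\gamma)$ by $1+\BigOh(\eps)$; the paper makes the concrete choices $\gamma=\eps/2$ and $b=\ceil{\gamma^{-1}\eta\log_{(1+\gamma)}n}$ and uses $e^x\leq 1+2x$ for $0\leq x\leq 1$, but this is just your rescaling argument made explicit. Your remark that the per-$G^b$-operation cost is ``$\BigOh(\log^3 n\log\rho)$'' is a slip (it is $\BigOh(\lambda^{-2}\log\rho)=\BigOh(\eps^{-4}\log^2 n\log\rho)$), but you immediately correct yourself in the next clause and land on the right total $\BigOh(b\lambda^{-2}\log\rho)=\BigOh(\eps^{-6}\log^3 n\log\rho)$, matching the paper.
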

\begin{proof}
We apply Theorem~\ref{thm:rank0} in order to maintain an out-orientation satisfying Invariant~\ref{inv:degrees}, which by Theorem~\ref{thm:structural} satisfies $\rho(G^b) \leq \Delta(\overrightarrow{G}^b) \leq (1+\gamma)(1+\eta\cdot{}b^{-1})^{k_{\max}} \rho(G^b)$. 
By setting $\gamma=\frac{\eps}{2}$,  $\eta = 3$, $b = \ceil{\gamma^{-1}\eta\log_{(1+\gamma)}n}\in\BigOh(\eps^{-2}\eta \log{n})$, we satisfy the conditions of the Theorem. Since $k_{\max} \leq \log_{1+\gamma} n$, we find that
\[
(1+\eta\cdot{}b^{-1})^{k_{\max}} \leq e^{\eta b^{-1} \cdot k_{\max}} \leq e^{\gamma} \leq 1+2\gamma = 1+\eps
\]
where the last inequality comes from the fact that for $0 \leq x \leq 1$, we have $e^x \leq 1+2x$.
\end{proof}

\begin{observation}
The algorithm of Corollary~\ref{cor:eps_subgraphdensity} can in $\BigOh(1)$ time per operation, maintain the integers: $b^{-1}$, $\Delta(\overrightarrow{G}^b)$ and thus a $(1 + \eps)$ approximation of the value of the density of $G$. 
\end{observation}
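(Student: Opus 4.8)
The plan is to verify that the three quantities $b^{-1}$, $\Delta(\overrightarrow{G}^b)$, and their product $b^{-1}\cdot\Delta(\overrightarrow{G}^b)$ can each be maintained in $\BigOh(1)$ time per update, after which Corollary~\ref{cor:eps_subgraphdensity} immediately certifies that this product is a $(1+\eps)$ approximation of $\rho$. Since $b = \ceil{\gamma^{-1}\eta\log_{(1+\gamma)}n}$ is a fixed function of $n$ and $\eps$ that does not change as edges are inserted or deleted, the value $b^{-1}$ is a constant computed once at initialisation; no per-update work is required for it.

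The only substantive point is maintaining $\Delta(\overrightarrow{G}^b) = \max_{v\in V} d^+(v)$. First I would observe that the data structure of Section~\ref{app:improved-rank} (invoked by Theorem~\ref{thm:rank0}) already stores the exact out-degree $d^+(u)$ for every vertex $u$ (item (a) of the stored data). Each update in $G$ triggers $\Theta(b)$ edge updates in $G^b$, and each such edge update changes $d^+(u)$ by exactly $\pm 1$ for a single vertex $u$ (the endpoint $x_f$ at the end of the reorientation chain, together with the vertex whose degree was momentarily perturbed and then restored). Hence the plan is to augment the structure with a single global variable holding the current value of $\Delta(\overrightarrow{G}^b)$, together with, say, a doubly-linked bucket list indexing vertices by out-degree (analogous to the $B_j$ buckets already used for in-neighbours) so that the count of vertices attaining each out-degree value is available. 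When $d^+(u)$ increments, we move $u$ one bucket up and, if its new degree exceeds the stored maximum, increment the stored maximum; when $d^+(u)$ decrements, we move $u$ one bucket down and, if $u$ was the unique vertex at the old maximum (its old bucket is now empty), decrement the stored maximum. Each of these operations is $\BigOh(1)$, and since Theorem~\ref{thm:rank0} already charges $\BigOh(\log^3 n\log\rho)$ per operation in $G$ for $\Theta(b)$ such elementary degree changes, this additional bookkeeping is absorbed into the stated bound (indeed it is $\BigOh(b)=\BigOh(\eps^{-2}\log n)$ extra work total, dominated by the update time).

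Given $b^{-1}$ (a constant) and $\Delta(\overrightarrow{G}^b)$ (maintained as above), the product $b^{-1}\cdot\Delta(\overrightarrow{G}^b)$ is computed in $\BigOh(1)$ time whenever $\Delta(\overrightarrow{G}^b)$ changes, and by Corollary~\ref{cor:eps_subgraphdensity} it satisfies $\rho \le b^{-1}\cdot\Delta(\overrightarrow{G}^b)\le(1+\eps)\rho$, i.e.\ it is a $(1+\eps)$ approximation of the maximum subgraph density of $G$, which proves the observation. I do not expect any real obstacle here; the only thing to be slightly careful about is the decrement case, where one must know whether the old maximum bucket has become empty — this is precisely why a bucket structure (rather than merely a single integer) is needed, and it is the reason the observation is phrased in terms of what the algorithm of Corollary~\ref{cor:eps_subgraphdensity} can maintain rather than being entirely trivial.
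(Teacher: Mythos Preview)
Your proposal is correct. The paper states this as an observation without proof, and your argument---that $b$ is fixed at initialisation and that $\Delta(\overrightarrow{G}^b)$ can be tracked via a global bucket list indexed by out-degree, updated in $\BigOh(1)$ at each increment or decrement of some $d^+(u)$---is precisely the natural justification one would supply.
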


\noindent
However, to actually output any such realizing subgraph, a bit more of a data structure is needed:

\begin{lemma} \label{lemma:SDE}
For a fully-dynamic graph $G$, there is an algorithm 
that explicitly maintains a $(1+\eps)$ approximation of the maximum subgraph density in 
$\BigOh(\eps^{-6}\log ^3 n \log \alpha )$ total
time per operation, and that
can output a subgraph realizing this density in  $\BigOh(\texttt{occ})$ time where $\texttt{occ}$ is the size of the output.
\end{lemma}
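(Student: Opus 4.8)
The plan is to augment the data structure of Corollary~\ref{cor:eps_subgraphdensity} so that, in addition to maintaining the value $b^{-1}\Delta(\overrightarrow{G}^b)$, we can reconstruct a witness subgraph on demand in time proportional to its size. The key observation is already buried in the proof of Theorem~\ref{thm:structural}: if $\overrightarrow{G}^b$ satisfies Invariant~\ref{inv:degrees}, then for a suitable index $k$ the vertex set $T_{k+1}=\set{v: d^+(v)\ge \Delta(\overrightarrow{G}^b)(1+\eta b^{-1})^{-(k+1)} - (\text{correction})}$ has density $\rho_b(T_{k+1})\ge (1+\gamma)^{-1}(1+\eta b^{-1})^{-k_{\max}}\Delta(\overrightarrow{G}^b)$, which with the parameter choices $\gamma=\eps/2$, $\eta=3$, $b\in\BigOh(\eps^{-2}\log n)$ is a $(1+\eps)$-approximation of $\rho_b = b\rho$ (so that $T_{k+1}$ in $G$ realizes a $(1+\eps)$-approximation of $\rho$). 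Since $T_i$ is simply a superlevel set of the out-degree function, once we know the threshold index $k$ we can enumerate $T_{k+1}$ and its induced edges in $G$ in time $\BigOh(\abs{V(H)}+\abs{E(H)})=\BigOh(\texttt{occ})$ by a bounded traversal.

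Concretely, I would do the following. First, maintain the out-orientation $\overrightarrow{G}^b$ via Theorem~\ref{thm:rank0} with the parameters of Corollary~\ref{cor:eps_subgraphdensity}; this costs $\BigOh(\eps^{-6}\log^3 n\log\alpha)$ per operation and is unchanged. Second, maintain $\Delta:=\Delta(\overrightarrow{G}^b)$ explicitly, e.g. via a bucketed list of vertices indexed by out-degree (the structure already keeps $d^+(u)$ for all $u$, so tracking the maximum under increments/decrements is $\BigOh(1)$ amortized, or $\BigOh(\log n)$ worst case with a priority structure — in any case dominated by the update time). Third, maintain an auxiliary array indexed by $i\in\set{0,\dots,\ceil{\log_{1+\eta b^{-1}} n}}$ storing $\abs{T_i}$; when a vertex's out-degree changes, it moves across $\BigOh(1)$ thresholds per unit change, and $\Delta$ changing shifts all thresholds, which naively is expensive. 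To avoid that, I would instead \emph{not} store $\abs{T_i}$ persistently but recompute the relevant counts \emph{at query time}: at a query, read off $\Delta$, then for increasing $i$ compute $\abs{T_i}$ by scanning the out-degree buckets from the top down (which takes $\BigOh(\abs{T_i})$ time), stopping at the first $k$ with $\abs{T_{k+1}}<(1+\gamma)\abs{T_k}$, exactly as in the proof of Theorem~\ref{thm:structural}; since $\abs{T_k}\le\abs{T_{k+1}}\le\texttt{occ}$ and there are at most $\log_{1+\gamma}\abs{T_k}$ relevant scans, each geometrically growing, the total is $\BigOh(\texttt{occ})$. Fourth, having identified $T_{k+1}$, output its induced subgraph of $G$ by iterating over each $v\in T_{k+1}$ and its adjacency list, keeping edges with both endpoints in $T_{k+1}$; this is $\BigOh(\sum_{v\in T_{k+1}}\deg_G(v))=\BigOh(\texttt{occ})$ since $\sum_{v\in T_{k+1}}\deg_G(v)\le 2\abs{E(G[T_{k+1}])}+(\text{edges leaving }T_{k+1})$, and by the argument in Theorem~\ref{thm:structural} all out-edges of $T_k$ stay inside $T_{k+1}$, so boundary edges are controlled — more carefully, one bounds the scan by $\BigOh(\abs{E(G^b[T_{k+1}])}/b + \abs{T_{k+1}})$, which is $\BigOh(\texttt{occ})$.

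The main obstacle is the last point: naively, a vertex in $T_{k+1}$ can have huge degree in $G$ with most neighbors \emph{outside} $T_{k+1}$, so ``scan all adjacency lists of $T_{k+1}$'' is not obviously $\BigOh(\texttt{occ})$. The fix is to use the orientation itself: every \emph{edge} of $G[T_{k+1}]$ is oriented, and the proof of Theorem~\ref{thm:structural} shows every arc out of $T_k$ lands in $T_{k+1}$; so to enumerate $E(G[T_{k+1}])$ it suffices to walk the out-adjacency lists $N^+(v)$ for $v\in T_k$ (giving all edges incident to $T_k$, all of which are internal) together with the out-adjacency lists of the $\BigOh(1)$-width ``shell'' $T_{k+1}\setminus T_k$, whose out-degrees are bounded by $\BigOh(\Delta(\overrightarrow{G}^b))=\BigOh(b\rho)$, hence by $\BigOh(\texttt{occ})$ in aggregate after dividing by $b$. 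One must also handle the degenerate cases where $\texttt{occ}$ is tiny (e.g. the densest ``subgraph'' is a single edge, $\rho<1$) separately, using the threshold $\floor{b/2}$ from Invariant~\ref{inv:degrees_modified}; there the claim is immediate. The remaining bookkeeping — translating $G^b$ out-degrees back to $G$, and checking the $(1+\eps)$ factor survives the $b\mid\abs{E^b}$ rounding — is routine given Corollary~\ref{cor:eps_subgraphdensity} and the displayed computation there.
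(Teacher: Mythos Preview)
Your core idea --- that the witness is the set $T_{k+1}$ from the proof of Theorem~\ref{thm:structural}, and that identifying it reduces to locating a threshold in a list of vertices sorted by out-degree in $\overrightarrow{G}^b$ --- is exactly what the paper does. The execution differs in where the work is placed, and your version has two rough spots worth flagging.

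The paper does not compute $k$ at query time. It maintains the vertices of $\overrightarrow{G}^b$ as leaves of a balanced binary search tree keyed by out-degree, with subtree sizes at internal nodes and a doubly linked list along the leaves. After \emph{each update} it computes, for every $i\in[0,\eps^{-1}\log n]$, the threshold $V_i=\Delta(\overrightarrow{G}^b)(1+\eta b^{-1})^{-i}$ and reads off $\abs{T_i}$ via the rank structure in $\BigOh(\log n)$ time, for $\BigOh(\eps^{-1}\log^2 n)$ total; it then stores a pointer to the first leaf that lies in the chosen $T_k$. A query is then a pure traversal of the leaf list from that pointer. In particular, the paper outputs only the \emph{vertex set} (so $\texttt{occ}=\abs{T_k}$) and does not attempt to enumerate the induced edges at all.

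Your query-time scan is a viable alternative, but the $\BigOh(\texttt{occ})$ claim is loose: if you restart the scan for each $i$ the sum $\sum_{i\le k+1}\abs{T_i}$ is only $\BigOh(\gamma^{-1}\abs{T_{k+1}})=\BigOh(\eps^{-1}\texttt{occ})$, and even a single incremental pass costs $\BigOh(\abs{T_{k+1}}+k)$ with $k\le\log_{1+\gamma}\abs{T_k}=\BigOh(\eps^{-1}\log\texttt{occ})$, which is not $\BigOh(\texttt{occ})$ uniformly in $\eps$. Precomputing $k$ at update time, as the paper does, sidesteps this. Your edge-enumeration argument is also shaky: the ``$\BigOh(1)$-width shell'' claim is false (the shell $T_{k+1}\setminus T_k$ has size up to $\gamma\abs{T_k}$, not $\BigOh(1)$), and ``dividing by $b$'' does not turn a walk over $\BigOh(\abs{T_{k+1}}\cdot b\rho)$ arcs in $\overrightarrow{G}^b$ into one of length $\BigOh(\abs{T_{k+1}}\cdot\rho)$ --- the time is the number of arcs visited. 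If you do want the edges, the clean fix is to walk out-adjacency lists in the \emph{rounded} orientation $\overrightarrow{G}$ (max out-degree $\BigOh(\rho)$): this takes $\BigOh(\abs{T_{k+1}}\rho)$ time, and since $G[T_{k+1}]$ has density at least $\rho/(1+\eps)$ this is $\BigOh(\abs{E(G[T_{k+1}])})=\BigOh(\texttt{occ})$. But since the paper is content to output only the vertex set, none of this is needed to match the stated lemma.
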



\begin{proof}
We use Corollary~\ref{cor:eps_subgraphdensity} to dynamically maintain an orientation $\overrightarrow{G}^b$ in $\BigOh(\eps^{-6}\log^3(n)\log \rho )$ per operation in $G$.
Recall (Theorem~\ref{thm:structural}) that we defined for non-negative integers $i$ the sets:
\[
T_i := \set*{ v\in V \suchthat d^+(v) \geq \Delta\paren*{\overrightarrow{G}^b} \cdot \paren*{1 + \eta\cdot b^{-1}}^{-i} } 
\]

(note that since we maintain Invariant~\ref{inv:degrees}, the constant $c$ in the previous definition is zero).

Let $k$ be the smallest integer such that $|T_{k+1}| < (1 + \gamma) |T_k|$). Moreover, we showed in Corollary~\ref{cor:eps_subgraphdensity} that $k$ is upper bounded by $\BigOh( \eps^{-1} \log n)$.  
We show in Section~\ref{sec:struc} that (the induced subgraph of the vertex set) $T_{k+1}$ is an approximation of the densest subgraph of $\overrightarrow{G}^b$ (and therefore of $G$). 
We store the vertices of $\overrightarrow{G}^b$ as leaves in a balanced binary tree, sorted on their out-degree. 
Since every change in $G$, changes at most $\BigOh(b \log n \log \rho) = \BigOh(\eps^{-2} \log^2 n \log \rho)$ out-degrees in $\overrightarrow{G}$, we can maintain this binary tree in $\BigOh(\eps^{-2} \log^3 n \log \rho)$ additional time per operation in $G$. 

Each internal node of the balanced binary tree stores the size of the subtree rooted at that node. 
Moreover, we store the maximum out-degree $\Delta(\overrightarrow{G^b})$ as a separate integer, and a doubly linked list amongst the leaves.

After each operation in $G$, for each integer $i \in [0, \eps^{-1} \log n]$, we determine how many elements there are in $T_i$ as follows: 
first, we compute the value $V_i = \Delta(\overrightarrow{G}^b) \cdot (1 + \eta\cdot b^{-1})^{-i}$.
Then, we identify in $\BigOh(\log n)$ time how many vertices have out-degree at least $V_i$ (thus, we determine the size of $T_i$). 
It follows that we identify $T_k$ in $\BigOh(\eps^{-1} \log^2 n)$ additional time. We store a pointer to the first leaf that is in $T_k$. 
If we subsequently want to output the densest subgraph of $G$, we traverse the $\texttt{occ}$ elements of $T_k$ in $\BigOh(\texttt{occ})$ total time by traversing the doubly linked list of our leaves.
\end{proof}

\subparagraph{Related Work}
While results for densest subgraph \cite{BahmaniKV12,BhattacharyaHNT15,EpastoLS15} can be used to estimate maximum degree of the best possible out-orientation, it is also interesting in its own right.
Sawlani and Wang~\cite{sawlani2020near} maintain a $(1 - \eps)$-approximate densest subgraph in worst-case time $\BigOh(\eps^{-6}\log^4 n )$ per update where they maintain an \emph{implicit} representation of the approximately-densest subgraph. They write that they can, in $\BigOh(\log n)$ time, identify the subset $S \subseteq V$ where $G[S]$ is the approximately-densest subgraph and they can report it in $\BigOh(|S|)$

\subsection*{Obtaining an almost $(1 + \eps)$ Approximation for Minimum Out-orientation }

By Corollary~\ref{cor:eps_subgraphdensity}, we can dynamically maintain for every graph $G$,  a directed graph $\overrightarrow{G}^b$ (where each edge in $G$ is duplicated $b$ times) such that the maximum out-degree in $\overrightarrow{G}^b$ is at most a factor $(1 + \eps)$ larger than the minimum out-orientation of $G^b$. 
For every edge $(u, v)$ in $G$, we can now store a counter indicating how many edges point (in $G^b$) from $u$ to $v$, or the other way around. 
The naive rounding scheme, states that the edge $(u, v)$ is directed as $\overrightarrow{uv}$ whenever there are more edges directed from $u$ to $v$. For any edge, we can decide its rounding in $\BigOh(1)$ time, thus we conclude:

\begin{observation}\label{obs:rounding}
We can maintain for a graph $G$ an orientation $\overrightarrow{G}$ where each vertex has an out-degree of at most 
$(2+\varepsilon)\alpha$ 
 with update time $\BigOh(\eps^{-6}\log^3(n)\log \rho )$ per operation.
\end{observation}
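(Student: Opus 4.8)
The plan is to combine Corollary~\ref{cor:eps_subgraphdensity} with the elementary duplication/rounding argument already set up in Section~\ref{sec:tech}. By Corollary~\ref{cor:eps_subgraphdensity}, for any $0<\eps<1$ we can pick $\gamma=\eps/2$, $\eta=3$, and $b=\ceil{\gamma^{-1}\eta\log_{(1+\gamma)}n}\in\BigOh(\eps^{-2}\log n)$ so that the algorithm of Theorem~\ref{thm:rank0} maintains an orientation $\overrightarrow{G}^b$ of $G^b$ with $\rho_b\le\Delta(\overrightarrow{G}^b)\le(1+\eps)\rho_b$, in worst-case time $\BigOh(\eps^{-6}\log^3 n\log\rho)$ per update in $G$. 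The first step is just to invoke this.

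Next I would maintain the rounded orientation $\overrightarrow{G}$ explicitly. For each edge $(u,v)$ of $G$ keep a counter recording how many of its $b$ copies in $\overrightarrow{G}^b$ are oriented $u\to v$; orient $(u,v)$ as $\overrightarrow{uv}$ iff this counter exceeds $b/2$. Each of the $\BigOh(b\log n\log\rho)$ edge-reorientations in $\overrightarrow{G}^b$ triggered by one update in $G$ changes exactly one such counter and hence can change the rounding of at most one edge, so maintaining $\overrightarrow{G}$ costs only $\BigOh(1)$ extra time per $\overrightarrow{G}^b$-flip and does not dominate the update time.

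For the out-degree bound I would reuse the observation from Section~\ref{sec:tech}: in $\overrightarrow{G}$ every edge $\overrightarrow{uv}$ must have at least $\ceil{b/2}$ of its copies oriented $u\to v$, so the out-degree of $u$ in $\overrightarrow{G}$ is at most $d^+(u)/\ceil{b/2}\le 2d^+(u)/b$. Combining with $\Delta(\overrightarrow{G}^b)\le(1+\eps)\rho_b=(1+\eps)b\rho$ (using $\rho_b=b\rho$ and Picard--Queyranne $\lceil\rho\rceil=\min_{\overrightarrow G}\Delta(\overrightarrow G)$, so $\rho\le\alpha\le\lceil\rho\rceil\le(1+o(1))\rho$ — more precisely $\alpha\le\lceil\rho\rceil$ and $\rho<\alpha$, giving the stated $(2+\eps)\alpha$ via $2\Delta(\overrightarrow G^b)/b\le 2(1+\eps)\rho\le(2+\eps')\alpha$ after absorbing constants), yields $\Delta(\overrightarrow G)\le(2+\eps)\alpha$ after renaming $\eps$. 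The only mildly delicate point, and the one I would be most careful about, is the passage from subgraph density $\rho$ to arboricity $\alpha$: one must invoke $\alpha\le\lceil\rho\rceil$ and $\rho\le\Delta(\overrightarrow G)$ (Picard--Queyranne) to convert the density guarantee into an arboricity guarantee without losing more than the claimed factor, and to make sure the approximation slack $\eps$ on the density side is inflated only by a controlled constant when expressed in terms of $\alpha$. Everything else is bookkeeping already done in the cited results.
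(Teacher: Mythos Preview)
Your approach is essentially identical to the paper's: invoke Corollary~\ref{cor:eps_subgraphdensity}, maintain a per-edge majority counter, and round. The paper's justification is a one-paragraph remark preceding the observation, and yours matches it.

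One correction: you write that ``one must invoke $\alpha\le\lceil\rho\rceil$''. This inequality is false in general (e.g.\ $K_5$ has $\rho=2$ but $\alpha=3$) and is not needed anyway. The only fact required for the passage from density to arboricity is $\rho\le\alpha$, which follows immediately from the definitions since $|E(H)|/|V(H)|\le|E(H)|/(|V(H)|-1)$ for every $H$ with $|V(H)|\ge 2$. Your final chain $2\Delta(\overrightarrow{G}^b)/b\le 2(1+\eps)\rho\le(2+2\eps)\alpha$ already uses only this correct direction, so the argument goes through after you drop the spurious claim and halve~$\eps$.
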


\noindent
Obtaining a $(1 + \eps)$-approximation of the minimum out-orientation of $G$ is somewhat more work. 
Christiansen and Rotenberg~\cite{christiansenICALP} show how to dynamically maintain an explicit out-orientation on $G$ of at most $(1 + \eps) \alpha + 2$ out-edges. In their proofs, Christiansen and Rotenberg~\cite{christiansenICALP} rely upon the algorithm by Kopelowitz, Krauthgamer, Porat and Solomon~\cite{KopelowitzKPS13}. 
By replacing the KKPS~\cite{KopelowitzKPS13} algorithm by ours in a black-box like manner, we obtain the following:

\begin{theorem}
\label{thm:epsapprox}
Let $G$ be a dynamic graph subject to edge insertions and deletions. We can maintain an orientation $\overrightarrow{G}$ where each vertex has an out-degree of at most $(1+\varepsilon)\alpha + 2$ 
 with update time $\BigOh(\eps^{-6} \log^3 n \log \alpha)$ per operation in $G$, where $\alpha$ is the arboricity at the time of the update.
\end{theorem}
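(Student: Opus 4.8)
The plan is to use Corollary~\ref{cor:eps_subgraphdensity} as the engine that maintains an orientation $\overrightarrow{G}^b$ of $G^b$ with $\Delta(\overrightarrow{G}^b) \le (1+\eps')\rho_b = (1+\eps')\,b\,\rho$ for a suitably rescaled $\eps'$, and then feed this into the rounding machinery of Christiansen and Rotenberg~\cite{christiansenICALP} exactly where their argument invokes the Kopelowitz--Krauthgamer--Porat--Solomon~\cite{KopelowitzKPS13} out-orientation as a black box. First I would recall that $\rho \le \alpha \le \lceil \rho \rceil$ (indeed $\alpha$ and the integral minimum out-degree agree, and $\rho$ is the fractional optimum by Picard--Queyranne~\cite{PicardQueyranne82}), so a $(1+\eps)$-approximation of the fractional out-orientation of $G^b$, divided by $b$ and rounded, is morally a $(2+\eps)$-approximation of $\alpha$ — but the CR rounding scheme is what closes the gap from $2\alpha$ down to $(1+\eps)\alpha+2$, so the real content is that their rounding still works when the underlying approximate orientation is produced by our data structure rather than theirs.

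The key steps, in order: (i) Instantiate Corollary~\ref{cor:eps_subgraphdensity} with $\gamma = \Theta(\eps)$ and $b = \Theta(\eps^{-2}\log n)$ so that the maintained orientation $\overrightarrow{G}^b$ satisfies Invariant~\ref{inv:degrees} and hence, by Theorem~\ref{thm:structural}, $\Delta(\overrightarrow{G}^b) \le (1+\eps')\rho_b$; this runs in $\BigOh(\eps^{-6}\log^3 n\log\rho)$ worst-case time per update in $G$, and since $\rho \le \alpha$ this is $\BigOh(\eps^{-6}\log^3 n\log\alpha)$. (ii) Recall what CR~\cite{christiansenICALP} actually need from the subroutine: they need a dynamically maintained orientation whose maximum out-degree tracks (a small multiple of) the current arboricity/density, with bounded recourse per update, so that their local "refinement" / reorientation layer can be run on top of it to squeeze the out-degree down to $(1+\eps)\alpha+2$. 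Here I would "slightly open their black box" as the abstract promises: identify the precise interface properties of the KKPS structure that their analysis uses — essentially (a) out-degree $\BigOh(\alpha+\text{polylog})$ at all times, (b) the ability to query/iterate out-neighbours, and (c) a recourse bound — and verify that Theorem~\ref{thm:rank0} / Corollary~\ref{cor:eps_subgraphdensity} supplies each of them with the stated parameters, in particular that our recourse is polylogarithmic (a $\log n$ factor below the update time), which is strictly better than what KKPS gives. (iii) Conclude that substituting our structure for theirs leaves their correctness proof intact and replaces their $\BigOh(\eps^{-6}\alpha^2\log^3 n)$ update time by the cost of our structure plus the cost of their refinement layer, the latter being dominated by the former; hence total worst-case update time $\BigOh(\eps^{-6}\log^3 n\log\alpha)$.

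The main obstacle I expect is step (ii): the CR reduction is stated as a black-box reduction to KKPS, but "black box" in their paper may quietly rely on features of the KKPS invariant (e.g. the additive-$1$ local condition $d^+(u)\le d^+(v)+1$ along oriented edges, or the exact out-degrees being stored, or the specific way degree changes propagate to out-neighbours) that our amortised/round-robin variants do not literally maintain. So the delicate part of the write-up is to isolate a clean interface — "approximate minimum out-orientation oracle with polylog recourse" — show that their refinement argument only uses that interface, and show that our Theorem~\ref{thm:rank0} (with $\theta=0$, $b=\Theta(\eps^{-2}\log n)$, giving $\Delta(\overrightarrow{G})\in\BigOh(\rho)$) realises it; if their argument genuinely needs exact degrees or the KKPS-style additive invariant, one must instead argue that the quantities the refinement layer reads off can tolerate the multiplicative slack $(1+\eta b^{-1})$ we carry, which is exactly the kind of "the analysis becomes significantly more delicate" sensitivity the authors flag in Section~\ref{sec:tech}. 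Everything else — the arithmetic tying $b$, $\gamma$, $\eps$ together, and the observation that $\log\rho\le\log\alpha+\BigOh(1)$ — is routine.
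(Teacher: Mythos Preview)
Your proposal is correct and follows essentially the same route as the paper: instantiate Corollary~\ref{cor:eps_subgraphdensity} with $b\in\Theta(\eps^{-2}\log n)$ to get a $(1+\Theta(\eps))$-approximate orientation of $G^b$, then plug this into the Christiansen--Rotenberg rounding scheme in place of the KKPS subroutine, using that our recourse (edge flips) is only $\BigOh(b\log\alpha)$ per update in $G^b$ so the $\BigOh(\log n)$-per-flip cost of their top-tree refinement layer is dominated by our $\BigOh(b^3\log\alpha)$ out-orientation cost. The one point you flag as delicate---whether CR's argument quietly depends on KKPS-specific invariants---is exactly the place the paper opens the black box, and the resolution is that CR's cycle augmentations leave all out-degrees unchanged (and always keep at least one copy of each edge in each direction), so Invariant~\ref{inv:degrees} is never disturbed and no extra data-structure updates are triggered.
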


\noindent
The proof follows immediately from the proof Theorem 26 by Christiansen and Rotenberg~\cite{christiansenICALP} (using Corollary~\ref{cor:eps_subgraphdensity} as opposed to~\cite{KopelowitzKPS13}). 
For the reader's convenience, we will briefly elaborate on how this result is obtained and how we can apply Corollary~\ref{cor:eps_subgraphdensity}.
For the full technical details, we refer to the proof of Theorem 26 in~\cite{christiansenICALP}.

\begin{enumerate}
    \item 
Christiansen and Rotenberg consider a graph $G$ with arboricity $\alpha$.
Moreover, they construct a directed graph $\overrightarrow{G}^b$ which is the graph $G$ where every edge in $G$ is duplicated $b \in \BigOh(\eps^{-2} \log n)$ times.\footnote{In \cite{christiansenICALP}, Christiansen and Rotenberg choose the duplication constant to be $\gamma$ and write $G^\gamma$.} 
Every operation in $G$ triggers $\BigOh(b)$ operations in $\overrightarrow{G}^b$.

\item On the graph $G^b$, they run the algorithm by~\cite{KopelowitzKPS13} to maintain an orientation of $\overrightarrow{G}^b$ where each vertex has an out-degree of at most $\Delta(\overrightarrow{G}^b) = (1 + \eps')\alpha \cdot b + \log_{(1 + \eps')} n$ for $\eps' = \theta(\eps)$ (for instance $\eps' = \eps/4$ works). 
The KKPS~\cite{KopelowitzKPS13} algorithm uses per operation in $G^b$:\footnote{Christiansen and Rotenberg deliberately use the adaptive variant of KKPS~\cite{KopelowitzKPS13}.}
\begin{itemize}[noitemsep, nolistsep]
    \item $\BigOh\paren*{ \paren*{\Delta(\overrightarrow{G}^b) }^2 } = \BigOh\paren*{ (1 + \eps)^2 \alpha^2 b^2 + \eps^{-4} \log^2 n }  = \BigOh( \eps^{-4} \alpha^2 \log^2 n)$ time, and
    \item $\BigOh\paren*{ \Delta(\overrightarrow{G}^b) } = \BigOh\paren*{ (1 + \eps)\alpha b + \eps^{-2} \log n }  = \BigOh( \eps^{-2} \alpha \log n)$ \emph{combinatorial changes } in $\overrightarrow{G}^b$.
    (here, a combinatorial change either adds, removes, or flips an edge in $\overrightarrow{G}^b$).
\end{itemize}

\item Finally, they deploy a clever rounding scheme to transform the orientation $\overrightarrow{G}^b$ into an orientation of $G$ where the out-degree of each vertex in $\overrightarrow{G}$ is at most a factor $\frac{1}{b}$ the out-orientation of $\overrightarrow{G}^b$, plus two.
Thus, they ensure that each vertex has an out-degree of at most: 
\[
(1 + \eps') \alpha + b^{-1} \log_{1 + \eps'} n  + 2 \leq (1 + \eps') \alpha + \frac{\eps'^2}{\log n} \cdot \frac{2\log n}{\eps'} + 2 = (1 + \eps) \alpha + 2
\]
since $\alpha \geq 1$ if the graph has at least one edge (otherwise the claim is vacant).
They achieve this in $\BigOh(\log n)$ additional time per combinatorial change in $\overrightarrow{G}^b$.
Specifically:
\begin{itemize}
    \item They consider for every edge $(u, v)$ in $G$ its partial orientation (i.e. how many edges in $G^b$ point from $u$ to $v$ or vice versa). 
If the partial orientation contains sufficiently many edges directed from $u$ to $v$, the edge in $G$ gets rounded (directed from $u$ to $v$). 
\item Let $H$ be a (not necessarily maximal) set of  edges in $G$ whose direction can be determined in this fashion. They call $H$ a \emph{refinement}. Christiansen and Rotenberg choose $H$ such that in the rounded, directed graph $G-H$ each vertex has an out-degree of at most $(1 + \eps) \alpha$.
\item 
Christiansen and Rotenberg show that $H$ always can be made into a forest.
For all edges in $H$, they no longer explicitly store the $b$ copies in $\overrightarrow{G}^b$. Instead, they store for edges in $H$ their (partial) orientation as an integer in $[0, b]$. 
The forest $H$ gets stored in a top tree where each interior node stores the minimum and maximum partial orientation of all its children.
For any path or cycle in $H$, they can increment or decrement all orientation integers by 1 in $\BigOh(\log n)$ time by lazily updating these maxima and minima in the top tree. 
For each edge in $H$, one can obtain the exact partial orientation in $\BigOh(\log n)$ additional time by adding all lazy updates in the root-to-leaf path of the top tree.
\item 
In addition, they show how to dynamically maintain a $2$-orientation on the forest $H$ in $\BigOh(\log n)$ update time per insertion in the forest. Adding the directed edges from the forest to $G$ ensures that each vertex has an out-degree of at most $(1 + \eps) \alpha + 2$.
\item For each combinatorial change in $\overrightarrow{G}^b$, they spend $\BigOh(\log n)$ time.
Specifically:
\begin{itemize}
    \item each combinatorial change in $\overrightarrow{G}^b$ may remove an edge from the forest. The edge can be rounded in $\BigOh(1)$ time and removed from the top tree in $\BigOh(\log n)$ time. 
    \item each combinatorial change may force an edge in $G$ into the refinement and thus possibly creating a cycle.
    \item When creating a cycle, the authors augment the cycle such that at least one edge on the cycle may be expelled from the refinement. They (implicitly) increment or decrement all orientation integers along the cycle using the lazy top tree in $\BigOh(\log n)$ total time. 
    \item Augmenting a cycle causes the out-degree to remain the same for all elements on the cycle. Hence, the Invariants of KKPS~\cite{KopelowitzKPS13} (and our Invariant~\ref{inv:degrees}) stay unchanged and the augmentation does not trigger any further operations in $\overrightarrow{G}^b$. Note also that they specifically always leave at least one duplicate edge in each direction, so that no additional data structures need be updated. 
\end{itemize}
\item The final edge along the augmented path may subsequently be rounded and added to $G-H$. Thus, spending $\BigOh(\log n)$ time per combinatorial change in $\overrightarrow{G}^b$.
\end{itemize}
\end{enumerate}

\noindent
It follows through these three steps that the algorithm in \cite{christiansenICALP} has a running time of: 
\[
\BigOh\paren*{ b \cdot \paren*{ \paren*{ \Delta(\overrightarrow{G}^b) }^2 +  \Delta(\overrightarrow{G}^b) \log n } } = \BigOh\paren*{ \eps^{-6} \alpha^2 \log^3 n }
\]

\noindent
Given the results in this paper, we can instead apply our results as follows: 

\begin{enumerate}
    \item We again choose $b \in \BigOh(\eps^{-2} \log n)$. Each operation in $G$ triggers $\BigOh(b)$ operations in $\overrightarrow{G}^b$. 
    \item We apply Theorem~\ref{thm:rank0} (or conversely Corollary~\ref{cor:eps_subgraphdensity}) to maintain $\overrightarrow{G}^b$ such that each vertex has an out-degree of at most $\Delta(\overrightarrow{G}^b) = (1 + \theta(\eps)) \alpha b$. 
    We proved that this algorithm takes: 
    \begin{itemize}
        \item  $\BigOh(b^2 \log \alpha)$ time per operation in $\overrightarrow{G}^b$, but
        \item only triggers $\BigOh(b \log \alpha)$ combinatorial changes (edge flips) in $\overrightarrow{G}^b$.
    \end{itemize}
    \item Finally, we apply the rounding scheme by Christiansen and Rotenberg which requires $\BigOh(\log n)$ time per \emph{combinatorial change} in $\overrightarrow{G}^b$.
\end{enumerate}

\noindent
Our total running time is (our algorithm + rounding scheme per combinatorial change):
\[
\BigOh\paren*{b \cdot b \cdot b \log \alpha  + b \cdot b \log \alpha \cdot \log n } = \BigOh\paren*{\eps^{-6} \log^3 n \log \alpha}.
\]

\subparagraph{Related Work}
Historically, four criteria are considered when designing dynamic out-orientation algorithms: the maximum out-degree, the update time (or the recourse), amortised versus worst-case updates, and the adaptability of the algorithm to the current arboricity. 

Brodal and Fagerberg~\cite{Brodal99dynamicrepresentations} were the first to consider the out-orientation problem in a dynamic setting. 
They showed how to maintain an $\BigOh(\alpha_{\max})$ out-orientation with an amortised update time of $\BigOh(\alpha_{\max}+ \log{n})$, where $\alpha_{\max}$ is the maximum arboricity throughout the entire update sequence.
Thus, their result is adaptive to the current arboricity as long as it only increases. 
He, Tang, and Zeh~\cite{HeTZ14} and Kowalik~\cite{10.1016/j.ipl.2006.12.006} provided different analyses of Brodal and Fagerbergs algorithm resulting in faster update times at the cost of worse bounds on the maximum out-degree of the orientations. 
Henzinger, Neumann, and Wiese~\cite{henzinger2020explicit} gave an algorithm able to adapt to the current arboricity of the graph, achieving an out-degree of $\BigOh(\alpha)$ and an amortised update time \emph{independent} of $\alpha$, namely $\BigOh(\log^2 n)$.
Kopelowitz, Krauthgamer, Porat, and Solomon~\cite{KopelowitzKPS13} showed how to maintain an $\BigOh(\alpha+\log n)$ out-orientation with a worst-case update time of $\BigOh(\alpha^2 + \log^2 n)$ fully adaptive to the arboricity. 
Christiansen and Rotenberg~\cite{christiansenICALP,christiansenMFCS} lowered the maximum out-degree to $(1+\varepsilon)\alpha+2$ incurring a worse update time of $\BigOh(\varepsilon^{-6}\alpha^2\log^3 n)$.
Finally, Brodal and Berglin~\cite{berglinetal:LIPIcs:2017:8263} gave an algorithm with a different trade-off; they show how to maintain an $\BigOh(\alpha_{\max}+\log n)$ out-orientation with a worst-case update time of $\BigOh(\log n)$. This update time is faster and independent of $\alpha$, however the maximum out-degree does not adapt to the current value of $\alpha$.

\section{Applications}
\label{app:applications}
In this section, we show how to combine our two trade-offs for out-orientations (theorems \ref{thm:rank0}, \ref{thm:rank1} with existing or folklore reductions, obtaining improved algorithms for maximal matching, arboricity decomposition, and matrix-vector product.

\subsection{Maximal matchings}
  
  For our application in maximal matchings, we first revisit the following result.
  The authors have not seen this theorem stated in this exact generality in the literature, but similar statements appear in~\cite{PelegS16}, \cite{NeimanS16}, and \cite{berglinetal:LIPIcs:2017:8263}
  
\begin{lemma}[Folklore] \label{thm:folklore}
Suppose one can maintain an edge-orientation of a dynamic graph, 
that has $t_u$ update time, 
that for each update performs at most $r_u$ edge re-orientations (direction changes), and that maintains a maximum out-degree of $\le n_o$. Then there is a dynamic maximal matching algorithm\footnote{When the update time $t_u$ is worst-case, the number of re-orientations $r_u$ is upper bounded by $t_u$.} whose update time is $\BigOh(t_u+r_u+n_o)$.
\end{lemma}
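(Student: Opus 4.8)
The plan is to maintain, on top of the dynamic orientation, a maximal matching $M$ using only local information. For every vertex $v$ we store a bit telling whether $v$ is matched and (if so) a pointer to its mate, and we maintain a doubly linked list $F^-(v)$ containing exactly the \emph{unmatched in-neighbours} of $v$, i.e.\ the unmatched vertices $u$ such that the current orientation includes $\overrightarrow{uv}$. Dually, every unmatched vertex $u$ stores, for each of its (at most $n_o$) out-neighbours $w$, a pointer to its own occurrence in $F^-(w)$; thus whenever a vertex changes matched/unmatched status it can insert itself into, or splice itself out of, the lists $F^-(w)$ of all its out-neighbours in $\BigOh(n_o)$ total time. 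The invariant to be maintained is: every unmatched vertex $v$ satisfies $F^-(v)=\emptyset$ and has no unmatched out-neighbour. This invariant forces $M$ to be maximal: for any edge $\{u,v\}$ between two unmatched vertices the orientation contains $\overrightarrow{uv}$ or $\overrightarrow{vu}$, and in the two cases $u\in F^-(v)$ or $v\in F^-(u)$, contradicting the invariant.

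Next I would specify how to react to the three kinds of elementary operations reported by the orientation data structure. A \emph{re-orientation} of an edge from $\overrightarrow{uv}$ to $\overrightarrow{vu}$ costs $\BigOh(1)$: if $u$ is unmatched we splice $u$ out of $F^-(v)$ using its stored pointer, and if $v$ is unmatched we insert $v$ into $F^-(u)$; since one update to $G$ causes at most $r_u$ re-orientations, this contributes $\BigOh(r_u)$. For an edge \emph{insertion} $\{u,v\}$ we first let the orientation data structure place and orient the edge (handling the induced re-orientations as above), then add $u$ to $F^-(v)$ or $v$ to $F^-(u)$ as dictated by the new direction, and finally, if both $u$ and $v$ happen to be unmatched, we add $\{u,v\}$ to $M$ and invoke the routine \textsc{Settle} (below) for the resulting status changes. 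For an edge \emph{deletion} $\{u,v\}$ we again fix the $F^-$ lists in $\BigOh(1)$; if $\{u,v\}\notin M$ we are done, and if $\{u,v\}\in M$ we remove it from $M$, which leaves $u$ and $v$ unmatched, and we call \textsc{Settle}$(u)$ and \textsc{Settle}$(v)$.

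The routine \textsc{Settle}$(x)$ is called when $x$ has just become unmatched and must restore the invariant at $x$. It first checks whether $F^-(x)\neq\emptyset$; if so it picks any $y\in F^-(x)$. Otherwise it scans the at most $n_o$ out-neighbours of $x$ and stops at the first unmatched one $y$ (if any). If some $y$ is found, it adds $\{x,y\}$ to $M$; the vertices $x$ and $y$ then switch from unmatched to matched, and each of them is removed from the $F^-$ lists of all its out-neighbours in $\BigOh(n_o)$ time via the stored pointers. The key observation is that \textsc{Settle} never makes a new vertex unmatched, so there is no cascade: the only vertices that ever become unmatched during an update are the (at most two) endpoints of a deleted matching edge, and each \textsc{Settle} call turns at most one further unmatched pair into a matched pair. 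Hence at most a constant number of matched/unmatched status changes occur per update, each paid for in $\BigOh(n_o)$ time.

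To finish, I would check correctness and add up the cost. The invariant is preserved by re-orientations and by edge insertions or deletions that do not touch $M$; after a matching edge is deleted and \textsc{Settle} is run on both endpoints, each endpoint is either re-matched or provably has no unmatched neighbour (because \textsc{Settle} inspected both $F^-(x)$ and all out-neighbours of $x$, which together cover every neighbour of $x$), and matching $x$ to $y$ cannot create an unmatched--unmatched edge elsewhere since $y$'s neighbours were already invariant-respecting and no third vertex changed status. Summing up, one update of $G$ costs $\BigOh(t_u)$ inside the orientation data structure, plus $\BigOh(r_u)$ for the re-orientations, plus $\BigOh(1)$ invocations of \textsc{Settle} and $\BigOh(1)$ status changes, each $\BigOh(n_o)$, for a total of $\BigOh(t_u+r_u+n_o)$. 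The only genuinely delicate point — and the part I expect to require the most care — is maintaining the $F^-$ lists consistently under re-orientations and under status changes, and arguing that status changes do not propagate, so that the $\BigOh(n_o)$ term is charged only $\BigOh(1)$ times per update rather than once per neighbour of an unmatched vertex.
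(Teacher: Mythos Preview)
Your proof is correct and follows essentially the same approach as the paper: maintain, for each vertex, a list of its unmatched in-neighbours so that when a vertex becomes free it can find an unmatched in-neighbour in $\BigOh(1)$ or else scan its $\leq n_o$ out-neighbours, with status changes propagated to out-neighbours in $\BigOh(n_o)$ and re-orientations handled in $\BigOh(1)$ each. Your presentation is more detailed (you state and verify the invariant explicitly, and argue non-propagation of status changes), whereas the paper's proof is a terse sketch; the only cosmetic difference is that the paper keeps two in-lists (matched and available) rather than just your $F^-(v)$.
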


\begin{proof}[Proof of Lemma~\ref{thm:folklore}]
Each vertex maintains two doubly-linked lists over its in-neighbors (one for the matched, and one for the available in-neighbors) called \emph{in-lists}
and a doubly-linked list of its out-neighbors called the \emph{out-list}. When a vertex becomes available because of an edge deletion, it may match with the first available in-vertex if one exists.
If no such in-vertex exists, it may propose a matching to its $\le n_o$ out-neighbors in the out-list, and then match with an arbitrary one of these if any is available. When a vertex $v$ changes status between matched and available, it notifies all vertices in its out-list, who move $v$ between in-lists in $\BigOh(1)$ time. Finally, when an edge changes direction, each endpoint needs to move the other endpoint between in- and out-lists. 

The bookkeeping of moving vertices between unordered lists takes constant time.
For each edge insertion or deletion, we may spend additionally $\BigOh(n_o)$ time  proposing to or notifying to out-neighbors to a vertex, for at most two vertices for each deletion or insertion respectively.
\end{proof}

\noindent
With this application in mind, some desirable features of out-orientation algorithms become evident: 
\begin{itemize}[noitemsep, nolistsep]
    \item we want the number of out-edges $\mout$ to be (asymptotically) low, and
    \item we want the update time to be efficient, preferably deterministic and worst-case.
\end{itemize}

\noindent
Here, a parameter for having the number of out-edges asymptotically as low as possible, can be sparseness measures such as the maximum subgraph density or the arboricity of the graph. 
An interesting challenge for dynamic graphs is that the density may vary through the course of dynamic updates, and we prefer not to have the update time in our current sparse graph to be affected by a brief occurrence of density in the past. 
In the work of Henzinger, Neumann, and Wiese, they show how it is possible to adjust to the current graph sparseness in the amortised setting~\cite{henzinger2020explicit}. In this paper, however, we are interested in the case where both the update time is worst-case and the number of re-orientations is bounded. 
One previous approach to this challenge is to take a fixed upper bound on the sparseness as parameter to the algorithm, and then use $\log n$ data structures in parallel~\cite{sawlani2020near}. Since we want the number of re-orientations to also be bounded, we cannot simply change between two possibly very different out-orientations that result from different bounds on the sparseness. Any scheme for deamortising the switch between structures would be less simple than the approach we see in this paper.

\begin{corollary}\label{cor:match}
    There is a deterministic dynamic maximal matching algorithm with worst-case $\BigOh(\alpha + \log ^2 n \log \alpha)$ update time, where $\alpha$ is the current arboricity of the dynamic graph. The algorithm also implies a $2$-approximate vertex cover in the same update time. 
\end{corollary}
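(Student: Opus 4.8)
The plan is to combine Theorem~\ref{thm:rank1} with the folklore reduction of Lemma~\ref{thm:folklore}. Theorem~\ref{thm:rank1} maintains an explicit out-orientation $\overrightarrow{G}$ of the dynamic graph with worst-case update time $t_u \in \BigOh(\log^2 n \log \rho)$ and maximum out-degree $n_o = \Delta(\overrightarrow{G}) \in \BigOh(\rho + \log n) \subseteq \BigOh(\alpha + \log n)$, where $\rho$ is the current maximum subgraph density and we use $\rho \le \alpha$ together with $\log\rho \le \log\alpha$. Since the update time is worst-case, the number of re-orientations $r_u$ per update is bounded by $t_u$ (as noted in the footnote of Lemma~\ref{thm:folklore}), so $r_u \in \BigOh(\log^2 n \log \alpha)$ as well. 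Plugging $t_u, r_u \in \BigOh(\log^2 n\log\alpha)$ and $n_o \in \BigOh(\alpha + \log n)$ into the bound $\BigOh(t_u + r_u + n_o)$ from Lemma~\ref{thm:folklore} yields a worst-case update time of $\BigOh(\alpha + \log^2 n \log\alpha)$ for maintaining a maximal matching, as claimed. (Here one should double-check that $\log n$ is absorbed: $\log n \le \log^2 n \le \log^2 n \log \alpha$ whenever $\alpha \ge 2$, and the $\alpha \le 1$ case is trivial since then the graph is a forest and standard bounds apply.)

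For the vertex cover statement, I would invoke the classical fact that the set of endpoints of a maximal matching $M$ is a vertex cover of size $2|M| \le 2\,\mathrm{OPT}$, since every edge is incident to a matched vertex (otherwise it could be added to $M$) and any vertex cover must pick at least one endpoint of each of the $|M|$ disjoint matched edges. Maintaining this vertex cover explicitly costs no asymptotic overhead: each time an edge enters or leaves $M$, we insert or delete its (at most two) endpoints from the cover set in $\BigOh(1)$ amortized-per-change time, and the number of changes to $M$ per update is $\BigOh(1)$ by the structure of the algorithm in Lemma~\ref{thm:folklore} (a single update changes the matching status of only a constant number of vertices directly, plus the reoriented edges are already accounted for). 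Hence the $2$-approximate vertex cover is maintained within the same $\BigOh(\alpha + \log^2 n\log\alpha)$ worst-case update time.

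The only mildly delicate point — and the one I would be most careful about — is the adaptivity claim: the update time should depend on the \emph{current} arboricity $\alpha$, not on the maximum arboricity ever seen. This is exactly what Theorem~\ref{thm:rank1} delivers, because its guarantees are stated in terms of $\rho$ at the time of the update, and the data structure uses only $\eta \in \Theta(\log n)$ and $b = 1$ which are fixed independently of $\rho$; no copy of the structure is parameterized by a density upper bound, so a transient dense phase does not pollute later updates. I would state this explicitly, note that $n_o \in \BigOh(\alpha + \log n)$ is likewise an \emph{adaptive} bound, and observe that the reduction in Lemma~\ref{thm:folklore} is oblivious to how $n_o$ was obtained, so it inherits adaptivity directly. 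With these observations the proof is essentially a one-line composition; the substance is entirely in Theorem~\ref{thm:rank1}, which is already proved.
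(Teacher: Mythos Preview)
Your proposal is correct and matches the paper's intended approach: the corollary is stated without an explicit proof in the paper precisely because it follows by plugging Theorem~\ref{thm:rank1} into Lemma~\ref{thm:folklore}, exactly as you describe, with the $2$-approximate vertex cover following from the standard observation that the endpoints of any maximal matching form such a cover.
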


\subparagraph{Related Work}
Matchings have been widely studied in dynamic graph models. Under various plausible conjectures, we know that a maximum matching cannot be maintained even in the incremental setting and even for low arboricity graphs (such as planar graphs) substantially faster than $\Omega(n)$ update time \cite{AbboudD16,AbboudW14,HenzingerKNS15,KopelowitzPP16,Dahlgaard16}.
Given this, we typically relax the requirement from maximum matching to maintaining matchings with other interesting properties. 
One such relaxation is to require that the maintained matching is only \emph{maximal}. The ability to retain a maximal matching is frequently used by other algorithms, notably it immediately implies a $2$-approximate vertex cover. 
In incremental graphs, maintaining a maximal matching is trivially done with the aforementioned greedy algorithm. 
For decremental\footnote{Maintaining an \emph{approximate maximum matching} decrementally is substantially easier than doing so for fully dynamic graphs. Indeed, recently work by \cite{AssadiBD22} matches the running times for approximate maximum matching in incremental graphs \cite{GLSSS19}. However, for maximal matching, we are unaware of work on decremental graphs that improves over fully dynamic results.} or fully dynamic graphs, there exist a number of trade-offs (depending on whether the algorithm is randomised or determinstic, and whether the update time is worst case or amortised).  Baswana, Gupta, and Sen~\cite{BaswanaGS15} and
Solomon~\cite{Solomon16} gave randomised algorithms maintaining a maximal matching with $\BigOh(\log n)$ and $\BigOh(1)$ amortised update time. These results were subsequently deamortised by Bernstein, Forster, and Henzinger \cite{BernsteinFH21} with only a $\polylog n$ increase in the update time. For deterministic algorithms, maintaining a maximal matching is substantially more difficult.
Ivkovic and Lloyd~\cite{IvkovicL93} gave a deterministic algorithm with $\BigOh((n+m)^{\sqrt{2}/2})$ worst case update time. This was subsequently improved to $\BigOh(\sqrt{m})$ worst case update time by Neiman and Solomon \cite{NeimanS16}, which remains the fastest deterministic algorithm for general graphs.

Nevertheless, there exist a number of results improving this result for low-arboricity graphs. Neiman and Solomon \cite{NeimanS16} gave a deterministic algorithm that, assuming that the arboriticty of the graph is always bounded by $\alpha_{\max}$, maintains a maximal matching in amortised time $\BigOh(\min_{\beta>1}\{\alpha_{\max} \cdot \beta + \log_{\beta} n\})$, which can be improved to $\BigOh(\log n/\log\log n)$ if the arboricity is always upper bounded by a constant. Under the same assumptions, He, Tang, and Zeh \cite{HeTZ14} improved this to $\BigOh(\alpha_{\max} + \sqrt{\alpha_{\max}\log n})$ amortised update time.
Without requiring that the arboricity be bounded at all times, the work by Kopelowitz, Krauthgsamer, Porat, and Solomon~\cite{KopelowitzKPS13} implies a deterministic algorithm with $\BigOh(\alpha^2 + \log^2 n)$ worst case update time, where $\alpha$ is the arboricity of the graph when receiving an edge-update.

\subsection{Dynamic \texorpdfstring{$\Delta+1$}{∆+1} colouring}

\begin{lemma} \label{thm:folkloreish}
Suppose one can maintain an edge-orientation of a dynamic graph, 
that has $t_u$ update time, 
that for each update performs at most $r_u$ edge re-orientations (direction changes), and that maintains a maximal out-degree of $\le n_o$. Then there is a dynamic $\Delta+1$-colouring algorithm whose update time is $\BigOh(t_u+r_u+n_o)$.
\end{lemma}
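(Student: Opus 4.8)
The plan is to mirror the proof of Lemma~\ref{thm:folklore}, replacing the two in-lists (matched / available) by \emph{per-colour} in-lists, and replacing the ``propose to the $\le n_o$ out-neighbours'' step by ``scan the $\le n_o$ out-neighbours to pick a free colour''. Run the given out-orientation algorithm as a black box. Each vertex $v$ stores its colour $\chi(v)$; a doubly linked list $O_v$ of its out-neighbours (length $d^+(v)\le n_o$); an array $\mathrm{cnt}_v$ indexed by colour, where $\mathrm{cnt}_v[c]$ is the number of in-neighbours $w$ of $v$ with $\chi(w)=c$; and a doubly linked list $F_v$ of all colours $c$ with $\mathrm{cnt}_v[c]=0$, together with an array of back-pointers from colour $c$ to its node in $F_v$. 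A single colour change of one in-neighbour, or a single change of one directed edge, affects only $\BigOh(1)$ entries of these structures, each updatable in $\BigOh(1)$ time.

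An update to $G$ is handled as follows. First, pass the edge insertion or deletion to the out-orientation algorithm, at cost $t_u$ and with at most $r_u$ re-orientations. For each elementary change to $\overrightarrow{G}$ (the inserted/deleted edge's own orientation, and each of the $\le r_u$ flips), say the change turning $a\to b$ into $b\to a$, do the bookkeeping: delete $b$ from $O_a$, decrement $\mathrm{cnt}_b[\chi(a)]$ (inserting $\chi(a)$ into $F_b$ if it reaches $0$), add $a$ to $O_b$, and increment $\mathrm{cnt}_a[\chi(b)]$ (deleting $\chi(b)$ from $F_a$ if it was $0$) --- $\BigOh(1)$ each, hence $\BigOh(r_u)$ total, and none of it changes a colour. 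Second, if the operation was the insertion of $(u,v)$ with $\chi(u)=\chi(v)$, recolour one endpoint, say $v$: since $\abs{F_v}\ge(\Delta+1)-(\deg(v)-d^+(v))\ge d^+(v)+1$, the first $d^+(v)+1$ colours of $F_v$ are all unused by in-neighbours of $v$; at most $d^+(v)$ of them occur among $\set{\chi(w)\mid w\in O_v}$, which we scan in $\BigOh(n_o)$ time, so at least one survivor $c_{\mathrm{new}}$ is unused by every neighbour of $v$. Set $\chi(v)=c_{\mathrm{new}}$; since $v$'s colour is recorded only in the $\mathrm{cnt}$-arrays of the out-neighbours of $v$, notify each $w\in O_v$ to update $\mathrm{cnt}_w$ and $F_w$, in $\BigOh(n_o)$ total. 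Crucially $c_{\mathrm{new}}$ differs from all neighbours' colours, so the recolouring creates no monochromatic edge and does not cascade.

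Correctness is the invariant that the stored colouring is always proper and uses only colours in $\set{1,\dots,\Delta+1}$: flips and deletions leave every colour in place and hence preserve properness, and the lone recolouring after a bad insertion is, by the counting above, to a colour distinct from every neighbour of $v$ --- the hypotheses $d^+(v)\le n_o$ and $\deg(v)\le\Delta$ are exactly what make $F_v$ large enough. The per-update cost is $\BigOh(t_u)+\BigOh(r_u)+\BigOh(n_o)=\BigOh(t_u+r_u+n_o)$.

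\textbf{Main obstacle.} The one genuinely fiddly point, and the step I expect to require the most care in a full write-up, is that the palette $\set{1,\dots,\Delta+1}$ changes with $\Delta$, so we cannot touch every $F_v$ when $\Delta$ grows. I would handle this lazily: keep the current palette size $P=\Delta+1$ globally, store each $\mathrm{cnt}_v$ only for colours up to a per-vertex watermark, and treat every colour between a vertex's watermark and $P$ as \emph{implicitly} in $F_v$ --- valid because no vertex was ever assigned a colour exceeding the palette size at assignment time, so no in-neighbour blocks such a colour; when enumerating the first $d^+(v)+1$ elements of $F_v$ we append these implicit colours on demand, and rebuild a vertex's explicit structure the next time it is recoloured, with a symmetric (rarely triggered) rebuild when $P$ drops. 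This keeps the space and the stated per-operation bounds intact, and everything else is the same bookkeeping as in Lemma~\ref{thm:folklore}.
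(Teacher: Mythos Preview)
Your bookkeeping for flips and for the ``scan $\le n_o$ out-neighbours to find a colour that is both in-free and out-free'' step matches the paper, but there is one design choice where you diverge, and it creates a genuine gap. You maintain a \emph{global} palette $\{1,\dots,\Delta+1\}$, whereas the paper gives each vertex $v$ its own palette $\{0,\dots,\deg(v)\}$ and maintains the stronger local invariant $\chi(v)\le\deg(v)$. With the paper's choice, an edge deletion can violate the palette only at an endpoint of that edge (its colour may now exceed its new degree), so $\BigOh(1)$ vertices need recolouring, each in $\BigOh(n_o)$ time. With your global palette, a single deletion that lowers $\Delta$ can invalidate the colour of arbitrarily many vertices at once: \emph{any} vertex in the graph may carry the now-forbidden colour $\Delta_{\text{old}}+1$, not just the endpoints, since your recolouring step picks from the first $d^+(v)+1$ entries of $F_v$ and gives no bound on how large those colours are. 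Your ``symmetric (rarely triggered) rebuild when $P$ drops'' does not address this; even if you index vertices by colour, recolouring all offenders can cost $\Theta(n\cdot n_o)$ for one deletion.

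The fix is exactly the paper's idea you are missing: keep the in-free list and the \texttt{taken}/\texttt{free} arrays at $v$ only over $\{0,\dots,\deg(v)\}$, and trigger a recolour of $v$ whenever $\chi(v)>\deg(v)$. This localises every palette violation to an endpoint of the updated edge, so deletions cost $\BigOh(n_o)$ just like insertions, and it dissolves your ``main obstacle'' entirely (no global watermarks or lazy extensions are needed, since each vertex's array grows or shrinks by one with its degree). With that change your argument and the paper's coincide.
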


\begin{proof}
For a vertex $v$, say a colour is \emph{in-free} if no in-neighbor of $v$ has that colour. 
For a vertex of degree $d$, keep a doubly linked list of in-free colours from the palette 
$0,1,\ldots,d$. Keep an array \texttt{taken} of size $d+1$ where the $i$'th entry points to a doubly-linked list of in-neighbors of colour $i$, and an array \texttt{free} of size $d+1$ where the $i$'th entry points to the $i$'th colour in the list of in-free colours if the $i$'th colour is in-free.

The colour of a vertex $v$ is found by finding a colour that is both in-free and out-free: examine the $\le n_o$ out-neighbors, and use the \texttt{free}-array to temporarily move the $\le n_o$ out-taken colours to a list \texttt{out-taken}. Give $v$ an arbitrary free colour from the remaining list, and undo the \texttt{out-taken} list. This takes $\BigOh(n_o)$ time, and gives $v$ a colour between $0$ and its degree.

When an edge changes direction, this incurs $\BigOh(1)$ changes to linked lists and pointers. When an edge update incurs $r_u$ edge re-orientations, we thus have $\BigOh(r_u)$ such changes. When an edge is inserted/deleted from a properly coloured graph, at most one vertex needs to be recoloured, either because there is a colour conflict, or because its colour number is larger than its degree. This vertex can be recoloured in $\BigOh(n_o)$ time. Thus, the total time per edge insertion or deletion is $\BigOh(t_u + r_u + n_o)$.
\end{proof}

\begin{corollary}\label{cor:col}
    There is a deterministic dynamic $\Delta+1$ colouring algorithm with worst-case $\BigOh(\alpha + \log ^2 n \log \alpha)$ update time, where $\alpha$ is the current arboricity of the dynamic graph. 
\end{corollary}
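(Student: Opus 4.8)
The plan is to obtain Corollary~\ref{cor:col} by feeding the worst-case out-orientation algorithm of Theorem~\ref{thm:rank1} into the black-box reduction of Lemma~\ref{thm:folkloreish}. Recall that Lemma~\ref{thm:folkloreish} turns any dynamic edge-orientation algorithm with update time $t_u$, at most $r_u$ re-orientations per update, and maximum out-degree at most $n_o$ into a dynamic $\Delta+1$ colouring algorithm with update time $\BigOh(t_u + r_u + n_o)$. So it suffices to exhibit an orientation algorithm with suitable parameters and then do the arithmetic.

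First I would invoke Theorem~\ref{thm:rank1} with $\theta = 1$, $b = 1$, $\eta \in \Theta(\log n)$. This maintains an explicit integral orientation $\overrightarrow{G} = \overrightarrow{G}^b$ of $G$ satisfying Invariant~\ref{inv:degrees_additive}, in worst-case time $t_u = \BigOh(\log^2 n \log \rho)$ per update in $G$, with maximum out-degree $\Delta(\overrightarrow{G}) = \BigOh(\rho + \log n)$, so we may take $n_o = \BigOh(\rho + \log n)$. Next I would pass from the density parameter $\rho$ to the arboricity $\alpha$: every subgraph $H$ with $|V(H)| \ge 2$ satisfies $\rho(H) = \tfrac{|E(H)|}{|V(H)|} \le \tfrac{|E(H)|}{|V(H)|-1} \le 2\,\rho(H)$, whence $\rho \le \alpha \le 2\rho + 1$ and therefore $\BigOh(\rho) = \BigOh(\alpha)$ and $\log \rho = \BigOh(\log \alpha)$. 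Thus $t_u = \BigOh(\log^2 n \log \alpha)$ and $n_o = \BigOh(\alpha + \log n)$. For the recourse: since the update time is worst-case, the number of combinatorial changes — hence edge re-orientations — per update is bounded by $t_u$, giving $r_u = \BigOh(\log^2 n \log \alpha)$; in fact the recursive depth of the flip chains bounds this more tightly by a $\log n$ factor (as noted in the introduction), i.e. $r_u = \BigOh(\log n \log \alpha)$, but the weaker bound already suffices.

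Finally I would substitute these into Lemma~\ref{thm:folkloreish}, obtaining a $\Delta+1$ colouring algorithm with worst-case update time
\[
\BigOh(t_u + r_u + n_o) = \BigOh\paren*{\log^2 n \log \alpha + \log^2 n \log \alpha + \alpha + \log n} = \BigOh(\alpha + \log^2 n \log \alpha),
\]
which is deterministic because both Theorem~\ref{thm:rank1} and the reduction of Lemma~\ref{thm:folkloreish} are deterministic. This corollary is essentially immediate given the machinery already developed, so there is no real obstacle; the only points requiring a little care are (i) using the \emph{worst-case} orientation algorithm (Theorem~\ref{thm:rank1}), rather than its amortised counterpart Theorem~\ref{thm:amor1}, since Lemma~\ref{thm:folkloreish} passes the worst-case/amortised nature of $t_u$ through transparently, and (ii) the elementary $\rho$-versus-$\alpha$ comparison used to restate the bound in terms of the current arboricity. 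The argument is entirely parallel to that of Corollary~\ref{cor:match}.
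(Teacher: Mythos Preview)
Your proposal is correct and matches the paper's approach exactly: the corollary is obtained by plugging the worst-case orientation of Theorem~\ref{thm:rank1} into the black-box reduction of Lemma~\ref{thm:folkloreish}, just as Corollary~\ref{cor:match} is derived. The paper does not spell out a separate proof for this corollary, so your write-up (including the $\rho$-to-$\alpha$ conversion and the observation that $r_u\le t_u$ in the worst-case setting) is precisely the intended argument.
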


\subparagraph{Related Work}
Previous work presented randomised algorithms with constant amortised update time per edge insertion/deletion \cite{BhattacharyaGKL22,HenzingerP22}. For deterministic algorithms, \cite{BhattacharyaCHN18} showed that if one is willing to use $(1+o(1))\cdot \Delta$, colours, a $\polylog(\Delta)$ amortised update time is possible. Solomon and Wein \cite{SolomonW20} extended the algorithm by \cite{BhattacharyaCHN18} and further showed that it is possible to maintain an $\alpha\log^2 n$ colouring in $\polylog(n)$ amortised update time.


\subsection{Dynamic matrix vector product}

Suppose we have an $n \times n$ dynamic matrix $A$, and a dynamic $n$-vector $x$, and we want to maintain a data structure that allows us to efficiently query entries of $Ax$.  
The problem is related to the Online Boolean Matrix-Vector Multiplication (OMV), which is commonly used to obtain conditional lower bounds \cite{CliffordGL15,HenzingerKNS15,LarsenW17,Patrascu10}.
If $A$ is symmetric and sparse, in the sense that the undirected graph $G$ with $A$ as adjacency matrix has low arboricity, then we can use an algorithm for bounded out-degree orientation as a black-box to give an efficient data structure as follows:

\begin{lemma}[Implicit in Thm. A.3 in~\cite{KopelowitzKPS13}]
    Suppose one can maintain an edge-orientation of a dynamic graph with adjacency matrix $A$, that has $t_u$ update time, that for each update performs at most $r_u$ edge re-orientations (direction changes), and that maintains a maximal out-degree of $\le n_o$. Then there is a dynamic matrix-vector product algorithm that supports entry-pair changes to $A$ in $\BigOh(t_u+r_u)$ time, entry changes to the vector $x$ in $\BigOh(n_o)$ time, and queries to the an entry of product $Ax$ in $\BigOh(n_o)$ time.
\end{lemma}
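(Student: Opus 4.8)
The plan is to maintain, for every vertex $i$, one scalar $s_i := \sum_{j \in N^-(i)} A_{ij}\,x_j$, namely the part of $(Ax)_i$ contributed by the \emph{in}-neighbours of $i$ in the maintained orientation, together with a separate array holding the diagonal terms $A_{ii}\,x_i$. Since $(Ax)_i = A_{ii}x_i + \sum_{j \in N^+(i)} A_{ij}x_j + s_i$, a query for entry $i$ reads the diagonal term and $s_i$, then loops over the at most $n_o$ out-neighbours of $i$ to add the middle sum, for a total of $\BigOh(n_o)$ time. All the work is in showing that updates keep every $s_i$ correct within the claimed budgets.

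I would route every entry-pair change of $A$ (equivalently, an edge-weight update of the graph $G$ with adjacency matrix $A$, which is an edge insertion or deletion when the weight crosses $0$, and a pure weight change otherwise) directly into the black-box orientation structure, paying its $t_u$ update time and at most $r_u$ re-orientations. The key observation is that each re-orientation of an arc $\overrightarrow{jk}$ moves $j$ between $N^+(k)$ and $N^-(k)$ (and symmetrically moves $k$ between $N^+(j)$ and $N^-(j)$), so it changes at most two of the stored values --- $s_k$ by $\pm A_{jk}x_j$ and $s_j$ by $\pm A_{jk}x_k$ --- an $\BigOh(1)$ fix-up, hence $\BigOh(r_u)$ in total. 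The weight change $A_{jk}\to A'_{jk}$ itself adjusts $s_j$ (if $k \in N^-(j)$) or $s_k$ (if $j \in N^-(k)$) by $(A'_{jk}-A_{jk})x_\cdot$, another $\BigOh(1)$, and the diagonal array handles $A_{jj}$ or $A_{kk}$ changes in $\BigOh(1)$; altogether $\BigOh(t_u + r_u)$ per entry-pair update. For a change $x_j \to x'_j$, the only affected stored values are $s_i$ for those $i$ with $j \in N^-(i)$, i.e. $i \in N^+(j)$, of which there are at most $n_o$; each is corrected by adding $A_{ij}(x'_j - x_j)$, plus the $\BigOh(1)$ diagonal fix, and no re-orientations are triggered since $G$ is unchanged, so $\BigOh(n_o)$ total. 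Storing $N^+(\cdot)$ and $N^-(\cdot)$ as doubly linked lists (with cross pointers so that an arc can be moved between them in $\BigOh(1)$) ensures no hidden logarithmic factors enter.

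The main obstacle I expect is the coupling between $A$-updates and the orientation data structure: an entry change can toggle the presence of an edge, so it must be handled through the orientation structure rather than by editing $A$ in isolation, and the $s_\cdot$ values must be recomputed for \emph{both} endpoints of \emph{every} arc the structure reports as re-oriented, not only for the arc whose weight changed. Once one is disciplined about this --- treating ``re-orient arc $\overrightarrow{jk}$'' as a primitive that always triggers the two constant-time corrections above --- correctness and the running-time accounting follow directly, and the query bound is the $\BigOh(n_o)$ evaluation noted in the first paragraph.
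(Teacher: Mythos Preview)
Your proposal is correct and takes essentially the same approach as the paper: both store $s_i=\sum_{j\in N^-(i)}A_{ij}x_j$, answer a query by adding the $\BigOh(n_o)$ out-neighbour contributions to $s_i$, fix up $\BigOh(1)$ sums per re-orientation for a total of $\BigOh(t_u+r_u)$ on matrix updates, and touch the $\le n_o$ sums $\{s_i\}_{i\in N^+(j)}$ on a change to $x_j$. Your explicit handling of the diagonal and of the list cross-pointers is extra detail the paper omits (it treats $A$ as a zero-diagonal adjacency matrix), but the argument is the same.
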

\begin{proof}
    Let each node $i$ store the sum $s_i=\sum_{j\in N^-(i)}A_{ij}x_j$, i.e. the sum of the terms of $(Ax)_i=\sum_{j\in N(i)}A_{ij}x_j$ corresponding to incoming edges at $i$.  Changing entry $A_{ij}=A_{ji}$ in the matrix to or from $0$ corresponds to deleting or inserting an edge, which takes $t_u$ time and does at most $r_u$ edge re-orientations. Updating the $\BigOh(1)$ affected sums after inserting, deleting, re-orienting, or re-weighting an edge takes worst case $\BigOh(1)$ time. Any entry update to the matrix $A$ thus takes $\BigOh(t_u+r_u)$ time. 
    When a vector entry $x_j$ changes, we need to update the at most $n_o$ sums $\{s_i\}_{i\in N^+(j)}$, which can be done in worst case $\BigOh(n_o)$ time.
    Finally, the query for $(Ax)_i$ is computed as $(Ax)_i=s_i+\sum_{j\in N^+(i)}A_{ij}x_j$ in worst case $\BigOh(n_o)$ time.
\end{proof}

This result is used in~\cite[Theorem A.3]{KopelowitzKPS13} to give an algorithm for dynamic matrix vector product with running time $\BigOh(\alpha^2+\log^2 n)$ for updating the matrix, and $\BigOh(\alpha+\log n)$ for updating the vector and for queries.

Combining this theorem with our Theorem~\ref{thm:rank1} gives us an algorithm for dynamic matrix vector product with slightly improved time for updating the matrix:
\begin{corollary}\label{cor:dynamicmatrix}
    Let $A$ be a symmetric $n\times n$ matrix, and let $G$ be the undirected graph whose adjacency matrix is $A$. Let $x$ be an $n$ dimensional vector. Then we can support changes to $A$ in $\BigOh(\log^2 n\log\alpha)$ worst case time, changes to $x$ in $\BigOh(\alpha+\log n)$ worst case time, and for each $i\in \{1,\ldots,n\}$ we can report $\sum_{j=1}^nA_{ij}x_j$ in worst case $\BigOh(\alpha+\log n)$ time.
\end{corollary}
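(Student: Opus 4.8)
The plan is to obtain the corollary by feeding the worst-case out-orientation algorithm of Theorem~\ref{thm:rank1} into the black-box reduction stated just above it, and then rephrasing the guarantees in terms of the arboricity.

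First I would instantiate the reduction with the data structure of Theorem~\ref{thm:rank1}, run on the graph $G$ whose (symmetric) adjacency matrix is $A$, where a nonzero off-diagonal entry $A_{ij}=A_{ji}$ corresponds to an edge $ij$ of weight $A_{ij}$. Theorem~\ref{thm:rank1} maintains an \emph{explicit} orientation $\overrightarrow{G}=\overrightarrow{G}^b$ (here $b=1$) together with the neighbour lists $N^+(u),N^-(u)$ that the reduction reads, in worst-case $t_u=\BigOh(\log^2 n\log\rho)$ time per edge update; since this update time is worst-case, the number of re-orientations obeys $r_u\le t_u$ (the footnote of Lemma~\ref{thm:folklore}), and the algorithm keeps $\Delta(\overrightarrow{G})\in\BigOh(\rho+\log n)$, so we may take $n_o=\BigOh(\rho+\log n)$. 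Substituting these three quantities into the reduction yields: entry-pair changes to $A$ in $\BigOh(t_u+r_u)=\BigOh(\log^2 n\log\rho)$ worst-case time, entry changes to $x$ in $\BigOh(n_o)=\BigOh(\rho+\log n)$ worst-case time, and a query for $(Ax)_i$ in $\BigOh(n_o)=\BigOh(\rho+\log n)$ worst-case time. A pure re-weighting (a change to a nonzero entry that keeps it nonzero) is not an edge update and costs only the $\BigOh(1)$ needed to refresh the two affected partial sums; diagonal entries $A_{ii}$ are carried as an additional $\BigOh(1)$-maintainable term $A_{ii}x_i$ of $(Ax)_i$.

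Next I would convert $\rho$ into $\alpha$. From the two definitions, every subgraph $H$ with $|V(H)|\ge2$ satisfies $\frac{|E(H)|}{|V(H)|}\le\frac{|E(H)|}{|V(H)|-1}\le\frac{2|E(H)|}{|V(H)|}$, so $\rho\le\alpha\le 2\rho+1$; therefore $\rho+\log n=\BigOh(\alpha+\log n)$ and $\log(\rho+2)=\Theta(\log(\alpha+2))$, and the three bounds become $\BigOh(\log^2 n\log\alpha)$, $\BigOh(\alpha+\log n)$, and $\BigOh(\alpha+\log n)$, which is exactly the statement. The only genuinely trivial loose end is the degenerate regime $\alpha\le1$ (a forest, or no edges), where every out-degree is $\BigOh(\log n)$ and every operation already runs in $\BigOh(\log^2 n)$ time.

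There is no deep obstacle here; the work lies in verifying that the data structure of Theorem~\ref{thm:rank1} exposes precisely what the reduction needs — an explicit orientation, enumeration of $N^+(i)$ in $\BigOh(|N^+(i)|)$ time, and the identity of the $\BigOh(r_u)$ arcs flipped during an update so that only that many partial sums $s_i$ must be recomputed — together with the routine $\rho$-versus-$\alpha$ bookkeeping and the handling of diagonal and weighted entries, all of which are immediate from the material developed earlier in the paper.
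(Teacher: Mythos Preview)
Your proposal is correct and follows essentially the same route as the paper: plug Theorem~\ref{thm:rank1} into the black-box reduction (the lemma implicit in~\cite{KopelowitzKPS13}) and use $\rho=\Theta(\alpha)$ to restate the bounds. The extra remarks you make about diagonal entries, pure re-weightings, and the $\alpha\le1$ regime are fine bookkeeping but not needed for the paper's level of detail, which simply states the corollary as an immediate combination.
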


If we instead combine with our Theorem~\ref{thm:rank0} we get an algorithm for dynamic matrix vector product with slightly worse time for updating the matrix, but improved time for updating the vector and for queries:
\begin{corollary}
    Let $A$ be a symmetric $n \times n$ matrix, and let $G$ be the undirected graph whose adjacency matrix is $A$. Let $x$ be an $n$ dimensional vector. Then we can support changes to $A$ in $\BigOh(\log^3 n\log\alpha)$ worst case time, changes to $x$ in $\BigOh(\alpha)$ worst case time, and for each $i\in \{1,\ldots,n\}$ we can report $\sum_{j=1}^nA_{ij}x_j$ in worst case $\BigOh(\alpha)$ time.
\end{corollary}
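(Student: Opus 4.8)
The plan is to combine the matrix-vector product lemma stated just above (implicit in Theorem~A.3 of~\cite{KopelowitzKPS13}) with our worst-case out-orientation result, Theorem~\ref{thm:rank0}, in exactly the same manner as Corollary~\ref{cor:dynamicmatrix} combines that lemma with Theorem~\ref{thm:rank1}; the only change is which of the two out-orientation trade-offs we feed into the black box.

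First I would interpret $A$ as the adjacency matrix of the undirected graph $G$: toggling an off-diagonal entry $A_{ij}=A_{ji}$ between zero and nonzero is precisely an edge insertion or deletion in $G$, while re-weighting a present edge leaves $G$ unchanged and only touches $\BigOh(1)$ of the stored sums $s_i$. Next I would invoke Theorem~\ref{thm:rank0} with $\theta=0$, $\eta=3$, $b\in\Theta(\log n)$: it maintains the (rounded) orientation $\overrightarrow{G}$ of $\overrightarrow{G}^b$ with worst-case update time $t_u\in\BigOh(\log^3 n\log\rho)$ and with maximum out-degree $n_o\in\BigOh(\rho)$, where $\rho$ is the current maximum subgraph density. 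Maintaining the rounding itself costs only $\BigOh(1)$ per combinatorial change in $\overrightarrow{G}^b$ — keep, for each edge of $G$, a counter of how many of its $b$ copies point in each direction, and flip the rounded edge exactly when the majority flips — so this overhead is subsumed by $t_u$. In particular, since the update time is worst-case, the number $r_u$ of re-orientations performed in $\overrightarrow{G}$ per update to $G$ is at most $t_u$ (cf. the footnote of Lemma~\ref{thm:folklore}).

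Plugging $t_u$, $r_u\le t_u$, and $n_o\in\BigOh(\rho)$ into the lemma then yields: entry-pair changes to $A$ in $\BigOh(t_u+r_u)=\BigOh(\log^3 n\log\rho)$ time, entry changes to $x$ in $\BigOh(n_o)=\BigOh(\rho)$ time, and queries to an entry of $Ax$ in $\BigOh(\rho)$ time. Finally I would pass from $\rho$ to the arboricity $\alpha$ via the elementary inequality $\rho\le\alpha$ (and in fact $\alpha\le 2\rho$), which gives $\log\rho\in\BigOh(\log\alpha)$ and $\BigOh(\rho)\subseteq\BigOh(\alpha)$, matching the claimed bounds $\BigOh(\log^3 n\log\alpha)$ for updating $A$, $\BigOh(\alpha)$ for updating $x$, and $\BigOh(\alpha)$ for reporting $\sum_{j=1}^n A_{ij}x_j$.

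There is essentially no hard step; this is a direct corollary. The only points requiring a moment's care are (i) confirming that the orientation handed to the lemma is the \emph{explicitly} maintained orientation $\overrightarrow{G}$ of $G$ (not the auxiliary multigraph orientation $\overrightarrow{G}^b$), whose out-degree Theorem~\ref{thm:rank0} bounds by $\BigOh(\rho)$, together with the observation that this rounding is cheap to keep current; and (ii) the routine translation between the density parameter $\rho$ appearing in Theorem~\ref{thm:rank0} and the arboricity parameter $\alpha$ used in the statement.
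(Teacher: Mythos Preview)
Your proposal is correct and follows exactly the approach the paper takes: the paper does not give an explicit proof for this corollary but simply states that it follows by combining the black-box lemma with Theorem~\ref{thm:rank0} instead of Theorem~\ref{thm:rank1}. Your added details about explicitly maintaining the rounded orientation $\overrightarrow{G}$ and about translating from $\rho$ to $\alpha$ are appropriate clarifications of points the paper leaves implicit.
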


\subsection{Dynamic arboricity decomposition}

\begin{lemma}[\cite{henzinger2020explicit,christiansenICALP}] \label{thm:arbdecomp}
Suppose one can maintain an edge-orientation of a dynamic graph, 
that has $t_u$ update time, 
and that maintains a maximal out-degree of $\le n_o$. Then there is an algorithm for maintaining a decomposition into $2n_o$ forests whose update time is $\BigOh(t_u)$.
\end{lemma}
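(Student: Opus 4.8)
We are given a dynamic out-orientation algorithm with update time $t_u$ maintaining maximum out-degree $\le n_o$, and we must produce an algorithm maintaining a partition of the edges into $2n_o$ forests with update time $\BigOh(t_u)$.

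**Plan.** The plan is to use the classical fact that an out-orientation with maximum out-degree $k$ certifies arboricity at most $2k-1$, but more importantly that such an orientation can be \emph{constructively} split into $2k$ forests by a simple local colouring rule. Concretely, I would maintain, for each vertex $v$, a colour $c(\overrightarrow{vw}) \in \{1,\dots,2n_o\}$ for each of its (at most $n_o$) out-edges, subject to the single invariant: \emph{at every vertex $v$, the colours of the out-edges of $v$ are pairwise distinct, and the colour of every in-edge of $v$ differs from the colours of all out-edges of $v$}. Equivalently: for every oriented edge $\overrightarrow{uv}$, its colour is distinct from the colours of all other edges incident to $u$ \emph{and} from the colours of the out-edges of $v$. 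I would then argue that each colour class is a forest: if some colour class contained a cycle, then following the orientation around the cycle, each vertex on the cycle has an out-edge of that colour, and some vertex on the cycle would have \emph{two} incident edges of that colour (its own out-edge and the in-edge coming from its cycle-predecessor), contradicting the invariant. Hence the colour classes give $2n_o$ forests.

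**Key steps, in order.** First, I would observe that each oriented edge $\overrightarrow{uv}$ has at most $(n_o - 1) + n_o < 2n_o$ "forbidden" colours (the $\le n_o-1$ other edges at $u$ of either orientation contributing their colours, plus the $\le n_o$ out-edges at $v$), so with a palette of size $2n_o$ a free colour always exists and can be found in $\BigOh(n_o)$ time by scanning. Second, I would specify how the colouring is repaired under the three elementary operations the black-box orientation algorithm performs: an edge insertion (pick a free colour for the new out-edge of its tail, $\BigOh(n_o)$ time), an edge deletion (just drop it, $\BigOh(1)$), and an edge re-orientation $\overrightarrow{uv}\to\overrightarrow{vu}$ (the edge's set of forbidden colours changes, so re-pick a free colour for it, $\BigOh(n_o)$ time; note no \emph{other} edge's colour can become invalid, since re-orienting an edge does not change the multiset of colours at either endpoint until we recolour the flipped edge itself, and recolouring it to a colour not forbidden at either endpoint restores the invariant everywhere). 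Third, since the orientation algorithm performs $r_u \le t_u$ re-orientations per update and each costs $\BigOh(n_o)$, and since $n_o \le t_u$ (as maintaining the orientation already takes $\Omega(n_o)$ time implicitly, or at worst we fold $n_o$ into the $\BigOh(t_u)$ bound as in the other folklore lemmas of this section), the total update time is $\BigOh(t_u + r_u \cdot n_o) = \BigOh(t_u)$ under the convention already used in Lemma~\ref{thm:folklore} that $r_u$ and $n_o$ are dominated by $t_u$.

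**Main obstacle.** The delicate point is the re-orientation case: I need to confirm that flipping $\overrightarrow{uv}$ to $\overrightarrow{vu}$ and then recolouring \emph{only} that one edge suffices, i.e. that no cascade of recolourings is triggered. This works because the invariant is stated symmetrically enough: the colour of an edge $\overrightarrow{xy}$ must avoid all colours incident at $x$ and all \emph{out}-colours at $y$; when we flip $\overrightarrow{uv}$, for any \emph{other} edge $e$ incident to $u$ or $v$, the set of colours $e$ must avoid is unchanged \emph{except} that the flipped edge has temporarily no colour, which only \emph{relaxes} $e$'s constraint — so $e$ stays valid. Then we give the flipped edge a colour avoiding the $<2n_o$ colours now forbidden at its new tail and new head, and the invariant holds globally again. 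I would also double-check the cycle argument handles multigraphs and that the bound $2n_o$ (rather than $2n_o - 1$) gives the claimed count exactly, matching the statement of Lemma~\ref{thm:arbdecomp}.
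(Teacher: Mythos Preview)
Your approach has a genuine gap: the invariant you propose cannot be maintained with only $2n_o$ colours, because your count of forbidden colours is wrong. You write that there are ``$\le n_o-1$ other edges at $u$ of either orientation'', but only the \emph{out}-degree of $u$ is bounded by $n_o$; the total degree of $u$ can be arbitrarily large. Concretely, when you colour a new out-edge $\overrightarrow{uv}$, you must avoid not only the colours of $u$'s other out-edges and of $v$'s out-edges, but also the colour of every in-edge $\overrightarrow{wu}$ --- otherwise the clause ``every in-edge of $u$ differs from every out-edge of $u$'' fails for $\overrightarrow{wu}$. Since $u$'s in-edges may collectively use all $n_o+1$ colours not used by $u$'s out-edges, all $2n_o$ colours can be blocked. (With $n_o=2$: give $u$ one out-edge of colour $1$ and three in-edges of colours $2,3,4$; no colour remains for a new out-edge of $u$.) Your re-orientation step has the same problem at the new tail. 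There is also a running-time issue: even granting $\BigOh(n_o)$ per recolouring, $r_u\cdot n_o$ is not $\BigOh(t_u)$ in general (from $r_u\le t_u$ and $n_o\le t_u$ you only get $r_u n_o\le t_u^2$).

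The paper avoids both problems by a two-stage decomposition that never looks at in-edges of the tail. First it assigns the out-edges of each vertex to $n_o$ \emph{pseudo}forests $S_1,\dots,S_{n_o}$ by slot index; a free-slot list per vertex makes this $\BigOh(1)$ per re-orientation, and since every vertex has out-degree $\le 1$ in each $S_i$, any cycle in $S_i$ is directed. Second, each $S_i$ is split into two forests by a rule that only inspects the \emph{head}: when $\overrightarrow{uv}$ enters $S_i$, place it in whichever of the two sub-forests does not contain $v$'s unique out-edge in $S_i$. Any cycle through $\overrightarrow{uv}$ in $S_i$ would be directed and would have to use that out-edge of $v$, so this head-only rule prevents cycles with $\BigOh(1)$ work and no dependence on $u$'s in-neighbourhood. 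That head-only check is the missing idea in your argument.
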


\begin{proof}
    Firstly, as noted in \cite{henzinger2020explicit,christiansenICALP}: By assigning the $i$'th out-edge of a vertex $u$ to subgraph $S_i$, one obtains a decomposition into $n_o$ subgraphs, each of which is a pseudoforest. Every vertex has at most one out-edge in each pseudoforest $S_i$, and thus, the at most one cycle in each tree of the pseudoforest is a directed cycle according to the orientation.

    For maintaining this dynamic pseudoforest decomposition, there is only an $\BigOh(1)$ overhead per edge-reorientation, yielding an $\BigOh(n_o)$-time algorithm for maintaining $n_o$ pseudoforests.

    Then, as noted in \cite{henzinger2020explicit}, we may split each pseudoforest $S_i$ into two forests $f_i$ and $f_i'$ by the following simple algorithm: given a new edge $e$ in $f_i$, notice that there is at most one edge $e'$ in $f_i$ incident to its head. Now, one can safely insert $e$ in any of the two forests $\{f_i , f_i'\}$ that does not contain this at most one edge $e'$. Thus, consequently, neither $f_i$ nor $f_i'$ will contain a cycle.
\end{proof}

Thus, by applying Theorem~\ref{thm:rank0}, we obtain the following:

\begin{corollary}\label{cor:decomp}
    There is a deterministic algorithm for maintaining an arboricity decomposition into $\BigOh(\alpha)$ forests, whose worst-case update time is $\BigOh(\log ^3 n \log \alpha)$,
    where $\alpha$ is the current arboricity of the dynamic graph.
\end{corollary}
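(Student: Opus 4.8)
The plan is to compose Theorem~\ref{thm:rank0} with the black-box reduction of Lemma~\ref{thm:arbdecomp}. First I would invoke Theorem~\ref{thm:rank0} with the parameter choice $\theta=0$, $\eta=3$, $b\in\Theta(\log n)$, which maintains the orientation $\overrightarrow{G}^b$ of the $b$-fold blow-up of $G$, and hence also its rounding $\overrightarrow{G}$ (each edge's direction in $\overrightarrow{G}$ read off in $\BigOh(1)$ time from the majority of its $b$ copies, as in Observation~\ref{obs:rounding}). By that theorem, $\overrightarrow{G}$ has maximum out-degree $n_o\in\BigOh(\rho)$, and the worst-case update time per operation in $G$ is $t_u\in\BigOh(\log^3 n\log\rho)$.

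Next I would translate these density bounds into arboricity bounds. For every subgraph $H$ with $|V(H)|\geq 2$ we have $\frac{|E(H)|}{|V(H)|}\leq\frac{|E(H)|}{|V(H)|-1}\leq\alpha$, so $\rho\leq\alpha$; therefore $n_o\in\BigOh(\alpha)$ and $t_u\in\BigOh(\log^3 n\log\alpha)$. Moreover, since $t_u$ is a worst-case bound, the recourse $r_u$ (the number of edge re-orientations in $\overrightarrow{G}$ triggered by a single update to $G$) satisfies $r_u\leq t_u$ (the paper's recourse bound is in fact a $\log n$ factor smaller, but $r_u\leq t_u$ already suffices). I then feed this orientation, with these values of $t_u$ and $n_o$, into Lemma~\ref{thm:arbdecomp}: it assigns the $i$-th out-edge of each vertex to pseudoforest $S_i$ for $i\leq n_o$, splits each $S_i$ into two forests using the constant-time rule on the unique edge incident to the head of a newly inserted edge, and incurs only $\BigOh(1)$ bookkeeping per re-oriented edge, so the total cost per update is $\BigOh(t_u+r_u)=\BigOh(t_u)$ and the number of forests is $2n_o\in\BigOh(\alpha)$. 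This gives a deterministic algorithm maintaining an $\BigOh(\alpha)$-forest decomposition with worst-case update time $\BigOh(\log^3 n\log\alpha)$, where $\alpha$ is the arboricity at the time of the update, which is exactly the claim.

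The main obstacle here is essentially bookkeeping rather than mathematics: one must check that the orientation furnished by Theorem~\ref{thm:rank0} is maintained \emph{explicitly} on the original graph $G$ (not merely implicitly on the blow-up $G^b$) and with recourse bounded by the worst-case update time, so that Lemma~\ref{thm:arbdecomp} genuinely applies as a black box with $r_u\leq t_u$ absorbed into the $\BigOh(t_u)$ bound. The inequality $\rho\leq\alpha$ and the asymptotic counting are routine.
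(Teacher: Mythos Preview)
Your proposal is correct and follows exactly the paper's approach: the paper's proof is the single sentence ``by applying Theorem~\ref{thm:rank0}'' to Lemma~\ref{thm:arbdecomp}, and you have simply spelled out the details (the rounding from $\overrightarrow{G}^b$ to $\overrightarrow{G}$, the inequality $\rho\leq\alpha$, and the absorption of recourse into $t_u$) that the paper leaves implicit.
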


\subparagraph{Related Work}
While an arboricity decomposition of a graph; a partition of its edges into as few forests as possible; is conceptually easy to understand, computing an arboricity decomposition is surprisingly nontrivial. Even computing it exactly has received much attention~\cite{GabowW,Gabow95,Edmonds1965MinimumPO,PicardQueyranne82}. 
The state-of-the-art for computing an exact arboricity decomposition runs in $\tilde{O}(m^{3/2})$ time \cite{GabowW,Gabow95}.
In terms of not-exact algorithms there is a 2-approximation algorithm~\cite{ArikatiMZ97,Eppstein94} as well as an algorithm for computing an $\alpha+2$ arboricity decomposition in near-linear time~\cite{blumenstock2019constructive}.

For dynamic arboricity decomposition, Bannerjee et al.~\cite{Banerjee} give a dynamic algorithm for maintaining the current arboricity. The algorithm has a near-linear update time. They also provide a lower bound of $\Omega(\log{n})$. 
Henzinger Neumann Wiese~\cite{henzinger2020explicit} provide an $\BigOh(\alpha)$ arboricity decomposition in $\BigOh(\poly(\log n , \alpha))$ time; their result also goes via out-orientation, and they provide a dynamic algorithm for maintaining a $2\alpha'$ arboricity decomposition, given access to any black box dynamic $\alpha'$ out-degree orientation algorithm. 
Most recently, there are algorithms for maintaining $(\alpha+2)$ forests in $\BigOh(\operatorname{poly} (\log(n) , \alpha))$ update-time~\cite{christiansenICALP}, and $(\alpha + 1)$ forests in $\tilde{O}(n^{3/4}\operatorname{poly}(\alpha))$ time~\cite{christiansenMFCS}.






\newpage

\label{place:refs}
\bibliographystyle{plain} 
\bibliography{refs}

\begin{thebibliography}{10}

\bibitem{AbboudD16}
Amir Abboud and S{\o}ren Dahlgaard.
\newblock Popular conjectures as a barrier for dynamic planar graph algorithms.
\newblock In {\em {IEEE} 57th Annual Symposium on Foundations of Computer
  Science, {FOCS} 2016, 9-11 October 2016, Hyatt Regency, New Brunswick, New
  Jersey, {USA}}, pages 477--486, 2016.

\bibitem{AbboudW14}
Amir Abboud and Virginia~Vassilevska Williams.
\newblock Popular conjectures imply strong lower bounds for dynamic problems.
\newblock In {\em 55th {IEEE} Annual Symposium on Foundations of Computer
  Science, {FOCS} 2014, Philadelphia, PA, USA, October 18-21, 2014}, pages
  434--443, 2014.

\bibitem{ArikatiMZ97}
Srinivasa~Rao Arikati, Anil Maheshwari, and Christos~D. Zaroliagis.
\newblock Efficient computation of implicit representations of sparse graphs.
\newblock {\em Discret. Appl. Math.}, 78(1-3):1--16, 1997.

\bibitem{AssadiBD22}
Sepehr Assadi, Aaron Bernstein, and Aditi Dudeja.
\newblock Decremental matching in general graphs.
\newblock In Mikolaj Bojanczyk, Emanuela Merelli, and David~P. Woodruff,
  editors, {\em 49th International Colloquium on Automata, Languages, and
  Programming, {ICALP} 2022, July 4-8, 2022, Paris, France}, volume 229 of {\em
  LIPIcs}, pages 11:1--11:19. Schloss Dagstuhl - Leibniz-Zentrum f{\"{u}}r
  Informatik, 2022.

\bibitem{BahmaniKV12}
Bahman Bahmani, Ravi Kumar, and Sergei Vassilvitskii.
\newblock Densest subgraph in streaming and mapreduce.
\newblock {\em Proc. {VLDB} Endow.}, 5(5):454--465, 2012.

\bibitem{Banerjee}
Niranka Banerjee, Venkatesh Raman, and Saket Saurabh.
\newblock Fully dynamic arboricity maintenance.
\newblock {\em Theor. Comput. Sci.}, 822:1--14, 2020.

\bibitem{BaswanaGS15}
Surender Baswana, Manoj Gupta, and Sandeep Sen.
\newblock Fully dynamic maximal matching in o(log n) update time.
\newblock {\em {SIAM} J. Comput.}, 44(1):88--113, 2015.

\bibitem{berglinetal:LIPIcs:2017:8263}
Edvin Berglin and Gerth~St{\o}lting Brodal.
\newblock {A Simple Greedy Algorithm for Dynamic Graph Orientation}.
\newblock In {\em 28th International Symposium on Algorithms and Computation
  (ISAAC 2017)}, volume~92 of {\em Leibniz International Proceedings in
  Informatics (LIPIcs)}, pages 12:1--12:12, Dagstuhl, Germany, 2017. Schloss
  Dagstuhl--Leibniz-Zentrum fuer Informatik.

\bibitem{BernsteinFH21}
Aaron Bernstein, Sebastian Forster, and Monika Henzinger.
\newblock A deamortization approach for dynamic spanner and dynamic maximal
  matching.
\newblock {\em {ACM} Trans. Algorithms}, 17(4):29:1--29:51, 2021.

\bibitem{BernsteinS15}
Aaron Bernstein and Cliff Stein.
\newblock Fully dynamic matching in bipartite graphs.
\newblock In {\em Automata, Languages, and Programming - 42nd International
  Colloquium, {ICALP} 2015, Kyoto, Japan, July 6-10, 2015, Proceedings, Part
  {I}}, pages 167--179, 2015.

\bibitem{BernsteinS16}
Aaron Bernstein and Cliff Stein.
\newblock Faster fully dynamic matchings with small approximation ratios.
\newblock In {\em Proceedings of the Twenty-Seventh Annual {ACM-SIAM} Symposium
  on Discrete Algorithms, {SODA} 2016, Arlington, VA, USA, January 10-12,
  2016}, pages 692--711, 2016.

\bibitem{BhattacharyaCHN18}
Sayan Bhattacharya, Deeparnab Chakrabarty, Monika Henzinger, and Danupon
  Nanongkai.
\newblock Dynamic algorithms for graph coloring.
\newblock In Artur Czumaj, editor, {\em Proceedings of the Twenty-Ninth Annual
  {ACM-SIAM} Symposium on Discrete Algorithms, {SODA} 2018, New Orleans, LA,
  USA, January 7-10, 2018}, pages 1--20. {SIAM}, 2018.

\bibitem{BhattacharyaGKL22}
Sayan Bhattacharya, Fabrizio Grandoni, Janardhan Kulkarni, Quanquan~C. Liu, and
  Shay Solomon.
\newblock Fully dynamic ({\(\Delta\)} +1)-coloring in \emph{O}(1) update time.
\newblock {\em {ACM} Trans. Algorithms}, 18(2):10:1--10:25, 2022.

\bibitem{BhattacharyaHNT15}
Sayan Bhattacharya, Monika Henzinger, Danupon Nanongkai, and Charalampos~E.
  Tsourakakis.
\newblock Space- and time-efficient algorithm for maintaining dense subgraphs
  on one-pass dynamic streams.
\newblock In Rocco~A. Servedio and Ronitt Rubinfeld, editors, {\em Proceedings
  of the Forty-Seventh Annual {ACM} on Symposium on Theory of Computing, {STOC}
  2015, Portland, OR, USA, June 14-17, 2015}, pages 173--182. {ACM}, 2015.

\bibitem{blumenstock2019constructive}
Markus Blumenstock and Frank Fischer.
\newblock A constructive arboricity approximation scheme.
\newblock In {\em {SOFSEM} 2020: Theory and Practice of Computer Science - 46th
  International Conference on Current Trends in Theory and Practice of
  Informatics, {SOFSEM} 2020, Limassol, Cyprus, January 20-24, 2020,
  Proceedings}, volume 12011 of {\em Lecture Notes in Computer Science}, pages
  51--63. Springer, 2020.

\bibitem{Brodal99dynamicrepresentations}
Gerth~St{\o}lting Brodal and Rolf Fagerberg.
\newblock Dynamic representations of sparse graphs.
\newblock In {\em In Proc. 6th International Workshop on Algorithms and Data
  Structures (WADS)}, pages 342--351. Springer-Verlag, 1999.

\bibitem{christiansenMFCS}
Aleksander B.~G. Christiansen, Jacob Holm, Eva Rotenberg, and Carsten
  Thomassen.
\newblock {On Dynamic $\alpha + 1$ Arboricity Decomposition and
  Out-Orientation}.
\newblock In Stefan Szeider, Robert Ganian, and Alexandra Silva, editors, {\em
  47th International Symposium on Mathematical Foundations of Computer Science
  (MFCS 2022)}, volume 241 of {\em Leibniz International Proceedings in
  Informatics (LIPIcs)}, pages 34:1--34:15, Dagstuhl, Germany, 2022. Schloss
  Dagstuhl -- Leibniz-Zentrum f{\"u}r Informatik.

\bibitem{christiansenICALP}
Aleksander B.~G. Christiansen and Eva Rotenberg.
\newblock {Fully-Dynamic $\alpha + 2$ Arboricity Decompositions and Implicit
  Colouring}.
\newblock In Miko{\l}aj Boja\'{n}czyk, Emanuela Merelli, and David~P. Woodruff,
  editors, {\em 49th International Colloquium on Automata, Languages, and
  Programming (ICALP 2022)}, volume 229 of {\em Leibniz International
  Proceedings in Informatics (LIPIcs)}, pages 42:1--42:20, Dagstuhl, Germany,
  2022. Schloss Dagstuhl -- Leibniz-Zentrum f{\"u}r Informatik.

\bibitem{CliffordGL15}
Rapha{\"{e}}l Clifford, Allan Gr{\o}nlund, and Kasper~Green Larsen.
\newblock New unconditional hardness results for dynamic and online problems.
\newblock In Venkatesan Guruswami, editor, {\em {IEEE} 56th Annual Symposium on
  Foundations of Computer Science, {FOCS} 2015, Berkeley, CA, USA, 17-20
  October, 2015}, pages 1089--1107. {IEEE} Computer Society, 2015.

\bibitem{Dahlgaard16}
S{\o}ren Dahlgaard.
\newblock On the hardness of partially dynamic graph problems and connections
  to diameter.
\newblock In {\em 43rd International Colloquium on Automata, Languages, and
  Programming, {ICALP} 2016, July 11-15, 2016, Rome, Italy}, pages 48:1--48:14,
  2016.

\bibitem{Edmonds1965MinimumPO}
Jack Edmonds.
\newblock Minimum partition of a matroid into independent subsets.
\newblock {\em Journal of Research of the National Bureau of Standards Section
  B Mathematics and Mathematical Physics}, page~67, 1965.

\bibitem{EpastoLS15}
Alessandro Epasto, Silvio Lattanzi, and Mauro Sozio.
\newblock Efficient densest subgraph computation in evolving graphs.
\newblock In Aldo Gangemi, Stefano Leonardi, and Alessandro Panconesi, editors,
  {\em Proceedings of the 24th International Conference on World Wide Web,
  {WWW} 2015, Florence, Italy, May 18-22, 2015}, pages 300--310. {ACM}, 2015.

\bibitem{Eppstein94}
David Eppstein.
\newblock Arboricity and bipartite subgraph listing algorithms.
\newblock {\em Inf. Process. Lett.}, 51(4):207–211, August 1994.

\bibitem{FrigioniMN03}
Daniele Frigioni, Alberto Marchetti{-}Spaccamela, and Umberto Nanni.
\newblock Fully dynamic shortest paths in digraphs with arbitrary arc weights.
\newblock {\em J. Algorithms}, 49(1):86--113, 2003.

\bibitem{GabowW}
Harold Gabow and Herbert Westermann.
\newblock Forests, frames, and games: Algorithms for matroid sums and
  applications.
\newblock In {\em Proceedings of the Twentieth Annual ACM Symposium on Theory
  of Computing}, STOC '88, page 407–421, New York, NY, USA, 1988. Association
  for Computing Machinery.

\bibitem{Gabow95}
Harold~N. Gabow.
\newblock Algorithms for graphic polymatroids and parametric s-sets.
\newblock In {\em Proceedings of the Sixth Annual ACM-SIAM Symposium on
  Discrete Algorithms}, SODA '95, page 88–97, USA, 1995. Society for
  Industrial and Applied Mathematics.

\bibitem{GLSSS19}
Fabrizio Grandoni, Stefano Leonardi, Piotr Sankowski, Chris Schwiegelshohn, and
  Shay Solomon.
\newblock {(1} + {\(\epsilon\)})-approximate incremental matching in constant
  deterministic amortized time.
\newblock In Timothy~M. Chan, editor, {\em Proceedings of the Thirtieth Annual
  {ACM-SIAM} Symposium on Discrete Algorithms, {SODA} 2019, San Diego,
  California, USA, January 6-9, 2019}, pages 1886--1898. {SIAM}, 2019.

\bibitem{GSSU22}
Fabrizio Grandoni, Chris Schwiegelshohn, Shay Solomon, and Amitai Uzrad.
\newblock Maintaining an {EDCS} in general graphs: Simpler, density-sensitive
  and with worst-case time bounds.
\newblock In Karl Bringmann and Timothy Chan, editors, {\em 5th Symposium on
  Simplicity in Algorithms, SOSA@SODA 2022, Virtual Conference, January 10-11,
  2022}, pages 12--23. {SIAM}, 2022.

\bibitem{HeTZ14}
Meng He, Ganggui Tang, and Norbert Zeh.
\newblock Orienting dynamic graphs, with applications to maximal matchings and
  adjacency queries.
\newblock In Hee{-}Kap Ahn and Chan{-}Su Shin, editors, {\em Algorithms and
  Computation - 25th International Symposium, {ISAAC} 2014, Jeonju, Korea,
  December 15-17, 2014, Proceedings}, volume 8889 of {\em Lecture Notes in
  Computer Science}, pages 128--140. Springer, 2014.

\bibitem{HenzingerKNS15}
Monika Henzinger, Sebastian Krinninger, Danupon Nanongkai, and Thatchaphol
  Saranurak.
\newblock Unifying and strengthening hardness for dynamic problems via the
  online matrix-vector multiplication conjecture.
\newblock In Rocco~A. Servedio and Ronitt Rubinfeld, editors, {\em Proceedings
  of the Forty-Seventh Annual {ACM} on Symposium on Theory of Computing, {STOC}
  2015, Portland, OR, USA, June 14-17, 2015}, pages 21--30. {ACM}, 2015.

\bibitem{henzinger2020explicit}
Monika Henzinger, Stefan Neumann, and Andreas Wiese.
\newblock Explicit and implicit dynamic coloring of graphs with bounded
  arboricity.
\newblock {\em CoRR}, abs/2002.10142, 2020.

\bibitem{HenzingerP22}
Monika Henzinger and Pan Peng.
\newblock Constant-time dynamic ({\(\Delta\)} +1)-coloring.
\newblock {\em {ACM} Trans. Algorithms}, 18(2):16:1--16:21, 2022.

\bibitem{IvkovicL93}
Zoran Ivkovic and Errol~L. Lloyd.
\newblock Fully dynamic maintenance of vertex cover.
\newblock In {\em Graph-Theoretic Concepts in Computer Science, 19th
  International Workshop, {WG} '93, Utrecht, The Netherlands, June 16-18, 1993,
  Proceedings}, pages 99--111, 1993.

\bibitem{KopelowitzKPS13}
Tsvi Kopelowitz, Robert Krauthgamer, Ely Porat, and Shay Solomon.
\newblock Orienting fully dynamic graphs with worst-case time bounds.
\newblock In Javier Esparza, Pierre Fraigniaud, Thore Husfeldt, and Elias
  Koutsoupias, editors, {\em Automata, Languages, and Programming - 41st
  International Colloquium, {ICALP} 2014, Copenhagen, Denmark, July 8-11, 2014,
  Proceedings, Part {II}}, volume 8573 of {\em Lecture Notes in Computer
  Science}, pages 532--543. Springer, 2014.

\bibitem{kopelowitz2014orienting-old}
Tsvi Kopelowitz, Robert Krauthgamer, Ely Porat, and Shay Solomon.
\newblock Orienting fully dynamic graphs with worst-case time bounds.
\newblock In {\em International Colloquium on Automata, Languages, and
  Programming}, pages 532--543. Springer, 2014.

\bibitem{KopelowitzPP16}
Tsvi Kopelowitz, Seth Pettie, and Ely Porat.
\newblock Higher lower bounds from the 3sum conjecture.
\newblock In {\em Proceedings of the Twenty-Seventh Annual {ACM-SIAM} Symposium
  on Discrete Algorithms, {SODA} 2016, Arlington, VA, USA, January 10-12,
  2016}, pages 1272--1287, 2016.

\bibitem{10.1016/j.ipl.2006.12.006}
\L{}ukasz Kowalik.
\newblock Adjacency queries in dynamic sparse graphs.
\newblock {\em Inf. Process. Lett.}, 102(5):191–195, May 2007.

\bibitem{KowalikK06}
Lukasz Kowalik and Maciej Kurowski.
\newblock Oracles for bounded-length shortest paths in planar graphs.
\newblock {\em {ACM} Trans. Algorithms}, 2(3):335--363, 2006.

\bibitem{LarsenW17}
Kasper~Green Larsen and R.~Ryan Williams.
\newblock Faster online matrix-vector multiplication.
\newblock In Philip~N. Klein, editor, {\em Proceedings of the Twenty-Eighth
  Annual {ACM-SIAM} Symposium on Discrete Algorithms, {SODA} 2017, Barcelona,
  Spain, Hotel Porta Fira, January 16-19}, pages 2182--2189. {SIAM}, 2017.

\bibitem{NeimanS16}
Ofer Neiman and Shay Solomon.
\newblock Simple deterministic algorithms for fully dynamic maximal matching.
\newblock {\em {ACM} Trans. Algorithms}, 12(1):7:1--7:15, 2016.

\bibitem{OnakSSW20}
Krzysztof Onak, Baruch Schieber, Shay Solomon, and Nicole Wein.
\newblock Fully dynamic {MIS} in uniformly sparse graphs.
\newblock {\em {ACM} Trans. Algorithms}, 16(2):26:1--26:19, 2020.

\bibitem{Patrascu10}
Mihai Patrascu.
\newblock Towards polynomial lower bounds for dynamic problems.
\newblock In Leonard~J. Schulman, editor, {\em Proceedings of the 42nd {ACM}
  Symposium on Theory of Computing, {STOC} 2010, Cambridge, Massachusetts, USA,
  5-8 June 2010}, pages 603--610. {ACM}, 2010.

\bibitem{PelegS16}
David Peleg and Shay Solomon.
\newblock Dynamic {(1} + $\varepsilon$)-approximate matchings: {A}
  density-sensitive approach.
\newblock In Robert Krauthgamer, editor, {\em Proceedings of the Twenty-Seventh
  Annual {ACM-SIAM} Symposium on Discrete Algorithms, {SODA} 2016, Arlington,
  VA, USA, January 10-12, 2016}, pages 712--729. {SIAM}, 2016.

\bibitem{PicardQueyranne82}
Jean-Claude Picard and Maurice Queyranne.
\newblock A network flow solution to some nonlinear 0-1 programming problems,
  with applications to graph theory.
\newblock {\em Networks}, 12(2):141--159, 1982.

\bibitem{sawlani2020near}
Saurabh Sawlani and Junxing Wang.
\newblock Near-optimal fully dynamic densest subgraph.
\newblock In {\em Proceedings of the 52nd Annual ACM SIGACT Symposium on Theory
  of Computing}, pages 181--193, 2020.

\bibitem{Solomon16}
Shay Solomon.
\newblock Fully dynamic maximal matching in constant update time.
\newblock In {\em {IEEE} 57th Annual Symposium on Foundations of Computer
  Science, {FOCS} 2016, 9-11 October 2016, Hyatt Regency, New Brunswick, New
  Jersey, {USA}}, pages 325--334, 2016.

\bibitem{Solomon18}
Shay Solomon.
\newblock Dynamic approximate matchings with an optimal recourse bound.
\newblock {\em CoRR}, abs/1803.05825, 2018.

\bibitem{SolomonW20}
Shay Solomon and Nicole Wein.
\newblock Improved dynamic graph coloring.
\newblock {\em {ACM} Trans. Algorithms}, 16(3):41:1--41:24, 2020.

\end{thebibliography}

\end{document}